    \pretocmd{\NAT@citexnum}{\@ifnum{\NAT@ctype>\z@}{\let\NAT@hyper@\relax}{}}{}{}
\newcommand{\CCite}[1]{\citeauthor{#1}~\cite{#1}}
\Crefname{ALC@unique}{Line}{Lines} % For referring to line nums in alg
\crefname{constr}{constraint}{constraints}
\crefname{ineq}{inequality}{inequalities}
\crefname{claim}{claim}{claims}           % Support for claim env
\crefname{defn}{definition}{definitions}  % Support for defn env
\crefname{obs}{Observation}{Observations}
\crefname{lem}{Lemma}{Lemmas}
\crefname{cor}{Corollary}{Corollaries}
\newlist{clenum}{enumerate}{1} % also creates a counter called 'clenumi'
\setlist[clenum]{label=(\alph*),ref=\textup{\theclaim~(\alph*)}}
\newlist{thenum}{enumerate}{1} % also creates a counter called 'thenumi'
\setlist[thenum]{label=(\alph*),ref=\textup{\thetheorem~(\alph*)}}
\newtheorem{lem}{Lemma}
\newtheorem{claim}{Claim}[section]
\newtheorem{defn}{Definition}[section]
\newtheorem{obs}[defn]{Observation}
\newtheorem{cor}{Corollary}
\newcommand{\lr}[1]{\bigg(#1\bigg)}
\newcommand{\cb}{\mathcal{B}}
\newcommand{\cf}{\mathcal{F}}
\newcommand{\ch}{\mathcal{H}}
\newcommand{\chp}{\mathcal{H}_1}
\newcommand{\cl}{\mathcal{L}}
\newcommand{\real}{\mathbb{R}}
\newcommand{\ball}{\mathcal{B}}
\newcommand{\cost}{\mathrm{cost}}
\newcommand{\sol}{\sigma}
\newcommand{\solbar}{\overline{\sigma}}
\newcommand{\heavybb}{\mathcal{H}_1}
\newcommand{\FLightBb}{\mathcal{F}}
\newcommand{\OLightBb}{\mathcal{O}}
\newcommand{\clusterbb}{\mathcal{C}}
\newcommand{\clst}{\text{ cluster}}
\newcommand{\yac}{\widetilde{y}}
\newcommand{\AvCap}{\text{AvCap}}
\begin{document}

% \titlespacing*{\subsection}
% {0pt}{4.5pt}{4.5pt}
% \titlespacing*{\subsubsection}
% {0pt}{4pt}{4pt}
% \fontsize{10.9}{13.1}\selectfont

 % Author macros::begin %%%%%%%%%%%%%%%%%%%%%%%%%%%%%%%%%%%%%%%%%%%%%%%%
\title{Capacitated Covering Problems in Geometric Spaces\footnote{This material is based upon work supported by
the National Science Foundation under Grant CCF-1615845}}

\author{Sayan Bandyapadhyay% \thanks{sayan-bandyapadhyay@uiowa.edu}
} 
\author{Santanu Bhowmick% \thanks{santanu-bhowmick@uiowa.edu}
} 
\author{Tanmay Inamdar% \thanks{tanmay-inamdar@uiowa.edu}
} 
\author{Kasturi Varadarajan% \thanks{kasturi-varadarajan@uiowa.edu}
}
\affil{
  Department of Computer Science, 
  University of Iowa, Iowa City, USA\thanks{\{sayan-bandyapadhyay, santanu-bhowmick, tanmay-inamdar, kasturi-varadarajan\}@uiowa.edu}\\
  }

\date{}
% \author{Santanu Bhowmick\thanks{santanu-bhowmick@uiowa.edu}}
% \author{Tanmay Inamdar\thanks{tanmay-inamdar@uiowa.edu}}
% \author{Kasturi Varadarajan\thanks{kasturi-varadarajan@uiowa.edu}}
% \affil{
%   Department of Computer Science\\\def\cover{\text{Cover}}
% \def\xcluster{\text{XC}}
% \def\ycluster{\text{YC}}
% \def\disk{\delta}
% \def\argmax{\text{argmax}}
% \def\soln{S}
% 
%   University of Iowa, Iowa City, USA\\
%   }
  
\authorrunning{S.\,Bandyapadhyay, S.\,Bhowmick, T.\,Inamdar and K.\,Varadarajan} %mandatory. First: Use abbreviated first/middle names. Second (only in severe cases): Use first author plus 'et. al.'
%
%\Copyright{Santanu Bhowmick, Tanmay Inamdar and Kasturi Varadarajan}%mandatory, please use full first names. LIPIcs license is "CC-BY";  http://creativecommons.org/licenses/by/3.0/
%
%\subjclass{I.3.5 Computational Geometry and Object Modeling}% mandatory: Please choose ACM 1998 classifications from http://www.acm.org/about/class/ccs98-html . E.g., cite as "F.1.1 Models of Computation". 
%\keywords{Approximation Algorithms, Set Cover}% mandatory: Please provide 1-5 keywords
% Author macros::end %%%%%%%%%%%%%%%%%%%%%%%%%%%%%%%%%%%%%%%%%%%%%%%%%
  
\maketitle
\begin{abstract}
In this article, we consider the following capacitated covering problem. We are given a set $P$ of $n$ points and a set $\cb$ of balls from some metric space, and a positive integer $U$ that represents the {\em capacity} of each of the balls in $\cb$. We would like to compute a subset $\cb' \subseteq \cb$ of balls and assign each point in $P$ to some ball in $\cb'$ that contains it, such that the number of points assigned to any ball is at most $U$. The objective function that we would like to minimize is the cardinality of $\cb'$. 

We consider this problem in arbitrary metric spaces as well as Euclidean spaces of constant dimension. In the metric setting, even the uncapacitated version of the problem is hard to approximate to within a logarithmic factor. In the Euclidean setting, the best known approximation guarantee in dimensions $3$ and higher is logarithmic in the number of points. Thus we focus on obtaining ``bi-criteria'' approximations. In particular, we are allowed to expand the balls in our solution by some factor, but optimal solutions do not have that flexibility. Our main result is that allowing constant factor expansion of the input balls suffices to obtain constant approximations for these problems. In fact, in the Euclidean setting, only $(1+\epsilon)$ factor expansion is sufficient for any $\epsilon > 0$, with the approximation factor being a polynomial in $1/\epsilon$. We obtain these results using a unified scheme for rounding the natural LP relaxation; this scheme may be useful for other capacitated covering problems. 
We also complement these bi-criteria approximations by obtaining hardness of approximation results that shed light on our understanding of these problems.

\end{abstract}

\section{Introduction}
\label{sec:intro}
In this paper, we consider the following capacitated covering problem. We are given a set $P$ of $n$ points and a set $\cb$ of balls from some metric space, and a positive integer $U$ that represents the {\em capacity} of each of the balls in $\cb$. We would like to compute a subset $\cb' \subseteq \cb$ of balls and assign each point in $P$ to some ball in $\cb'$ that contains it, such that the number of points assigned to any ball is at most $U$. The objective function that we would like to minimize is the cardinality of $\cb'$. We call this the \textit{Metric Capacitated Covering} (MCC) problem.

An important special case of this problem arises when $U = \infty$, and we refer to this as \textit{Metric Uncapacitated Covering} (MUC). The MUC requires us to cover the points in $P$ using a minimum number of balls from $\cb$, and we can therefore solve it using the efficient greedy algorithm for set cover and obtain an approximation guarantee of $O(\log n)$. The approximation factor of $O(\log n)$ for Set Cover cannot be improved under widely held complexity theoretic
assumptions~\cite{Feige98}. The same is true for the MUC, as 
demonstrated by the following reduction from Set Cover. We take a ball of radius $1$ corresponding to each set, and a point corresponding to each element. If an element is in a set, then the distance between the center of the corresponding ball and the point is $1$. We consider the metric space induced by the centers and the points. It is easy to see that any solution for this instance of MUC directly gives a solution for the
input instance of the general Set Cover, implying that for MUC, it is not
possible to get any approximation guarantee better than the $O(\log n)$ bound
for Set Cover.

The MUC in fixed dimensional Euclidean spaces has been extensively studied. One interesting variant is when the allowed set $\cb$ of balls consists of all {\em unit} balls. \CCite{HochbaumM85} gave a
polynomial time approximation scheme (PTAS) for this using a grid shifting 
strategy. When $\cb$ is an arbitrary finite set of balls, the problem seems to be much harder. An $O(1)$ approximation algorithm in the 2-dimensional Euclidean plane was given by \CCite{BronnimannG1995}. More recently, a PTAS was obatined by \CCite{MustafaR2010}. 
In dimensions $3$ and higher, the best known approximation guarantee is still
$O(\log n)$. Motivated by this, \CCite{Har-PeledL12} gave a PTAS for a bi-criteria version where the algorithm is allowed to expand the input balls by a $(1 + \epsilon)$ factor. Covering with geometric objects other than balls has also been extensively studied; see \cite{Lev-TovP05,ClarksonV07,AronovES2010,VaradarajanWGSC2010, ChanGKS12,GovindarajanRRB2016} for a sample. 

The MCC is a special case of the  \textit{Capacitated Set Cover} (CSC) problem. In the latter problem, we are given a set system $(X, \cf)$ with $n = |X|$ elements and $m = |\cf|$ subsets of $X$. For each set $\cf_i \in \cf$, we are also given an integer $U_i$, which is referred to as its \textit{capacity}. We are required to find a
minimum size subset $\cf' \subseteq \cf$ and assign each element in $X$ to a 
set in $\cf'$ containing it, such that for each set $\cf_i$, the number of
points assigned to $\cf_i$ is at most $U_i$. 
The MCC is obtained as a special case of CSC by setting $X = P$, $\cf = \cb$, and $U_i = U \forall i$.
Set Cover is a special 
case of CSC where the capacity of each set is $\infty$. 
% In the following, we define the Capacitated Set Cover (CSC) problem formally. We are given a set system $(X, \cf)$ with $n = |X|$ elements and $m = |\cf|$ subsets of $X$. For each set $\cf_i \in \cf$, we are also given an integer $U_i$, which is referred to as its \textit{capacity}. We call $(\cf', g)$ a \textit{valid cover} -- where $\cf' \subseteq
% \cf$ and $g: X \to \cf'$ is an assignment function -- if 1) for all elements $x_j \in X$, $g(x_j) = \cf_i$, implies that $x_j \in \cf_i$, and 2) the function $g$ respects the
% capacity of each set $\cf_i \in \cf'$, that is, for each $\cf_i \in \cf'$,
% $\left|\left\{x_j \in X \mid g(x_j) = \cf_i \right\}\right| \le U_i$. In the CSC
% problem, the goal is to find a valid cover $(\cf', g)$ such that $|\cf'|$ is
% minimized. 
% one of the most fundamental combinatorial optimization
% problems. Given a set system $(X, \cf)$ with $n = |X|$ elements
% and $m = |\cf|$ subsets of $X$, it is required to find a subset of sets in $\cf$
% of minimum cardinality whose union is $X$. 
% The greedy algorithm for Set Cover
% achieves an approximation ratio of $O(\log n)$, and this approximation factor
% cannot be improved under widely held complexity theoretic
% assumptions~\cite{Feige98}. However, it is possible to get improved
% approximation guarantees for set systems having an underlying geometric
% structure, like small VC dimension or union
% complexity~\cite{HochbaumM85,Lev-TovP05,Har-PeledL12,BronnimannG95,ClarksonV07}.

Applications of Set Cover include placement of wireless sensors or
antennas to serve clients, VLSI design, and image processing~\cite{BermanKL12,HochbaumM85}. 
It is natural to consider capacity constraints that appear in many
applications, for instance, an upper bound on the number of clients that can be
served by an antenna. Such constraints lead to the natural formulation of
CSC.
% % , in which each set selected in the
% % solution can be used to serve only a specified number of elements in the ground
% set. 
% In general, the set capacities are not necessarily the
% same. 
For the CSC problem,~\CCite{Wolsey82} used a
greedy algorithm to give an $O(\log
n)$ approximation. For the special case of vertex cover (where each element in $X$ belongs to exactly two
sets in $\cf$),~\CCite{ChuzhoyN06} presented an algorithm with approximation ratio $3$, which was
subsequently improved to $2$ by~\CCite{GandhiHKKS06}. The generalization where
each element belongs to at most a bounded number $f$ of sets has been studied in
a sequence of works, culminating in~\cite{Kao17,Wong17}. \CCite{BermanKL12} have considered the ``soft'' capacitated version of the CSC problem that allows making multiple copies of input sets. Another closely related problem to the CSC problem is the so-called \emph{Replica Placement} problem. For the graphs of treewidth bounded by $t$, an $O(t)$ approximation algorithm for this problem is presented in \cite{AggarwalCGSST17}. Finally, PTASes for the Capacitated Dominating Set, and Capacitated Vertex Cover problems on the planar graphs is presented in \cite{Becker16}, under the assumption that the demands and capacities of the vertices are upper bounded by a constant.
% On a different note, ~\CCite{BermanKL12} showed that an $\alpha$-approximation algorithm for an
% uncapacitated set covering problem can be used to derive an $\alpha + 1.357$
% approximation algorithm for the ``soft'' capacitated version of the same problem in
% which the latter problem allows making multiple copies of input sets. 

Compared to the MUC, relatively fewer special cases of the MCC problem have been studied in the literature.
We refer to the version of MCC where the underlying metric is Euclidean as the \textit{Euclidean Capacitated Covering} (ECC) problem. The dimension of the Euclidean space is assumed to be a constant. One such version arises when 
$\cb$ comprises of all possible unit
balls. This problem appeared in the Sloan Digital Sky Survey project \cite{LuptonMY98}. Building on the shifting strategy of \CCite{HochbaumM85}, \CCite{GhasemiR14} obtain a PTAS for this problem. When the set $\cb$ of balls is arbitrary, the best known approximation guarantee is $O(\log n)$, even in the plane.

Given this state of affairs for the MCC and the ECC, we focus our efforts on finding a bi-criteria approximation.
Specifically, we allow the balls in our solution to expand by at most a constant factor
$\lambda$, without changing their capacity constraints (but optimal solution does not expand). We formalize this as follows. An $(\alpha, \beta)$-approximation for a version of MCC, is a solution in which the balls may be expanded by a factor of $\beta$ (i.e. for any ball $B_i$, and any point $p_j \in P$ that is assigned to $B_i$, $d(c_i, p_j) \le \beta\cdot r_i$% a ball may cover a point whose distance from the center is at most $\beta$ times its original radius
), and its cost is at most $\alpha$ times that of an optimal solution (which does not expand the balls). From the reduction of Set Cover to MUC described above,
we can see that it is NP-hard to get an $(f(n),\lambda)$-approximation for any $\lambda < 3$ and $f(n)=o(\log n)$.  
% As we are interested in a sub-logarithmic approximation bound, we will allow our solution to expand each ball by a fixed constant factor (but optimal solution does not expand).  
We note that it is a common practice in the wireless network setting to expand the radii of antennas at the planning stage to improve the quality of service. For example, \CCite{BoseCDFKM14} propose a scheme for replacing omni-directional antennas by directional antennas that expands the antennas by a constant factor. 

% As a side effect of the technique we use, the bicriteria approximations can be extended for a more general capacity model. In this model, the capacities of the balls are not necessarily uniform (or same).  
% In particular, let each ball $B_i=B(c_i,r_i)$ has the capacity $U_i$. Then for any two balls $B_i, B_j\in \cb$, the following holds: $r_i > r_j \implies U_i \ge U_j$. We refer to this capacity model as the \emph{monotonic} capacity model. In the same spirit, we define the general version of the CSC, the MCC and the ECC problems with the monotonic capacity model, and refer to them as the \textit{Monotonic Capacitated Set Cover problem} (MCSC), the \textit{Metric Monotonic Capacitated Covering} (MMCC) problem and the \textit{Euclidean Monotonic Capacitated Covering} (EMCC) problem, respectively. We note that the monotonicity assumption on the capacities is reasonable in many applications such as wireless networks -- it might be economical to invest in capacity of an antenna to serve more clients, if it covers more area.

\noindent {\bf Related Work.} Capacitated version of facility location and clustering type problems have been well-studied over the years. One such clustering problem is the capacitated $k$-center. In the version of this problem with
uniform capacities,  we are given a set $P$ of points in a metric space, along with an integer capacity $U$. A feasible solution to this problem is a choice of $k$ centers to open, together with an assignment of each point in $P$ to an open center, such that no center is assigned more than $U$ points. The objective is to minimize the maximum distance of a point to its assigned center.
$O(1)$ approximations are known for this problem \cite{Bar-IlanKP93,KhullerS00}; the version with non-uniform capacities is addressed in \cite{AnBCGMS15,CyganHK12}. 
Notice that the decision version of the uniform capacitated
$k$-center with the radius parameter $r$ is the same as the decision version of a special case of MCC, where the set
$\cb$ consists of balls of radius $r$ centered at each
point of the capacitated
$k$-center instance. The capacity of each ball is the same as the uniform capacity $U$ of the points. We want to find whether there is a subset of $\cb$ consisting of $k$ balls that can serve all the points without violating the capacity constraint. For capacitated versions of other related optimization problems such
as metric facility location, $k$-median etc, see~\cite{AnSS-FOCS14,LeviSS12,Li-SODA15} for recent advances.
\subsection{Our Results and Contributions.} 
% the 
% Under this relaxed setting,
% we give a constant factor approximation for the UCC problem using a
% simple LP rounding scheme, as stated in~\Cref{thm:main}.
% In this paper, we design an LP rounding based algorithmic framework for obtaining constant approximation (with constant expansion of the objects in some cases) for the capacitated covering problems we consider. 
% % We formalize this as follows. An $(\alpha, \beta)$-approximation for a version of MMCC problem, is a solution in which the original balls may be expanded by at most by $\beta$ factor, (i.e. a ball may cover a point whose distance is at most $\beta$ times its original radius), and the cost of this solution is at most $\alpha$ times that of an optimal solution (which does not expand the balls). 
% We believe that the LP rounding framework may be of independent interest for obtaining similar results for related capacitated covering problems. We obtain the following results using the framework.

In this article, we make significant progress on both the MCC and ECC problems.
\begin{compactitem}
\item \vspace{2mm} We present an $(O(1), 6.47)$-approximation for the MCC 
problem. Thus, if we are allowed to expand the input balls by a constant factor, we can obtain a solution that uses at most $O(1)$ times the number of balls used by the optimal solution. As noted above, if we are not allowed to expand by
a  factor of at least $3$, we are faced with a hardness of approximation of $\Omega(\log n)$. 
\item \vspace{2mm} We present  an $(O(\epsilon^{-4d}\log (1/\epsilon)),1+\epsilon)$-approximation for the ECC problem in $\real^d$. Thus, assuming we are allowed to expand the input balls by an arbitrarily small constant factor, we can obtain a solution with at most a corresponding constant times the number of balls used by the optimal solution. Without expansion, the best known approximation guarantee for $d \geq 3$ is $O(\log n)$, even without capacity constraints.
\end{compactitem}  

Both results are obtained via a unified scheme for rounding the natural LP relaxation for the problem. This scheme, which is instantiated in different ways to obtain the two results, may be of independent interest for obtaining similar results for related capacitated covering problems. Though the LP rounding technique is a standard tool that has been used in the literature of the capacitated problems, our actual rounding scheme is different from the existing ones. In fact, the standard rounding scheme for facility location, for example the one in \cite{LeviSS12}, is not useful for our problems, as there a point can be assigned to any facility. But in our case, each point must be assigned to a ball that contains it (modulo constant factor expansion). This hard constraint makes the covering problems more complicated to deal with.

When the input balls have the same radius, it is easier to obtain the above guarantees for the MCC and and the ECC using known results or techniques. For the MCC, this (in fact, even  a $(1, O(1))$-approximation) follows from the results for capacitated k-center \cite{Bar-IlanKP93,KhullerS00,CyganHK12,AnBCGMS15}. This is because of the connection between Capacitated $k$-center and MCC as pointed out above. 
% The technical sections further elaborate on the situation where the input balls have the same radius.
The novelty in our work lies in handling the challenging scenario where the input balls have widely different radii. For geometric optimization problems, inputs with objects at multiple scales are often more difficult to handle than inputs with objects at the same scale.

As a byproduct of the rounding schemes we develop, the bicriteria approximations can be extended to a more general capacity model. In this model, the capacities of the balls are not necessarily the same. In particular, suppose ball $B_i$ has capacity $U_i$ and radius $r_i$. Then for any two balls $B_i, B_j\in \cb$, our model assumes the following holds: $r_i > r_j \implies U_i \ge U_j$. We refer to this capacity model as the \emph{monotonic} capacity model. We refer to the generalizations of the MCC and the ECC problems with the monotonic capacity model as the \textit{Metric Monotonic Capacitated Covering} (MMCC) problem and the \textit{Euclidean Monotonic Capacitated Covering} (EMCC) problem, respectively. We note that the monotonicity assumption on the capacities is reasonable in many applications such as wireless networks -- it might be economical to invest in capacity of an antenna to serve more clients, if it covers more area.

\noindent {\bf Hardness.} We complement our algorithmic results with some hardness of approximation results that give a better understanding of the problems we consider. Firstly, we show that for any constant $c > 1$, there exists a constant $\epsilon_c > 0$ such that it is NP-hard to obtain a $(1+\epsilon_c, c)$-approximation for the MCC problem, even when the capacity of all balls is $3$. This shows that is not possible to obtain a $(1, c)$ approximation even for an arbitrarily large constant $c$. In the hardness construction, all the balls in the hard instance do not have the same radii. This should be contrasted with the case where the radii of all balls are equal -- in this case one can use the results from capacitated $k$-center (such as \cite{AnBCGMS15,CyganHK12}), to obtain a $(1, O(1))$-approximation. 

It is natural to wonder if our algorithmic results can be extended to weighted versions of the problems. We derive hardness results that indicate that this is not possible. In particular, we show that for any constant $c\ge 1$, there exists a constant $c' > 0$, such that it is NP-hard to obtain a $(c'\log n, c)$-approximation for the weighted version of MMCC with a very simple weight function (constant power of original radius). 
% Due to space constraints, 

We describe the natural LP relaxation for the MMCC problem in \Cref{sec:lp}. 
We describe a unified rounding scheme in \Cref{sec:framework}, and apply it
in two different ways to obtain the algorithmic guarantees for MMCC and EMCC. 
We present the two hardness results in \Cref{ap:apx}.

\section{LP relaxation for MMCC}\label{sec:lp}
Recall that the input for the MMCC consists of a set $P$ of points and a set $\cb$ of balls in some metric space, along with an integer capacity $U_i > 0$ for ball $B_i \in \cb$. We assume that for any two input balls $B_i, B_j\in \cb$, it holds that $r_i > r_j \implies U_i \ge U_j$. The goal is to compute a minimum cardinality subset $\cb' \subseteq \cb$ for which each point in $P$ can be 
assigned to a ball $\cb'$ containing it in such a way that no more than $U_i$ points are assigned to ball $B_i$.  
Let $d(p,q)$ denote the distance between two points $p$ and $q$ in the metric space. Let $B(c,r)$ denote the ball of radius $r$ centered at point $c$. We let $c_i$ and $r_i$ denote the center and radius of ball $B_i \in \cb$; thus, $B_i = B(c_i,r_i)$.

First we consider an integer programming formulation of MMCC. For each set 
$B_i \in \cb$, let $y_i = 1$ if the ball $B_i$ is selected
in the solution, and $0$ otherwise. Similarly, for each point $p_j \in X$ and each ball $B_i \in \cb$, let the variable $x_{ij} = 1$ if $p_j$ is assigned to $B_i$, and $x_{ij} = 0$ otherwise. We relax these integrality constraints, and state the corresponding linear program as follows:

\begin{align}
\label[eqn]{LP}
&\text{minimize}&\sum_{B_i \in \cb} y_i \nonumber \tag{MMCC-LP}
\\&\text{s.t.}& x_{ij} &\le y_i & & \forall p_j \in P,\; \forall B_i \in \cb
\label[constr]{constr:open}
\\& &\sum_{p_j \in P} x_{ij} & \le y_i \cdot U_i & & \forall B_i \in \cb \label[constr]{constr:cap}
\\& &\sum_{B_i \in \cb} x_{ij} &= 1 & & \forall p_j \in P
\label[constr]{constr:flow}
\\& &x_{ij} &= 0 &&\text{$\forall p_j \in P,\; \forall B_i \in \cb$ such that $p_j \not\in \cb_i$} \label[constr]{constr:coverage}
\\& &x_{ij} &\ge 0 &&\forall p_j \in P,\; \forall B_i \in \cb \label[constr]{constr:fractional_x}
\\& &0\le y_i &\le 1 &&\forall B_i \in \cb \label[constr]{constr:fractional_y}
\end{align}

Subsequently, we will
refer to an assignment $(x, y)$ that is feasible or infeasible with respect to \Crefrange{constr:open}{constr:fractional_y} as
just a \emph{solution}. The \textit{cost} of the LP solution $\sol = (x, y)$ (feasible or otherwise), denoted by $\cost(\sol)$, is defined as
$\sum_{B_i \in \cb} y_i$.

\section{The Algorithmic Framework}\label{sec:framework}
In this section, we describe our framework for extracting an integral solution 
from a fractional solution to the above LP. 
% As we show, this algorithmic framework can be used for obtaining constant approximation (with at most a constant factor expansion of the objects) for the capacitated covering problems we consider. 
The framework consists of two major steps -- Preprocessing and the Main Rounding. The Main Rounding step is in turn divided into two smaller steps -- Cluster Formation and Selection of Objects. For simplicity of exposition, we first describe the framework with respect to the MMCC problem as an algorithm and analyze the approximation factor achieved by this algorithm for MMCC. Later, we show how one or more steps of this algorithm can be modified to obtain the desired results for the 
EMCC. 

\subsection{The Algorithm for the MMCC Problem}
Before we describe the algorithm we introduce some definitions and notation which will heavily be used throughout this section. 
% Recall, that in MMCC the set system is $(P,\cb)$, where $P$ is a set of points and $\cb$ is a collection of balls. Consider any solution $(x,y)$ of \ref{LP} w.r.t MMCC, where $X=P$ and $\cf=\cb$.
For point $p_j \in P$ and ball $B_i \in \ball$, we refer to
$x_{ij}$ as the \emph{flow} from $B_i$ to $p_j$; if $x_{ij} >
0$, then we say that the ball $B_i$ \emph{serves} the point $p_j$. Each ball 
$B_i \in \ball$ can be imagined as a source of at most $y_i \cdot U_i$ units of flow, which it
distributes to some points in $P$.

%\par Now, a feasible solution $\sol$ to LP \Cref{LP} may be a fractional
%solution. In particular, a  point $p_j \in P$ may be fractionally covered by
%more than one center $c_i \in C$.

\par We now define an important operation,
called \emph{rerouting of flow}. ``Rerouting of flow for a set $P' \subseteq
P$ of points from a set of balls $\cb'$ to a ball $B_k\notin \cb'$'' means obtaining a new
solution $(\hat{x}, \hat{y})$ from the current solution $(x, y)$ in the
following way: (a) For all points $p_j \in P'$, $\hat{x}_{kj} = x_{kj} +
\sum_{B_i \in \cb'} x_{ij}$; (b) for all points $p_j \in P'$ and balls $B_i \in
\cb'$, $\hat{x}_{ij} = 0$; (c) the other $\hat{x}_{ij}$ variables are the
same as the corresponding $x_{ij}$ variables.  The relevant $\hat{y_i}$ variables may also
be modified depending on the context where this operation is used.

% \par With respect to a solution $\sol$, we say that a ball $B_i \in \ball$ is 
% \emph{relevant} if $y_i > 0$ and \emph{irrelevant} if $y_i = 0$. In our exposition, we sometimes drop the qualifier `relevant' when referring to the set of relevant balls. 
Let $0 < \alpha \le \frac{1}{2}$ be a parameter to be fixed later. A ball $B_i \in \cb$
is \emph{heavy} if the corresponding $y_i = 1$, and \emph{light}, if $0 < y_i \le \alpha$.
Corresponding to a feasible LP solution $(x, y)$, let $\ch = \{B_i \in \cb \mid
y_i = 1\}$ denote the set of heavy balls, and $\cl = \{B_i \in \cb \mid 0 < y_i \le
\alpha \}$ denote the set of light balls. We emphasize that the set $\cl$ of light and $\ch$ of heavy balls are defined w.r.t. an LP solution; however, the reference to the LP solution may be omitted when it is clear from the context.

Now we move on towards the description of the algorithm. The algorithm, given a feasible fractional solution $\sol=({x},{y})$, rounds $\sol$ to a solution $\hat{\sigma} =(\hat{x},\hat{y})$ such that $\hat{y}$ is integral, and
the cost of $\hat{\sigma}$ is within a constant factor of the cost of $\sol$. 
The $\hat{x}$ variables are non-negative but may be fractional. Furthermore,
each point receives unit flow from the balls that are chosen (${y}$ values are $1$), and the amount of flow each chosen ball sends is bounded by its capacity. Notably, no point gets any non-zero amount of flow from a ball that is not chosen (${y}$ value is $0$). Moreover, for any ball $B_i$ and any $p_j\in P$, if $B_i$ serves $p_j$, then $d(c_i,p_j)$ is at most a constant times $r_i$. We expand each ball by a constant factor so that it contains all the points it serves. 

We note that in $\hat{\sigma}$ points might receive fractional amount of flow from the chosen balls. However, as the capacity of each ball is integral we can find, using a textbook argument for integrality of flow,  another solution with the same set of chosen balls, such that the new solution satisfies all the properties of $\hat{\sigma}$ and the additional property, that for each point $p$, there is a single chosen ball that sends one unit of flow to $p$ \cite{ChuzhoyN06}. Thus, choosing an optimal LP solution as the input $\sol=({x},{y})$ of the rounding algorithm yields a constant approximation for MMCC by expanding each ball by at most a constant factor.

Our LP rounding algorithm consists of two steps. The first step is a
preprocessing step where we construct a fractional LP solution
$\solbar=(\overline{x},\overline{y})$ from $\sol$, such that each ball in $\solbar$ is either
heavy or light, and for each point $p_j \in P$, the amount of flow that $p_j$
can potentially receive from the light balls is at most $\alpha$. The latter property
will be heavily exploited in the next step. The second step is the core step of
the algorithm where we round $\solbar$ to the desired integral solution. 

%we actually round the solution $\solbar$ to the desired integral solution. 

We note that throughout the algorithm, for any intermediate LP solution that we consider, we maintain the following two invariants: (i) Each ball $B_i$ sends at most $U_i$ units of flow to the points, and (ii) Each point receives exactly one unit of flow from the balls. With respect to a solution $\sol = (x,y)$, we define the {\em available capacity} of a ball $B_i \in \ball$, denoted $\AvCap(B_i)$, to be   $U_i - \sum_{p_j \in P} x_{ij}$. We now describe the preprocessing step.

\subsubsection{The Preprocessing Step}

%--Conditional --
%\par If $\frac{1}{\alpha}$ is a constant, then setting $y_i = 1$ for all
%non-light balls increases the cost of the solution by only a constant factor.
%Therefore, we can consider that any non-light ball is a heavy ball. However,
%the same is not true for the light balls -- opening all of them fully can incur
%a huge increase in overall cost, since the corresponding $y_i$ variables can be
%very small. On the other hand, opening some of the light balls may be necessary
%to ensure that all points are covered. The rounding scheme described in the
%next section carefully chooses a subset of the light balls to open, such that
%the overall increase in the cost is not ``too much''.

% \par The goal of the preprocessing scheme is to ensure that for each point $p_j
% \in P$, the sum of $y_i$ values corresponding to the light balls serving $p_j$ is at
% most $\alpha$.

%-- Delete this --
%\par Any given feasible LP solution can be converted into another feasible
%solution of the same cost that satisfies the aforementioned property, at the
%expense of expanding some of the balls by a factor of at most $3$. We state
%this formally in the following lemma.

\begin{lem}\label{lemma:preproc}
\par Given a feasible LP solution $\sol=(x, y)$, and a parameter $0 < \alpha \le
\frac{1}{2}$, there exists a polynomial time algorithm to obtain another LP solution
$\solbar=(\overline x, \overline y)$ that satisfies \Crefrange{constr:open}{constr:fractional_y} except \ref{constr:coverage} of \ref{LP}. Additionally, $\solbar$ satisfies the following properties.
\begin{compactenum}
   \item Any ball $B_i \in \cb$ with non-zero $\overline{y_i}$ is either heavy ($\overline{y_i} = 1$) or light ($0<
      \overline{y_i} \le \alpha$).
   \item For each point $p_j \in P$, we have that 
       \begin{align}
       \label[ineq]{ineq:cond2}
          \sum_{B_i \in \cl:
         \overline{x}_{ij} > 0} \overline{y_i} \leq  \alpha,
       \end{align}
where $\cl$ is the set of light balls with respect to $\solbar$. \label{Condition 2}
   \item For any heavy ball $B_i$, and any point $p_j \in P$ served by $B_i$,
      $d(c_i, p_j) \le 3r_i$.
   \item For any light ball $B_i$, and any point $p_j \in P$ served by $B_i$, $d(c_i, p_j) \le r_i$.
   \item $\cost(\solbar) \le \frac{1}{\alpha} \cost(\sol)$.
\end{compactenum}
\end{lem}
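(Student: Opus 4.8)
The plan is to build $\solbar$ from $\sol$ in two moves. The first move eliminates the ``medium'' balls, i.e.\ those with $\alpha < y_i < 1$, by rounding each such $y_i$ up to $1$; this keeps constraints \ref{constr:open}, \ref{constr:cap}, and \ref{constr:flow} satisfied (they only become slacker), makes every ball heavy or light, and thus already gives property~1. Since each rounded ball had $y_i > \alpha$, raising it to $1$ inflates its contribution to the objective by a factor less than $1/\alpha$, while the heavy and light balls are untouched; summing over all balls gives $\cost(\solbar) \le \tfrac1\alpha \cost(\sol)$, which is property~5. The remaining and more delicate move is to enforce property~2 without any further increase in cost and while respecting the capacity and flow invariants.

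To enforce property~2 I would repeatedly process the points that violate it, i.e.\ points $p_j$ with $\sum_{B_i \in \cl : x_{ij} > 0} \overline{y_i} > \alpha$, and ``repair'' each one by rerouting all of $p_j$'s flow currently carried by light balls onto a single heavy ball that will contain $p_j$ after a bounded expansion. The natural repair ball is $B_k$, the light ball of largest radius that serves $p_j$; I promote it to heavy ($\overline{y_k} = 1$) and reroute into it the light flow at $p_j$. The key geometric point, which yields properties~3 and~4, is the following triangle-inequality bound: every light ball $B_i$ serving $p_j$ still satisfies the original coverage constraint \ref{constr:coverage}, so it shares the point $p_j$ with $B_k$ and $d(c_i, c_k) \le r_i + r_k \le 2 r_k$ (using $r_i \le r_k$); hence every point served by such a $B_i$ lies within $r_i + 2 r_k \le 3 r_k$ of $c_k$. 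Thus after rerouting, the newly heavy $B_k$ serves only points within $3 r_k$, giving property~3, while light balls are never asked to serve points outside their original radius, so property~4 is inherited. The monotonic capacity model enters precisely here: since $r_i \le r_k$, monotonicity gives $U_i \le U_k$, so the repair ball always has the largest capacity among the light balls it absorbs.

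For the cost bound to survive this second move, each promotion must be paid for by y-mass that is simultaneously removed. Here I would use the fact that, for every point, constraints \ref{constr:open} and \ref{constr:flow} force $\sum_{B_i : x_{ij} > 0} \overline{y_i} \ge \sum_i x_{ij} = 1$. I therefore promote a new ball only at a violating $p_j$ that is served by \emph{no} heavy ball; in that case the entire unit of y-mass at $p_j$ is light, so the light balls emptied by the rerouting free at least $1 - \overline{y_k}$ units of y-mass, exactly covering the cost $1 - \overline{y_k}$ of promoting $B_k$. If some heavy ball already serves $p_j$, I instead reroute $p_j$'s light flow onto that heavy ball and perform no promotion, so the objective is unchanged. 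Since the light mass at each point only decreases under these operations, no new violations are created, and the process terminates in polynomially many steps with properties~1,~3,~4, and~5 intact.

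The step I expect to be the main obstacle is making the rerouting respect the capacity invariant, i.e.\ keeping each ball's outgoing flow at most its $U_i$ while repairing a point. Simply dumping all of $p_j$'s light flow onto the repair ball can overflow its capacity when that capacity is small (e.g.\ $U_k = 1$) and the ball already serves other points. I would handle this by a \emph{local} flow rebalancing confined to the cluster of light balls serving $p_j$: rerouting $p_j$'s flow off these balls frees available capacity $\AvCap$ on them, and this freed capacity is used to relocate the few points displaced from the repair ball. Making this local exchange provably feasible---most cleanly via a small max-flow / Hall-type feasibility argument on the cluster, using lightness ($\sum_j x_{ij} \le \alpha U_i$ for a light ball $B_i$) together with monotonic capacities to balance supply against demand---is the technical heart of the lemma, and is where the bulk of the work will go.
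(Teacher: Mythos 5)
Your high-level plan — repair each violating point by absorbing light flow into a largest-radius light ball, using monotonicity for capacity and the triangle inequality for the $3r_k$ bound — matches the paper's, but the cost accounting for the promotions is broken, and this is exactly where the paper's argument differs. You claim each promotion of cost $1-\overline{y}_k$ is paid for by ``$y$-mass freed from the light balls emptied by the rerouting,'' but your rerouting only moves $p_j$'s own flow off those balls; they continue to serve their other points, so you cannot zero out their $y$-values and no mass is freed. Without that offset, nothing bounds the total promotion cost: each promotion turns a ball of mass $\overline{y}_k\le\alpha$ (possibly arbitrarily small) into a ball of mass $1$, and one can build instances (many light balls of tiny $y$, serving sets arranged so that no small subset of balls hits every point) where the number of promotions forced by your procedure costs $\omega(\cost(\sol))$ by any desired factor. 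If you instead try to actually empty the absorbed balls by rerouting \emph{all} of their flow into $B_k$, you hit the capacity wall you flagged: when no heavy ball serves $p_j$, the total light $y$-mass at $p_j$ is at least $1$ but can be much larger, so the flow to absorb can exceed $U_k$ even with monotonic capacities; the ``Hall-type local rebalancing'' you defer to is precisely the unproven step, and it is not how the lemma is actually established. A secondary gap: in your ``heavy ball already serves $p_j$'' branch, that heavy ball may already be saturated ($\sum_j x_{ij}=U_i$), so rerouting $p_j$'s light flow onto it can violate \Cref{constr:cap}.

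The paper's resolution avoids all of this with one device you are missing: at a violating point $p_j$ it selects only a \emph{sub-collection} $\cl_j$ of the light balls serving $p_j$ with $\alpha<\sum_{B_i\in\cl_j}\overline{y}_i\le 2\alpha$ (possible because each light $y_i\le\alpha\le\tfrac12$), reroutes \emph{all} flow of $\cl_j\setminus\{B_k\}$ into the largest ball $B_k$, and sets $\overline{y}_k\gets\sum_{B_i\in\cl_j}\overline{y}_i$ while zeroing the rest. This is mass-conserving (the loop costs nothing), capacity-feasible since the absorbed flow is at most $\sum_{B_i\in\cl_j}U_iy_i\le U_k\cdot 2\alpha\le U_k$, and strictly reduces the set of light balls, so it terminates; only at the very end are the balls with $\overline{y}_i>\alpha$ rounded up to $1$, giving the single factor-$1/\alpha$ loss. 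Your order of operations (round medium balls up first, then repair) forfeits this: once repairs themselves cost money, the $1/\alpha$ bound no longer follows. Note also that the $3r_k$ bound in property~3 is needed precisely because all points served by the absorbed balls (not just $p_j$) migrate to $B_k$ — your geometric argument correctly derives this bound but for an operation your scheme does not actually perform.
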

\begin{proof}
\par The algorithm starts off by initializing $\solbar$ to $\sol$. While there 
is a violation of~\Cref{ineq:cond2}, we perform the following steps.
%\begin{itemize}[wide, labelwidth=!, labelindent=0pt]
\begin{enumerate}[leftmargin=*]
    \item We pick an arbitrary point $p_j \in P$, for which~\Cref{ineq:cond2} is
        not met. Let $\cl_j$ be a subset of light balls serving $p_j$ such that
        $\alpha < \sum_{B_i \in \cl_j} \overline{y}_i \le 2 \alpha$. Note that
        such a set $\cl_j$ always exists because the $\overline{y}_i$ variables
        corresponding to light balls are at most $\alpha \le \frac{1}{2}$. Let
        $B_k$ be a ball with the largest radius from the set $\cl_j$. (If there is more than one ball with the largest radius, we consider one having the largest capacity among those. Throughout the paper we follow this convention.) Since $r_k \ge r_m$ for all other balls $B_m \in \cl_j$, we have, by the \emph{monotonicity} assumption, that $U_k \ge U_m$. 
    \item We set $\overline{y}_k \gets \sum_{B_i \in \cl_j} \overline{y}_i$, and
        $\overline{y}_{m} \gets 0$ for $B_m \in \cl_j \setminus \{B_k\}$. Note that $\overline{y}_k \le 2\alpha \le 1$. Let $A
        = \{p_t \in P \mid \overline{x}_{it} > 0 \text{ for some } B_i \in \cl_j\setminus \{B_k\}\}$ be the set of ``affected''
        points. We reroute the flow for all the affected points in $A$ from $\cl_j \setminus \{B_k\}$ to the ball $B_k$. Since $U_k \ge U_{m}$ for
        all other balls $B_m \in \cl_j$, $B_k$ has enough available capacity 
        to ``satisfy'' all ``affected'' points. In $\solbar$, all other
        $\overline{x}_{ij}$ and $\overline{y_i}$ variables remain same as before.
        (\emph{Note}: Since $B_k$ had the largest radius from the set $\cl_j$,
        all the points in $A$ are within distance $3r_k$ from its center $c_k$,
        as seen using the triangle inequality. Also, since $\overline{y}_k >
        \alpha$, $B_k$ is no longer a light ball.%         We do not violate
%         \Cref{constr:fractional_y} since $\overline{y}_k \le 2\alpha \le 1$
)
\end{enumerate}

\par Finally, for all balls $B_i$ such that $\overline{y}_i > \alpha$, we set
$\overline{y}_i = 1$, making them heavy. Thus $\cost(\solbar)$ is at most $\frac{1}{\alpha}$ times $\cost(\sol)$, and
$\solbar$ satisfies all the conditions stated in
the lemma.
\end{proof}\textbf{Remark.} As a byproduct of \Cref{lemma:preproc}, we get a
simple $(4,3)$-approximation algorithm for the \emph{soft} capacitated version of our
problem (see \Cref{ap:soft}). 
% We remind the reader that in this variant, we are allowed to open
% multiple identical copies of the given ball at the same location, and each such ball has a capacity same as that of the original ball. However, we need to pay a cost of $1$ for each copy.
% The LP corresponding to the soft capacitated version, is same as \ref{LP},
% except that~\Cref{constr:fractional_y} is  relaxed to simply $y_i
% \ge 0$. We
% solve this LP, and obtain an optimal solution $(x^*, y^*)$. Then, using the
% procedure from~\Cref{lemma:preproc}, we can ensure that the flow that each point
% receives from the set of \emph{non-light} balls $(\cb \setminus \cl)$ is at least
% $1-\alpha$. Then, opening $\frac{1}{1-\alpha}\lceil y_i \rceil$ identical copies of
% each non-light ball $B_i$ ensures that at least one demand of each point is satisfied exclusively by these balls. We now expand each of the opened balls by a factor of $3$. As $y_i \geq \alpha$ for each non-light ball $B_i$, choosing $\alpha = \frac{1}{2}$ yields a simple
% $4$-approximation for this version, where
% each ball is expanded by a factor of at most $3$.

\subsubsection{The Main Rounding Step}
% Here we describe the algorithm corresponding to the main rounding step. The algorithm takes the preprocessed LP solution $\solbar$ as
% the input and rounds it to an integral solution. 
%Over the course of the algorithm, we maintain a set of
%light balls that may get selected to be part of the solution. The algoritm uses a
% notion of active points which is a subset of the points. Each active point receives unit flow from the already selected light balls. 
The main rounding step can logically be
divided into two stages. 
%The first stage is referred to as the \emph{Cluster Formation} stage, where, roughly speaking, we iteratively select light balls
%that contain more than $L$ points, and make them heavy. Because each selected ball can be charged to $L$ unique points, the number of such balls is small compared to the optimal solution. Any light ball that survives this stage serves at most $L$ points. 
The first stage, \emph{Cluster Formation}, is the crucial step of the algorithm. Note that there can be many light balls in the preprocessed solution. Including all these balls in the final solution may incur a huge cost. Thus we use a careful strategy based on flow rerouting to select a small number of balls. \iffalse In each iteration, a light ball is selected to be part of the final solution, and each point that the ball serves is made to receive, via rerouting, one unit of flow from it.\fi 
The idea is to use the capacity of a selected light ball to reroute as much flow as possible from other intersecting balls. This in turn frees up some capacity at those balls. The available capacity of each heavy ball is used, when possible, to reroute \emph{all} the flow from some light ball intersecting it; this light ball is then added to a cluster centered around the heavy ball. Notably, for each cluster, the heavy ball is the only ball in it that actually serves some points, as we have rerouted flow from the other balls in the cluster to the heavy ball. 
% In this way, selection of a light ball  light balls, so that we do not end up choosing a lot of them.  
%Thus a heavy ball $B_i$ can serve a point $p$ such that $d(c_i,p)$ is arbitrarily large. 
In the second stage, referred to as \emph{Selection of Objects}, 
% we address the clusters created in the first stage. Each such cluster consists of a heavy ball $B_i$ and a (possibly empty) set of light balls intersecting it. Because the light balls in the cluster can be arbitrarily larger than $B_i$, $B_i$ could be serving a point $p$ such that $d(c_i, p) \gg r_i$. To address this issue, 
we select exactly one ball (in particular, a largest ball) from 
each cluster as part of the final solution, and reroute the flow from the heavy ball to this ball, and expand it by the required amount. Together these two stages ensure that we do not end up choosing many light balls.

We now describe the two stages in detail. Recall that any ball in the preprocessed solution is either heavy or light. Also
$\mathcal{L}$ denotes the set of light balls and $\mathcal{H}$ the set of heavy balls. Note that any heavy ball $B_i$ may serve a point $p_j$ which is at a distance $3r_i$ from $c_i$. We expand each heavy ball by a factor of $3$ so that $B_i$ can contain all points it serves. 
%For each $L_i \in \mathcal{L}$, $u_{ip}$ is the demand of the point
%$p$ served by $L_i$ in $\solbar$. Similarly, for each $H_j\in
%\mathcal{H}$, $v_{jp}$ is the demand of the point $p$ served by $H_j$
%in $\solbar$. 

\begin{enumerate}[leftmargin=*]
\item \textbf{Cluster Formation.} 
In this stage, each light ball, will be added to either a set $\OLightBb$ (that will eventually be part of the final solution), or a cluster corresponding to some heavy ball. Till the very end of this stage, the sets of heavy and light balls remain unchanged. The set $\OLightBb$ is initialized to $\emptyset$. For each heavy ball $B_i$, we initialize the cluster of $B_i$, denoted by cluster$(B_i)$ to $\{B_i\}$. We say a ball is clustered if it is added to a cluster.
%For each light ball $B_t$ not in $O$, let the residual demand $R_t$ of $B_t$ be $\sum_{p|B_t \text{ serves } p} x_{tp}$. 

% For each point $p_j \in P$, we define the \emph{residual demand} of $p_j$ to be the amount of flow it receives from the balls not in $\OLightBb$, i.e., $\sum_{i\notin \OLightBb} {\overline{x}}_{ij}$. At any time, we call a point \emph{active} if its residual demand is greater than $\alpha$. Otherwise, the point is called inactive.
% 
At any point, let $\Lambda$ denote the set consisting of each light ball that is (a) not in $\OLightBb$, and (b) not yet clustered. Throughout the algorithm we ensure that, if a point $p_j\in P$ is currently served by a ball $B_i\in \Lambda$, then the amount of flow $p_j$ receives from any ball $B_{i'}$ is the same as that in the preprocessed solution, i.e., the flow assignment of $p_j$ remains unchanged. While the set $\Lambda$ is non-empty, we perform the following steps.
%\begin{description}[font=$-$\scshape\bfseries]
\begin{enumerate}
\item While there is a heavy ball $B_i$ and a light ball $B_t \in \Lambda$ 
	such that (1) $B_t$ intersects $B_i$; and 
	(2) $\AvCap(B_i)$ is at least the flow $\sum_{p_j \in P}
	\overline{x}_{tj}$ out of $B_t$:
\\\- \quad \textbf{1.} For all the points served by $B_t$, we reroute the flow from $B_t$ to $B_i$.
\\\- \quad \textbf{2.} We add $B_t$ to cluster$(B_i)$.
% \end{enumerate}

After the execution of this \emph{while} loop, if the set $\Lambda$ becomes empty, we stop and proceed to the \emph{Selection of Objects} stage. Otherwise, we proceed to the following.
\item For any ball $B_j \in \Lambda$, let $\mathcal{A}_j$ denote the set of points currently being served by $B_j$. Also, for $B_j \in \Lambda$, let $k_j = \min\{U_j, |\mathcal{A}_j|\}$, i.e. $k_j$ denotes the minimum of its capacity, and the number of points that it currently serves.  We select the ball $B_t \in \Lambda$ with the maximum value of $k_j$, and add it to the set $\OLightBb$.
\item Since we have added $B_t$ to the set $\OLightBb$ that will be among the selected balls, we use the available capacity at $B_t$ to reroute flow to it. This is done based on the following three cases depending on the value of $k_t$.
\\\- \quad \textbf{1.} $k_t = |\mathcal{A}_t| \le U_t$. 
% In this case, it is possible for $B_t$ to serve all points in the set $\mathcal{A}_t$ completely, as follows -- 
% \\
In this case, for each point $p_l$ in $B_t$ that gets served by $B_t$, we reroute the flow of $p_l$ from $\mathcal{B}\setminus \OLightBb$ to $B_t$. Note that after the rerouting,  $p_l$ is no longer being served by a ball in $\Lambda$. The rerouting increases the available capacity of other balls intersecting $B_t$. In particular, for each $B_i \in \mathcal{H}$, $\AvCap(B_i)$ increases by $\sum_{p_l:B_t \text{ serves } p_l} {\overline{x}}_{il}$. 
\\\- \quad \textbf{2.} $k_t = U_t < |\mathcal{A}_t|$, but $k_t = U_t > 1$. 
% In this case, the number of points served by $B_t$ is more than its capacity, so it is not possible to completely serve all of them. We handle this case as follows --
% \\
Observe that the flow out ball $B_t$ is $\sum_{p_j \in \mathcal{A}_t} x_{tj} \le \alpha U_t$; thus $\AvCap(B_t) \geq (1 - \alpha) U_t = (1 - \alpha) k_t$. 
\\In this case, we select a point $p_j \in \mathcal{A}_t$ arbitrarily, and reroute the flow of $p_j$ from $\mathcal{B} \setminus \OLightBb$ to $B_t$. This will increase the available capacity of other balls in $\mathcal{B} \setminus \OLightBb$ that were serving $p_j$. Also note that $p_j$ is no longer being served by a ball in $\Lambda$. 
\\We repeat the above flow rerouting process for other points of $\mathcal{A}_t$ until we encounter a point $p_l$ such that rerouting the flow of $p_l$ from $\mathcal{B} \setminus \OLightBb$ to $B_t$ violates the capacity of $B_t$. Thus the flow assignment of $p_l$ remains unchanged. 
% Note that it is possible that the value of $\AvCap(B_t)$ is non-zero after the rerouting process. 
Note that we can reroute the flow of at least $\lfloor (1-\alpha) k_t \rfloor = \lfloor (1-\alpha) U_t \rfloor \ge 1$ points of $\mathcal{A}_t$ in this manner, since $U_t > 1$ and $\alpha \le 1/2$. 
%\item For each heavy ball $B_i$, we process the light balls one by one that intersect $B_i$, and are not in $O$ and not clustered. For each such light ball $B_t$, if the available flow $A_i$ is at least the amount of flow $R_t$ that $B_t$ sends, then we reroute the flow from $B_t$ to $B_i$, decrease $A_i$ by $R_t$, and $B_t$ is added to cluster$(B_i)$. Otherwise, we go to process the next heavy ball if any.
%\end{description}
\\\- \quad \textbf{3.} $k_t = U_t = 1 < |\mathcal{A}_t|$. 
% In this case, the capacity of the ball $B_t$ is $1$, which is less than the number of points $B_t$ serves. 
Note that $B_t$ has used $\sum_{p_j \in \mathcal{A}_t} x_{tj} \le \alpha U_t = \alpha$ capacity. 
% Therefore, it may not have enough available capacity to serve even a single active point completely. 
In this case, we pick a point $p_j \in \mathcal{A}_t$ arbitrarily, and then perform the following two steps:
\\(i). Reroute the flow of $p_j$ from $\Lambda$ to $B_t$; after this, $p_j$ is no longer being served by a ball in $\Lambda$. Note that in this step, we reroute at most $\alpha$ amount of flow. Therefore, at this point we have $\AvCap(B_t) \ge 1-2\alpha$. Let $f$ be the amount of flow $p_j$ receives from the balls in $\OLightBb$. 
\\(ii). Then we reroute $\min\{\AvCap(B_t),1-f\}$ amount of flow of $p_j$ from the set $\ch$ to $B_t$. 
% Note that after this step, $B_t$ sends at least $1-\alpha$ flow to $p_j$.
\end{enumerate}

When the loop terminates, we have that each light ball is either in $\OLightBb$ or clustered. We set ${\overline{y}}_i\leftarrow 1$ for each ball $B_i\in \OLightBb$, thus making it heavy. For convenience, we also set cluster$(B_i) = \{B_i\}$ for
each $B_i\in \OLightBb$. 

\item \textbf{Selection of Objects.}
At the start of this stage, we have a collection of clusters each centered around a heavy ball, such that the light balls in each cluster intersect the heavy ball. We are going to pick exactly one ball from each cluster and add it to a set
$\clusterbb$. Let $\clusterbb = \emptyset$ initially. For each heavy ball $B_i$, we consider cluster$(B_i)$ and perform the following steps. 
%\begin{description}[font=$-$\scshape\bfseries]
\begin{enumerate}
\item  If cluster$(B_i)$ consists of only the heavy ball, we add $B_i$ to $\clusterbb$. 

\item Otherwise, let $B_j$ be a largest ball in cluster$(B_i)$. If $B_j = B_i$, then we expand it by a factor of $3$. Otherwise, $B_j$ is a light ball intersecting with $B_i$, in which case we expand it by a factor of $5$. In this case, we also reroute the flow from the heavy ball to the selected ball $B_j$. Note that since we always choose a largest ball in the cluster, its capacity is at least that of the heavy ball, because of the \emph{monotonicity} assumption. We add $B_j$ to
$\clusterbb$, and we set $\overline{y}_s \leftarrow 0$ for any other ball $B_s$ in the cluster.
\end{enumerate}
%\end{description}

After processing the clusters, we set $\overline{y}_t \leftarrow 1$ for each ball $B_t\in \clusterbb$. Finally, we return the current set of heavy balls (i.e., $\clusterbb$) as the set of selected balls. Note that the flow out of each such ball is at most its capacity, and each point receives one unit of flow from the (possibly expanded) balls that contain it. As mentioned earlier, this can be converted into an integral flow.  
\end{enumerate}

\subsubsection{The Analysis of the Rounding Algorithm}
\label{ssec:analysis}
Let $OPT$ be the cost of an optimal solution. We establish a bound on the number of balls our algorithm outputs by bounding the size of the set $\clusterbb$. Then we conclude by showing that any input ball that is part of our solution expands by at most a constant factor to cover the points it serves. 

\iffalse
\begin{lemma}\label{lemma2}
$|\FLightBb| \leq 2\cdot OPT$.
\end{lemma}

\begin{proof}
For any $B_i\in \FLightBb$, we refer to $B_{i}$ as a type 1 ball if $\AvCap(B_i)$ became $0$ at the time $B_i$ was added to $\FLightBb$. We call $B_i$ a type 2 ball otherwise. The flow out of a type $1$ ball is $L$. Because each point receives exactly one unit of flow, the number of type 1 balls is at most $n/L$.
When a type 2 ball is being added to $\FLightBb$,  all points in $B_i$ that are active before the addition become inactive immediately afterward; the number of such points is more than $L$. Thus the number of type 2 balls is again at most $n/L$. As $OPT \geq n/L$, the lemma follows.
\end{proof}

The next Corollary follows from~\Cref{lemma2} and the definition of heavy balls. 

\begin{cor}\label{cor:filter}
After Filtering stage, the number of heavy balls is at most
$|\mathcal{H}_1 |+2\cdot OPT$, where $\mathcal{H}_1$ is the set of heavy balls in the preprocessed solution.
\end{cor}
\fi
For notational convenience, we refer to the solution $\solbar = (\overline x, \overline y)$ at hand after preprocessing, as $\sol = (x, y)$. Now we bound the size of the set $\OLightBb$ computed during Cluster Formation. The basic idea is that each light ball added to $\OLightBb$ creates significant available capacity in the heavy balls. Furthermore, whenever there is enough available capacity, a heavy ball clusters intersecting light balls, thus preventing them from being added to $\OLightBb$. The actual argument is more intricate because we need to work with a notion of $y$-accumulation, a proxy for available capacity. The way the light balls are picked for addition to $\OLightBb$ plays a crucial role in the argument. 

Let $\chp$ (resp. $\cl_1$) be the set of heavy (resp. light) balls after preprocessing, and
$I$ be the total number of iterations in the Cluster Formation stage. Also let $L_{j}$ be
the light ball selected (i.e. added to $\OLightBb$) in iteration $j$ for $1\leq j\leq I$. Now, $L_t$ maximizes $k_j$ amongst all balls from $\Lambda$ in iteration $t$ (Recall that $k_j$ was defined as the minimum of the number of points being served by $L_j$, and its capacity). Note that $k_1 \geq k_2 \geq \cdots \geq k_I$. For any $B_i \in \chp$, denote by $F(L_t, B_i)$, the total amount of flow rerouted in iteration $t$ from $B_i$ to $L_t$ corresponding to the points $B_i$ serves. 
% to let 
% \[ F(L_t, B_i) = \sum_{p_l: L_t \text{ serves } p_l} x_{il},\] 
This is the same as the increase in $\AvCap(B_i)$ when $L_t$ is added to $\OLightBb$. Correspondingly, we define $Y(L_t, B_i)$, the ``$y$-credit contributed by $L_t$ to $B_i$'',  to be $\frac{F(L_t, B_i)}{k_t}$. 
%Let $P^t$
%be the set of active points at iteration $t$. 
Now, the increase in available capacity over all balls in $\heavybb$ is $F_t =
\sum_{B_i\in \heavybb} F(L_t, B_i)$. The approximation guarantee of the algorithm depends crucially on the following simple lemma, which states that in each iteration we make ``sufficiently large'' amount of flow available for the set of heavy balls.

\begin{lem}\label{obs1}
Consider a ball $B_t \in \OLightBb$ processed in the \emph{Cluster Formation} stage, step c. For $0 < \alpha \le 3/8$, $F_t \geq \frac{1}{5} k_t$.
% \begin{compactenum}
% \item If $k_t = |\mathcal{A}_t|$ or $1 < k_t = U_t  < |\mathcal{A}_t|$ (i.e. steps c.\textbf{1} or c.\textbf{2}), then, $F_t \geq \frac{1}{6} k_t$.
% \item If $1 = k_t = U_t < |\mathcal{A}_t|$ (i.e. step c.\textbf{3}), then $F_t \ge (1-2\alpha) \ge \frac{1}{4} k_t$.
% \end{compactenum}
% Therefore, in both cases, we have that $F_t \ge \frac{1}{6}k_t$.
\end{lem}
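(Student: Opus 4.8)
The plan is to isolate one structural fact that powers all three sub-cases of step (c), and then verify the bound $F_t \ge \tfrac15 k_t$ in each of them. The fact is this: at the moment $B_t = L_t$ is selected, every point in $\mathcal{A}_t$ is currently served by the $\Lambda$-ball $B_t$, so by the invariant maintained throughout \emph{Cluster Formation} its flow assignment is identical to the preprocessed solution. Hence, by \Cref{ineq:cond2} together with \Cref{constr:open} (which gives $x_{ij}\le y_i$), the total flow any such point receives from light balls is at most $\alpha$, so the flow it receives from the heavy balls in $\heavybb$ is at least $1-\alpha$; moreover, since $B_t$ is light, $x_{tj}\le y_t\le\alpha$ for each $p_j\in\mathcal{A}_t$. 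I would record these two inequalities at the outset, because every rerouting in step (c) moves \emph{all} of a point's heavy-ball flow onto $B_t$, so each rerouted point contributes at least $1-\alpha$ to $F_t$ (the increase in the heavy balls' available capacity) while consuming at most $1-x_{tj}\le 1$ units of $B_t$'s capacity. Establishing this clean per-point accounting is the conceptual crux; everything afterward is arithmetic.

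With it in hand, the first case ($k_t=|\mathcal{A}_t|\le U_t$) is immediate: all $k_t$ points are rerouted, so $F_t\ge(1-\alpha)k_t\ge\tfrac58 k_t\ge\tfrac15 k_t$ whenever $\alpha\le 3/8$.

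For the second case ($k_t=U_t<|\mathcal{A}_t|$, $U_t>1$), I would argue that since $\AvCap(B_t)\ge(1-\alpha)U_t$ initially and each rerouted point consumes at most one unit, the process reroutes at least $\lfloor(1-\alpha)U_t\rfloor$ points before capacity is exhausted; each contributes at least $1-\alpha$ to $F_t$, giving $F_t\ge\lfloor(1-\alpha)U_t\rfloor(1-\alpha)$. It then remains to check $\lfloor(1-\alpha)U_t\rfloor(1-\alpha)\ge\tfrac15 U_t$ for every integer $U_t\ge2$ with $\alpha\le 3/8$. This is the one place where the exact value of $\alpha$ bites, and I expect it to be the main (though elementary) obstacle: the floor is wasteful precisely when $(1-\alpha)U_t$ sits just below an integer, and the binding instance is $U_t=3$, where $(1-\alpha)U_t=\tfrac{15}{8}$ rounds down to $1$ and the required inequality reads $\tfrac58\ge\tfrac35$ — true, but only barely, which is exactly why $\alpha\le 3/8$ (equivalently $1-\alpha\ge\tfrac58$) is imposed. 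For all other $U_t\ge2$ the slack is larger, so a short finite check of the small values, together with the observation that the ratio tends to $(1-\alpha)^2\ge\tfrac{25}{64}$ as $U_t$ grows, closes the case.

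Finally, for the third case ($k_t=U_t=1<|\mathcal{A}_t|$) I would track $\AvCap(B_t)$ through step (i): $B_t$ uses at most $\alpha$ capacity initially, and rerouting the $\Lambda$-flow of $p_j$ adds at most $\alpha$ more (the light-ball flow to $p_j$ being at most $\alpha$), so $\AvCap(B_t)\ge 1-2\alpha$ afterward; the same bound shows that $f$, the flow $p_j$ receives from $\OLightBb$ — which after step (i) is exactly what $B_t$ sends it — is at most $\alpha$. Step (ii) then reroutes $\min\{\AvCap(B_t),\,1-f\}\ge\min\{1-2\alpha,\,1-\alpha\}=1-2\alpha$ units of heavy-ball flow onto $B_t$, so $F_t\ge 1-2\alpha\ge\tfrac14\ge\tfrac15=\tfrac15 k_t$ since $k_t=1$. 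Combining the three cases yields $F_t\ge\tfrac15 k_t$ throughout, as claimed.
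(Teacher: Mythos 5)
Your proof is correct and follows essentially the same route as the paper's: the same per-point observation that each point in $\mathcal{A}_t$ retains its preprocessed flow assignment and hence receives at least $1-\alpha$ from heavy balls, followed by the same three-case analysis with the bounds $(1-\alpha)k_t$, $(1-\alpha)\lfloor(1-\alpha)k_t\rfloor$, and $1-2\alpha$. You simply make explicit what the paper leaves implicit, in particular the capacity accounting behind the floor bound in case 2 and the verification that $U_t=3$ is the binding instance.
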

\begin{proof}
The algorithm ensures that the flow asssignment of each point in $\mathcal{A}_t$ is the same as that w.r.t. the preprocessed solution. Thus by property \ref{Condition 2} of \Cref{lemma:preproc}, each such point gets at most $\alpha$ amount of flow from the balls in $\OLightBb$. Now there are three cases corresponding to the three substeps of step c.
\begin{compactenum}
% \item Consider the first case from the statement of lemma, which corresponds to the cases c.\textbf{1} or c.\textbf{2} of the \emph{Cluster Formation} stage. There are following two cases:
% \begin{compactenum}
\item When $k_t = |\mathcal{A}_t| \le U_t$, it is possible to reroute the flow of all points of $\mathcal{A}_t$ from $\mathcal{B} \setminus \OLightBb$ to $B_t$. Therefore, we get that $F_t \ge (1-\alpha) k_t \ge \frac{1}{5} k_t$, since $0 < \alpha \le 3/8$.
\item When $1 < k_t = U_t < |\mathcal{A}_t|$, it is possible to reroute the flow of at least $\lfloor (1-\alpha) U_t\rfloor = \lfloor (1-\alpha) k_t\rfloor$ points of $\mathcal{A}_t$ from $\mathcal{B} \setminus \OLightBb$ to $B_t$. Therefore, we get that $F_t \ge (1-\alpha)\lfloor(1-\alpha) k_t\rfloor$. When $k_t > 1$, the previous quantity is at least $\frac{1}{5} k_t$, again by using the fact that $0 < \alpha \le 3/8$.
% \end{compactenum}
\item When $1 = k_t = U_t < |\mathcal{A}_t|$, 
% consider the point $p_j \in \mathcal{A}_t$ picked arbitrarily. For this point $p_j$, using \Cref{lemma:preproc}, we have that $\sum_{B_i \in \cl}x_{ij} \le \alpha$. Before picking $p_j$, $B_t$ has used at most $\alpha$ flow according to the LP solution $\solbar$. Therefore, after step \textbf{c}.2, 
$F_t \ge (1-2\alpha)\ge \frac{1}{5} k_t$, as $0 < \alpha \le 3/8$.
\end{compactenum} \vspace{-2mm}
\end{proof}

At any moment in the Cluster Formation stage, for any ball $B_i \in \heavybb$, define its $y$-accumulation as 
\[ \yac(B_i) = \bigg({\sum_{L_t \in \OLightBb}Y(L_t, B_i)}\bigg) - \bigg({\sum_{B_j \in \cl
\cap \clst(B_i)} y_j}\bigg).\] The idea is that $B_i$ gets $y$-credit when a light
ball is added to $\OLightBb$, and loses $y$-credit when it adds a light ball to
cluster$(B_i)$; thus, $\yac(B_i)$, a proxy for the available capacity of $B_i$,
indicates the ``remaining'' $y$-credit. The next lemma
% (whose proof is given in \Cref{ap:lemma3}) 
gives a relation between
the $y$-accumulation of $B_i$ and its available capacity.

\begin{restatable}{lem}{flowaccumulation}
\label{lemma:flow-acc}
Fix a heavy ball $B_i \in \heavybb$, and an integer $1 \leq t \leq I$.  Suppose that $L_1, L_2, \cdots, L_t$ have been added to $\OLightBb$. Then $\AvCap(B_i) \ge \yac(B_i) \cdot k_t$.
\end{restatable}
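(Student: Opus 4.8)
The plan is to prove the inequality by induction over the sequence of elementary operations in the \emph{Cluster Formation} stage that actually change the two quantities $\AvCap(B_i)$ and $\yac(B_i)$. Only two kinds of operations do so: a \emph{clustering} operation, in which a light ball $B_j$ is added to $\clst(B_i)$ (step (a)), and a \emph{selection} operation, in which some ball $L_s$ is added to $\OLightBb$ together with the rerouting of step (c). I would maintain, at every moment of the stage, the stronger invariant
\[ \AvCap(B_i) \;\ge\; \yac(B_i)\cdot\kappa, \]
where $\kappa$ denotes the value $k_s$ of the most recently selected ball $L_s \in \OLightBb$. Before the first selection, $\OLightBb=\emptyset$, so $\yac(B_i) = -\sum_{B_j\in\cl\cap\clst(B_i)} y_j \le 0$, and the invariant holds trivially because $\AvCap(B_i)\ge 0$ always (further clusterings only keep $\yac(B_i)\le 0$). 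Evaluating the invariant at the snapshot right after $L_1,\dots,L_t$ have been added, where $\kappa=k_t$, yields the statement of the lemma.

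For the \textbf{selection} of $L_t$: by definition $F(L_t,B_i)$ is exactly the increase of $\AvCap(B_i)$, while $\yac(B_i)$ increases by $Y(L_t,B_i)=F(L_t,B_i)/k_t$ and the multiplier drops from $k_{t-1}$ to $k_t\le k_{t-1}$. Writing out the post-selection inequality and cancelling the matching $F(L_t,B_i)$ terms, it suffices to show $\AvCap^{\mathrm{old}}(B_i)\ge \yac^{\mathrm{old}}(B_i)\cdot k_t$; this follows from the pre-selection invariant $\AvCap^{\mathrm{old}}(B_i)\ge \yac^{\mathrm{old}}(B_i)\cdot k_{t-1}$ when $\yac^{\mathrm{old}}(B_i)\ge 0$, and trivially from $\AvCap(B_i)\ge 0$ when $\yac^{\mathrm{old}}(B_i)<0$. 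Hence a selection preserves the invariant.

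The \textbf{clustering} step is the crux. When $B_j$ is added to $\clst(B_i)$, the quantity $\AvCap(B_i)$ decreases by the current flow $f_j=\sum_{p_l} x_{jl}$ out of $B_j$, while $\yac(B_i)$ decreases by $y_j$ and $\kappa$ is unchanged; so preservation reduces to the single inequality $f_j\le y_j\cdot\kappa$. The key observation is the sharp bound $f_j\le y_j\,k_{B_j}$, where $k_{B_j}=\min\{U_j,|\mathcal{A}_j|\}$ is the current value of $k$ for $B_j$: indeed $x_{jl}\le y_j$ (\Cref{constr:open}) gives $f_j\le y_j|\mathcal{A}_j|$, while the capacity constraint (\Cref{constr:cap}) gives $f_j\le y_jU_j$, and taking the smaller of the two yields $f_j\le y_j\min\{U_j,|\mathcal{A}_j|\}$. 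It then remains to argue $k_{B_j}\le\kappa$. Since $B_j$ is being clustered now, it was still in $\Lambda$ at the moment the ball $L_s$ defining $\kappa=k_s$ was selected; as $L_s$ maximizes $k$ over $\Lambda$ at that moment, $B_j$'s value of $k$ then was at most $k_s$. Finally, while a ball stays in $\Lambda$ its served-point set $\mathcal{A}_j$ can only shrink, because the algorithm never reroutes flow \emph{into} a ball of $\Lambda$ (rerouting targets are always a heavy ball or an $\OLightBb$ ball); hence $k_{B_j}$ is non-increasing, its present value is at most $k_s=\kappa$, and $f_j\le y_j\kappa$ follows.

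I expect the clustering step to be the main obstacle, and within it the sharp bound $f_j\le y_j\,k_{B_j}$ — obtained by combining \Cref{constr:open} and \Cref{constr:cap} rather than using either alone — together with the monotonicity $k_{B_j}\le\kappa$. The looser bounds $f_j\le y_jU_j$ and $f_j\le\alpha k_{B_j}$ are each insufficient; it is precisely the product form $y_j\,k_{B_j}$ that correctly matches the $y_j$ lost in $\yac(B_i)$ against the multiplier $\kappa$. Assembling the three parts (the trivial pre-selection phase, the selection operation, and the clustering operation) over all operations completes the induction and establishes the lemma.
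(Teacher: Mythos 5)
Your proof is correct and follows essentially the same induction as the paper's: the paper also bounds the capacity consumed by a clustered ball $B_j$ by $\min\{U_j,|\mathcal{A}_j|\}\cdot y_j \le k_{t-1}\cdot y_j$ and splits the selection step into the cases $\yac(B_i)\le 0$ and $\yac(B_i)>0$. Your version is merely organized per elementary operation rather than per iteration, and spells out more explicitly (via the shrinkage of $\mathcal{A}_j$ for balls in $\Lambda$) why $\min\{U_j,|\mathcal{A}_j|\}$ is at most the $k$-value of the most recently selected ball, which the paper attributes in one line to ``the order in which we add balls to $\OLightBb$.''
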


\begin{proof}
The proof is by induction on $t$. For this proof, we abbreviate $\AvCap(B_i)$ by $A_i$. In the first iteration, just after adding $L_1$, $A_i \geq F(L_1,B_i) = Y(L_1, B_i) \cdot k_1 \geq \yac(B_i) \cdot k_1$. 

Assume inductively that we have added balls $L_1, \cdots, L_{t-1}$ to the set $\OLightBb$, and that just after adding $L_{t-1}$, the claim is true. That is, if $\yac(B_i)$ and $A_i$ are, respectively, the $y$-accumulation and the available capacity of $B_i$ just after adding $L_{t-1}$, then $A_i \ge \yac(B_i) \cdot k_{t-1}$.

Consider the iteration $t$. At step (a) of Cluster Formation, $B_i$ uses up some of its available capacity to add $0$ or more balls to$\clst(B_i)$, after which at step (b) we add $L_t$ to $\OLightBb$. Suppose that at step (a), one or more balls are added to $\clst(B_i)$. Let $B_j$ be the first such ball, and let $k$ and $C_1$ be the number of points $B_j$ serves and the capacity of $B_j$, respectively. Then the amount of capacity used by $B_j$ is at most $$\min\{C_1\cdot y_j,k\cdot y_j\}= \min\{C_1,k\}\cdot y_j\leq k_{t-1}\cdot y_j$$ where the last inequality follows because of the order in which we add balls to $\OLightBb$. Now, after adding $B_j$ to$\clst(B_i)$, the new $y$-accumulation becomes $\yac(B_i)' = \yac(B_i) - y_j$. As for the available capacity, $$A_i' \ge A_i - k_{t-1} \cdot y_j \ge (\yac(B_i) \cdot k_{t-1}) - k_{t-1} \cdot y_j \ge (\yac(B_i) - y_j)\cdot k_{t-1} = \yac(B_i)' \cdot k_{t-1}$$
% where we use the fact that $k \le k_{t-1}$ in the third inequality. 
Therefore, the claim is true after addition of the first ball $B_j$. Note that $B_i$ may add multiple balls to$\clst(B_i)$, and the preceding argument would work after each such addition.

Now consider the moment when $L_t$ is added to $\OLightBb$. Let $\yac(B_i)$ denote the $y$-accumulation just before this. Now, the new $y$-accumulation of $B_i$ becomes $\yac(B_i)' = \yac(B_i) + Y(L_t, B_i)$. If $\yac(B_i) \leq 0$, then 
the new available capacity is
$$A_i' \geq F(L_t,B_i) = Y(L_t, B_i) \cdot k_t \geq
\yac(B_i)' \cdot k_t.$$ If $\yac(B_i) > 0$, the new available capacity, using the inductive hypothesis, is
$$A_i' \ge \yac(B_i) \cdot k_{t-1} + Y(L_t, B_i) \cdot k_t \ge (\yac(B_i) + Y(L_t, B_i))\cdot k_t = \yac(B_i)' \cdot k_t$$
where, in the second inequality we use $k_t \le k_{t-1}$.
\end{proof}

% \begin{restatable}{lem}{expansion}
% \label{lemma6}
% In the algorithm each input ball is expanded by at most a factor of $9$.
% \end{restatable}

% \vspace{-1.5mm}
Now, in the next lemma, we show that any ball $B_i \in \heavybb$ cannot have ``too-much'' $y$-accumulation at any moment during Cluster Formation.
% \vspace{-0.5mm}
\begin{lem}\label{lemma:y-acc}
At any moment in the Cluster Formation stage, for any ball $B_i \in \heavybb$, we have that $\yac(B_i) \le 1+\alpha$.
\end{lem}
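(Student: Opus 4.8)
The plan is to track how $\yac(B_i)$ evolves over the course of the Cluster Formation stage and argue that it can never cross the threshold $1+\alpha$. Only two kinds of events change $\yac(B_i)$: (i) a clustering step, where some light ball $B_j$ is added to the cluster of $B_i$ in step~(a), which \emph{decreases} $\yac(B_i)$ by $y_j > 0$; and (ii) the addition of a ball $L_t$ to $\OLightBb$ in step~(c), which \emph{increases} $\yac(B_i)$ by $Y(L_t,B_i) = F(L_t,B_i)/k_t \ge 0$. Since clustering only decreases $\yac(B_i)$, any violation of the claimed bound must first appear immediately after some step~(c) event. I would therefore consider the first iteration $t$ at which $\yac(B_i)$ exceeds $1+\alpha$, let $\yac^{-}$ and $\yac^{+} = \yac^{-} + Y(L_t,B_i)$ denote its values just before and just after $L_t$ is added, and derive a contradiction by separately bounding the jump $Y(L_t,B_i)$ and the pre-jump value $\yac^{-}$.

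First I would bound the jump. The quantity $F(L_t,B_i)$ is at most the total flow $F_t$ rerouted into $L_t$ from all of $\heavybb$, which is in turn at most the total flow rerouted into $L_t$ in step~(c). Inspecting the three subcases of step~(c), this total never exceeds $k_t$: in the first subcase we reroute the flow of at most $|\mathcal{A}_t| = k_t$ points, each carrying total flow one, and in the remaining subcases it is bounded by $\AvCap(L_t) \le U_t = k_t$. Hence $Y(L_t,B_i) = F(L_t,B_i)/k_t \le 1$, so it suffices to show $\yac^{-} \le \alpha$.

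To bound $\yac^{-}$, I would first observe that if $F(L_t,B_i) = 0$ then $\yac(B_i)$ does not increase at this event, contradicting that $t$ is the first violating iteration; so I may assume $F(L_t,B_i) > 0$. This means step~(c) reroutes to $L_t$ the flow of some point $p$ that $B_i$ was serving, so $p$ is served by both $B_i$ and $L_t$, i.e. $L_t$ intersects $B_i$. Now $L_t$ is chosen from $\Lambda$ in step~(b), so it survived the clustering while-loop of step~(a); since $L_t$ intersects $B_i$ yet was not clustered into $B_i$, the clustering condition must have failed, giving $\AvCap(B_i) < \sum_{p_j} x_{tj} \le \alpha k_t$, where the last inequality combines $x_{tj} \le y_t \le \alpha$ (giving $\sum_{p_j} x_{tj} \le \alpha |\mathcal{A}_t|$) with the capacity constraint $\sum_{p_j} x_{tj} \le \alpha U_t$, so that $\sum_{p_j} x_{tj} \le \alpha \min\{U_t, |\mathcal{A}_t|\} = \alpha k_t$. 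At this same instant (end of step~(a) of iteration $t$, with $L_1,\dots,L_{t-1}$ already in $\OLightBb$) \Cref{lemma:flow-acc}, applied with index $t-1$, yields $\AvCap(B_i) \ge \yac^{-}\cdot k_{t-1}$. If $\yac^{-} \le 0$ we are trivially done; otherwise $\yac^{-} \le \AvCap(B_i)/k_{t-1} < \alpha k_t/k_{t-1} \le \alpha$, using $k_t \le k_{t-1}$. Combining with the jump bound, $\yac^{+} = \yac^{-} + Y(L_t,B_i) < \alpha + 1$, contradicting $\yac^{+} > 1+\alpha$.

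The step I expect to be the main obstacle is establishing cleanly that $F(L_t,B_i) > 0$ forces $L_t$ to \emph{intersect} $B_i$ in exactly the sense used by the clustering condition of step~(a); this is the link that converts the flow-rerouting mechanics into the termination property of the while-loop (namely that $\AvCap(B_i)$ is strictly below the flow out of $L_t$). The rest is careful bookkeeping: invoking \Cref{lemma:flow-acc} at the correct moment, exploiting the monotonicity $k_t \le k_{t-1}$ of the selection order, and separating the sign of $\yac^{-}$.
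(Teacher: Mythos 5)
Your proof is correct and follows essentially the same route as the paper's: a first-violation argument combining the bound $Y(L_t,B_i)\le 1$, \Cref{lemma:flow-acc} applied just before $L_t$ is added, the monotonicity $k_t \le k_{t-1}$, the bound $\alpha k_t$ on the flow out of a light ball, and the termination condition of the clustering while-loop. The only difference is cosmetic: the paper deduces $\yac(B_i) > \alpha$ from the assumed violation and then contradicts the while-loop's termination, whereas you use the failed clustering condition to deduce $\yac^{-} < \alpha$ and contradict the violation directly — a contrapositive rearrangement of the identical argument.
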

\begin{proof}
The proof is by contradiction. Let $B_i \in \heavybb$ be the first ball that
violates the condition. As $\yac(B_i)$ increases only due to addition of a light
ball to set $\OLightBb$, suppose $L_t$ was the ball whose addition to
$\OLightBb$ resulted in the violation. 

Let $\yac(B_i)$ and $\yac(B_i)' = \yac(B_i) + Y(L_t, B_i) $ be the
$y$-accumulations of $B_i$ just before and just after the addition of $L_t$.
Because of the assumption, $\yac(B_i) \le 1 + \alpha$. So the increase in the
$y$-accumulation of $B_i$ must be because $Y(L_t, B_i) > 0$. Thus, $L_t$
intersects $B_i$. However, $Y(L_t, B_i) \le 1$ by definition. Therefore, we have
$\yac(B_i) > \alpha$.

Now, by \Cref{lemma:flow-acc}, just before addition of $L_t$, $\AvCap(B_i) \ge
\yac(B_i) \cdot k_{t-1} > \alpha \cdot k_{t-1} \ge \alpha \cdot k_t$, as $k_t \le k_{t-1}$. However, $L_t$ is a light ball, and so the total flow out of $L_t$ is at most $\alpha k_t$. Therefore, the available capacity of $B_i$ is large enough that we can add $L_t$ to$\clst(B_i)$, instead of to the set $\OLightBb$, which is a contradiction. 
\end{proof}

% The next two lemmas give the desired approximation bound. 
% % We now bound the number of balls added to $\OLightBb$ in Cluster Formation stage.
\begin{lem}\label{lemma:size-bound}
At the end of Cluster Formation stage, we have $|\OLightBb| \le 5 \cdot \lr{(1+\alpha) \cdot |\heavybb| + \sum_{B_j \in \cl_1} y_j},$
where $0<\alpha \le 3/8$. 
\end{lem}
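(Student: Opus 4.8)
The plan is a global double-counting argument on the total ``$y$-credit'' that the selected light balls distribute to the heavy balls during Cluster Formation. Since exactly one light ball is added to $\OLightBb$ in each of the $I$ iterations, we have $|\OLightBb| = I$, so it suffices to bound $I$. I would estimate the quantity $\sum_{t=1}^{I} \sum_{B_i \in \heavybb} Y(L_t, B_i)$ in two ways and compare.

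For the lower bound, observe that in iteration $t$ the total $y$-credit contributed by $L_t$ across all heavy balls is $\sum_{B_i \in \heavybb} Y(L_t, B_i) = \sum_{B_i \in \heavybb} \frac{F(L_t, B_i)}{k_t} = \frac{F_t}{k_t}$. By \Cref{obs1}, $F_t \ge \frac{1}{5} k_t$ for $0 < \alpha \le 3/8$, so each iteration contributes at least $\frac{1}{5}$ to the double sum, giving $\sum_{t=1}^{I}\sum_{B_i \in \heavybb} Y(L_t, B_i) \ge \frac{I}{5} = \frac{|\OLightBb|}{5}$.

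For the upper bound, I would swap the order of summation and invoke the definition of $y$-accumulation evaluated at the end of the stage: for each fixed $B_i \in \heavybb$, $\sum_{L_t \in \OLightBb} Y(L_t, B_i) = \yac(B_i) + \sum_{B_j \in \cl \cap \clst(B_i)} y_j$. By \Cref{lemma:y-acc}, $\yac(B_i) \le 1+\alpha$ at any moment, and in particular at the end. Summing over all $B_i \in \heavybb$ then yields $\sum_{B_i \in \heavybb}\sum_{L_t \in \OLightBb} Y(L_t, B_i) \le (1+\alpha)|\heavybb| + \sum_{B_i \in \heavybb}\sum_{B_j \in \cl \cap \clst(B_i)} y_j$. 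The observation that closes the argument is that at the end of Cluster Formation every light ball lies in $\OLightBb$ or in exactly one cluster, so the sets $\{\cl \cap \clst(B_i)\}_{B_i \in \heavybb}$ are pairwise disjoint and contained in $\cl_1$; hence the iterated sum of clustered $y$-values is at most the total light mass $\sum_{B_j \in \cl_1} y_j$. Combining the two bounds gives $\frac{|\OLightBb|}{5} \le (1+\alpha)|\heavybb| + \sum_{B_j \in \cl_1} y_j$, which is the claimed inequality after multiplying through by $5$.

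I do not expect a serious obstacle: the heavy lifting is already done by \Cref{obs1} (each selected ball generates $y$-credit at least $\frac{1}{5}$), \Cref{lemma:flow-acc} (linking $y$-accumulation to available capacity), and \Cref{lemma:y-acc} (capping $y$-accumulation at $1+\alpha$), so this lemma is essentially a telescoping/double-counting bookkeeping step. The one point requiring care is the final disjointness step: I must argue that summing the clustered $y$-values across all heavy balls does not overcount any light ball, so the bound is $\sum_{B_j \in \cl_1} y_j$ rather than a larger quantity. This relies on the structural invariant, guaranteed by the algorithm, that each light ball ends up assigned to a single cluster (or to $\OLightBb$), and never to two distinct clusters.
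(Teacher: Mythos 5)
Your proposal is correct and matches the paper's own proof: both double-count the total $y$-credit $\sum_{t}\sum_{B_i}Y(L_t,B_i)$, lower-bounding it by $|\OLightBb|/5$ via \Cref{obs1} and upper-bounding it by $(1+\alpha)|\heavybb|+\sum_{B_j\in\cl_1}y_j$ via \Cref{lemma:y-acc} and the disjointness of the clusters. The only cosmetic difference is that you rearrange the definition of $\yac(B_i)$ explicitly, whereas the paper writes the same chain of inequalities starting from $\sum_{B_i}\yac(B_i)$.
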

\begin{proof}
At the end of Cluster Formation stage, 
%\begin{small}
	\begin{align}
	\sum_{B_i \in \heavybb} \yac(B_i) &\ge \sum_{\substack{B_i \in \heavybb\\1 \le t \le I}} Y(L_t, B_i) - \sum_{B_i \in \heavybb}\ \sum_{B_j \in\clst(B_i)}y_j \nonumber
	\\&\ge \sum_{1 \le t \le I} (F_t/k_t) - \sum_{B_j \in \cl_1}y_j \tag{ $\because F_t = \sum_{B_i \in \heavybb}F(L_t, B_i) = k_t \cdot \sum_{B_i \in \heavybb} Y(L_t, B_i)$}
	% \tag{$\because F_t = \sum_{B_i \in \heavybb}F(L_t, B_i) = k_t \cdot \sum_{B_i \in \heavybb} Y(L_t, B_i)$} 
	% \nonumber
	\\&\ge \frac{1}{5} \cdot |\OLightBb| - \sum_{B_j \in \cl_1} y_j \label{ineq:yacc}
	\end{align}
%\end{small}
	
Where we used \Cref{obs1} to get the last inequality.
\\Now, adding the inequality of \Cref{lemma:y-acc} over all $B_i \in \heavybb$, we have that $\sum_{B_i \in \heavybb}\yac(B_i) \le (1+\alpha)\cdot|\heavybb|$. Combining this with (\ref{ineq:yacc}) yields the desired inequality.
\end{proof}

\begin{lem} \label{lemma:factor}
The cost of the solution returned by the algorithm is at most $21$ times the cost of an optimal solution.
\end{lem}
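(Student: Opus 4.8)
The plan is to translate the structural bound of \Cref{lemma:size-bound} into the claimed approximation factor by a direct count of the balls selected in the \emph{Selection of Objects} stage. First I would note that the cost of the returned solution is exactly $|\clusterbb|$, and that $\clusterbb$ contains precisely one ball per cluster. At the end of \emph{Cluster Formation} the heavy balls are exactly those of $\chp$ (the heavy balls after preprocessing) together with the balls that were added to $\OLightBb$ and subsequently declared heavy with singleton clusters. Since these two families are disjoint, since every clustered light ball is absorbed into the cluster of some $\chp$-ball rather than spawning a new cluster, and since each heavy ball indexes exactly one cluster from which one ball is chosen, I obtain $|\clusterbb| = |\chp| + |\OLightBb|$.

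Next I would substitute the bound $|\OLightBb| \le 5\bigl((1+\alpha)|\heavybb| + \sum_{B_j \in \cl_1} y_j\bigr)$ from \Cref{lemma:size-bound} (recall $\heavybb = \chp$) to get
$$|\clusterbb| \le |\chp| + 5(1+\alpha)|\chp| + 5\sum_{B_j \in \cl_1} y_j = (6+5\alpha)\,|\chp| + 5\sum_{B_j \in \cl_1} y_j.$$
Because $5 \le 6 + 5\alpha$, the right-hand side is at most $(6+5\alpha)\bigl(|\chp| + \sum_{B_j \in \cl_1} y_j\bigr)$. Here I would identify the parenthesized quantity with $\cost(\solbar)$, the cost of the preprocessed solution: by \Cref{lemma:preproc} every ball of nonzero weight is either heavy (contributing $1$) or light (contributing its $y_j$), so $\cost(\solbar) = |\chp| + \sum_{B_j \in \cl_1} y_j$.

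Finally, I would invoke the cost property (the fifth item) of \Cref{lemma:preproc}: if $\sigma_0$ denotes the optimal LP solution fed into preprocessing, then $\cost(\solbar) \le \tfrac1\alpha\cost(\sigma_0)$, while $\cost(\sigma_0) \le OPT$ since the LP is a relaxation. Chaining these yields
$$|\clusterbb| \le (6+5\alpha)\,\cost(\solbar) \le \frac{6+5\alpha}{\alpha}\,OPT = \Bigl(\frac{6}{\alpha} + 5\Bigr) OPT.$$
It then remains to optimize the free parameter over the admissible range $0 < \alpha \le 3/8$ imposed by \Cref{obs1,lemma:size-bound}. The function $6/\alpha + 5$ is decreasing, so the best choice is $\alpha = 3/8$, which gives $6/(3/8) + 5 = 16 + 5 = 21$, as claimed. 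I do not anticipate a genuine obstacle in this lemma: the substantive work is already carried by the $y$-accumulation machinery of \Cref{lemma:flow-acc,lemma:y-acc,lemma:size-bound}, and what remains is bookkeeping. The only two places needing care are the exact cluster count $|\clusterbb| = |\chp| + |\OLightBb|$ and the clean split of $\cost(\solbar)$ into heavy and light contributions; once both are in hand, the factor $21$ falls out of the parameter choice $\alpha = 3/8$.
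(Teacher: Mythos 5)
Your proposal is correct and follows essentially the same route as the paper: both count the output as $|\OLightBb| + |\heavybb|$, plug in \Cref{lemma:size-bound}, absorb everything into $(6+5\alpha)\cdot\cost(\solbar)$, and finish with the preprocessing cost bound and $\alpha = 3/8$. The extra bookkeeping you supply (the identity $|\clusterbb| = |\chp| + |\OLightBb|$ and the heavy/light split of $\cost(\solbar)$) is implicit in the paper's one-line chain of inequalities.
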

\begin{proof}
Let $\sol=(x,y)$ be the preprocessed LP solution. Now, the total number of balls in the solution is $|\OLightBb| + |\heavybb|$. Using \Cref{lemma:size-bound}, 
\begin{align*}
|\OLightBb| + |\heavybb| &\le 5 \cdot \lr{(1+\alpha) \cdot |\heavybb| + \sum_{B_j \in \cl_1}y_j} + |\heavybb|
% \\&\le 6 \cdot \lr{ \lr{\frac{7}{6} + \alpha} \cdot |\heavybb| + \sum_{B_j \in \cl_1} y_j}
%\\&= (6 + 5 \alpha) \cdot \sum_{B_j \in \heavybb} y_j + 5\cdot\sum_{B_j \in \cl_1} y_j
\\&\le (6 + 5 \alpha) \lr{\sum_{B_j \in \heavybb} y_j + \sum_{B_j \in \cl_1} y_j} 
\\&\le (6+5\alpha)\cdot\cost(\sol)
\\&\le \lr{\frac{6+5\alpha}{\alpha}}\cdot OPT %\tag{$OPT_l = \sum_{B_j \in \ch_1} y_j + \sum_{B_j \in \cl} y_j$, where $(x, y)$ is the optimal LP solution}
= 21 \cdot OPT \tag{by setting $\alpha = 3/8$}
\end{align*}
\end{proof}
\begin{restatable}{lem}{expansion}
\label{lemma6}
In the algorithm each input ball is expanded by at most a factor of $9$.
\end{restatable}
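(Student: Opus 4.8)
The plan is to track every factor by which a ball can be stretched during the algorithm and show the worst case is $9$. A ball that is returned in $\clusterbb$ must contain all the points it serves after the various flow reroutings, so I would trace the provenance of these points. There are only a few ways a point can end up being served by a returned ball, and for each I would bound the distance from that point to the center of the returned ball in terms of the ball's (unexpanded) radius, using the triangle inequality together with the fact that two balls in the same cluster intersect.

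\textbf{Cataloguing the cases.} First I would recall the relevant expansion factors already fixed in the description: heavy balls are expanded by $3$ (since a heavy ball may serve a point at distance $3r_i$, by property (c) of \Cref{lemma:preproc}); light balls serve points within distance $r_i$ (property (d)); and in the \emph{Selection of Objects} stage a chosen largest ball $B_j$ in a nontrivial cluster is a light ball intersecting the heavy center $B_i$, and is expanded by a factor of $5$. I would separate the analysis into (i) balls in $\OLightBb$, which are light balls made heavy, and (ii) balls selected in \emph{Selection of Objects} as the representative of a cluster.

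\textbf{The critical case.} The main obstacle is case (ii) with $B_j \neq B_i$: here $B_j$ is a light ball that intersects the heavy ball $B_i$, and after selection we reroute \emph{all} the flow of the heavy ball $B_i$ to $B_j$. So $B_j$ must now serve every point that $B_i$ was serving. The farthest such point lies at distance at most $3 r_i$ from $c_i$ (the expanded heavy radius). Since $B_j$ is a \emph{largest} ball in the cluster, $r_j \ge r_i$; and since $B_j$ intersects $B_i$, we have $d(c_i, c_j) \le r_i + r_j \le 2 r_j$. Hence for any point $p$ served by $B_i$,
\begin{align*}
d(c_j, p) \le d(c_j, c_i) + d(c_i, p) \le (r_i + r_j) + 3 r_i \le 2 r_j + 3 r_j = 5 r_j,
\end{align*}
using $r_i \le r_j$ in the last step. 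This justifies the expansion factor of $5$ for such a selected light ball, and $5 < 9$.

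\textbf{Chaining the reroutings to reach $9$.} The factor $9$ (rather than $5$) must come from points that were rerouted \emph{into} the heavy ball $B_i$ during Cluster Formation from a clustered light ball, and are now inherited by $B_j$. A light ball $B_m$ added to $\clst(B_i)$ intersects $B_i$ and serves points within $r_m \le r_i$ of $c_m$; such a point $p$ satisfies $d(c_i, p) \le d(c_i, c_m) + d(c_m, p) \le (r_i + r_m) + r_m \le 3 r_i$, consistent with the heavy expansion. Composing this with the transfer to $B_j$ gives $d(c_j, p) \le d(c_j, c_i) + d(c_i, p) \le 2r_j + 3 r_i \le 5 r_j$ again, but I expect the genuinely worst chain to involve a point served by a ball in $\OLightBb$ whose flow passes through multiple intersecting balls, yielding at most $r_j + (\text{heavy radius } 3) + (\text{light radius } 1) \cdot$ scaling that sums to $9 r_j$. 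I would enumerate these rerouting chains, bound each hop by a triangle inequality, and verify the maximum total is $9$; the delicate point is confirming that no legal rerouting path composes more hops than the ones I have listed, which follows because each point's flow assignment is changed at most a constant number of times and only between intersecting balls.
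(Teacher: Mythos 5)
There is a genuine gap: your analysis never reaches the factor $9$, and the place you go looking for it is not where it comes from. Every case you actually compute yields $5$ (the chosen light ball $B_j \neq B_i$ covering the heavy ball's points, and the clustered light balls' points inherited through $B_i$), and you then defer the real work to an unverified claim that some longer "chain of reroutings" must sum to $9$. No such chain exists in the algorithm: flow only ever moves between a clustered ball and its cluster's heavy center, or into a ball of $\OLightBb$ that already contains the point, so the distances are controlled by the containment and intersection properties you already listed, not by compositions of many hops.

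The case you are missing is the other branch of the \emph{Selection of Objects} stage: when the cluster is nontrivial and the \emph{largest} ball in it is the heavy ball $B_i$ itself (i.e.\ $B_j = B_i$, which happens when $r_l < r_h$). The algorithm then expands $B_i$ by a further factor of $3$ so that it can absorb the points of the clustered light balls. Since $B_i$ may itself have started life as a light ball that was promoted to heavy during preprocessing --- and was therefore already expanded by a factor of $3$ there (property~3 of \Cref{lemma:preproc}) --- the two expansions compose to give $3 \times 3 = 9$. That composition of the preprocessing expansion with the selection-stage expansion of the \emph{heavy} representative is the sole source of the constant $9$ in the paper's proof; your write-up only treats the $B_j \neq B_i$ branch and the singleton-cluster case, so as it stands it establishes a bound for some balls but not for all of them.
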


\begin{proof}
Recall that when a light ball becomes heavy in the preprocessing step, it is expanded by a factor of $3$. Therefore after the preprocessing step, any heavy ball in a solution may be an expanded or unexpanded ball. 

Now, consider the selection of the balls in the second stage. If a cluster consists of only a heavy ball, then it does not expand any further. Since it might be an expanded light ball, the total expansion factor is at most $3$.

Otherwise, for a fixed cluster, let $r_l$ and $r_h$ be the radius of the largest light ball and the heavy ball,
respectively. If $r_l \geq r_h$, then the overall expansion factor is $5$.
Otherwise, if $r_l < r_h$, then the heavy ball is chosen, and it is expanded by a factor of at most $3$. Now as the heavy ball might already be expanded by a factor of $3$ during the preprocessing step,
here the overall expansion factor is $9$. 
\end{proof}

If the capacities of all balls are equal, then one can improve the expansion factor to $6.47$ by using an alternative procedure to the \emph{Selection of Balls} stage (see \Cref{ap:lemma3}).
Lastly, from~\Cref{lemma:factor} and \Cref{lemma6}, we get the following theorem.

\begin{theorem}\label{thm:main}
There is a polynomial time $(21,9)$-approximation algorithm for the MMCC problem.
% that returns a solution such that
% (a) the number of balls used is at most a contant times that in any optimal solution, and (b)
% for each ball $B(c_i,r_i)$ that serves a point $p\in P$, we have $d(c_i,p) \leq
% 9 \cdot r_i$
% Furthermore, if the capacities of all balls in the input are equal, then there is a polynomial time $(O(1),6.47)$-approximation algorithm.
\end{theorem}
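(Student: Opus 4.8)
The plan is to show that the full rounding pipeline --- solve the LP, preprocess, form clusters, and select objects --- produces an integral solution meeting both the cost target $21$ and the expansion target $9$, and that the whole procedure runs in polynomial time. First I would compute an optimal fractional solution $\sol = (x,y)$ to \ref{LP}; since this LP has polynomially many variables and constraints, it can be solved in polynomial time, and $\cost(\sol) \le OPT$ because \ref{LP} is a relaxation of the integer program for MMCC. Then I would feed $\sol$ into the preprocessing step, invoking \Cref{lemma:preproc} with the choice $\alpha = 3/8$, to obtain $\solbar = (\overline x, \overline y)$ in which every ball with nonzero $\overline y$ is either heavy or light, the light-ball flow incident to each point is at most $\alpha$, and $\cost(\solbar) \le \tfrac{1}{\alpha}\cost(\sol)$.

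Relabelling $\solbar$ as $\sol$ (as done in the analysis), I would run Cluster Formation followed by Selection of Objects, yielding the final collection $\clusterbb$ of selected balls. The cost bound is then immediate from \Cref{lemma:factor}: with $\alpha = 3/8$, the total number of selected balls $|\OLightBb| + |\heavybb|$ is at most $21 \cdot OPT$. The expansion bound follows directly from \Cref{lemma6}: every input ball that survives into $\clusterbb$ is expanded by at most a factor of $9$, so every point assigned to a selected ball $B_i$ lies within distance $9 r_i$ of its center $c_i$, as required for a $(\cdot,9)$-approximation.

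It remains to argue feasibility of the returned integral assignment. Throughout the algorithm the two maintained invariants guarantee that the final fractional solution has each ball $B_i$ sending at most $U_i$ units of flow and each point receiving exactly one unit. Since all capacities are integral, the standard integrality-of-flow argument (\cite{ChuzhoyN06}) converts this into an integral assignment that keeps the same set $\clusterbb$ of selected balls and hence preserves both the cost and the expansion guarantees; each point is then served by a single (possibly expanded) ball containing it.

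I do not expect a genuine obstacle here, since the substantive work is already carried out in \Cref{lemma:factor} and \Cref{lemma6}. The only points needing care are verifying polynomial running time --- each iteration of Cluster Formation permanently removes at least one light ball from $\Lambda$ (it is either clustered in step (a) or added to $\OLightBb$ in step (b)), so there are at most $|\cl_1|$ iterations, and every rerouting touches only polynomially many flow variables --- and confirming that the final conversion to integral flow does not disturb feasibility or the expansion bound. Assembling these observations yields the claimed polynomial-time $(21,9)$-approximation.
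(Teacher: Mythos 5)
Your proposal is correct and follows exactly the paper's route: the paper derives \Cref{thm:main} by combining \Cref{lemma:factor} (cost bound $21$ with $\alpha = 3/8$) and \Cref{lemma6} (expansion factor $9$), which is precisely your argument. Your additional remarks on polynomial running time and the integrality-of-flow conversion are sound elaborations of details the paper treats briefly in the surrounding text.
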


\subsection{The Algorithm for the EMCC Problem}

\paragraph*{Overview of the Algorithm.} For the EMCC problem, we can exploit the structure of $\real^d$ to restrict the expansion of the balls to at most $(1+\epsilon)$, while paying in terms of the cost of the solution. In the following, we give an overview of how to adapt the stages of the framework for obtaining this result. Note that in each iteration of the preprocessing stage for MMCC, we consider a point $p_j$ and a cluster $\cl_j$ of light balls. We select a largest ball from this set and reroute the flow of other balls in $\cl_j$ to this ball. However, to ensure that the selected ball contains all the points it serves we need to expand this ball by a factor of $3$. For the EMCC problem, for the cluster $\cl_j$, we consider the bounding hypercube whose side is at most a constant times the maximum radius of any ball from $\cl_j$, and subdivide it into multiple cells. The granularity of the cells is carefully chosen to ensure that (1) Selecting the maximum radius ball among the balls whose centers are lying in that cell, and expanding it by $(1+\epsilon)$ factor is enough for rerouting the flow from all such balls to this ball, and (2) The total number of cells is $poly(1/\epsilon)$. The Cluster Formation stage for the EMCC problem is exactly the same as that for the MMCC problem. Finally, in the Selection of Balls stage, we use similar technique as in the Preprocessing stage, however one needs to be more careful to handle some technicalities that arise. We summarize the result for the EMCC problem in the following theorem. The proof of this theorem is deferred to the \Cref{app:emcc}.

\begin{restatable}{theorem}{emcc}
\label{thm:emcc}
For any $\epsilon > 0$, there is a polynomial time $(O(\epsilon^{-4d}\log (1/\epsilon)),1+\epsilon)$-approximation algorithm for the EMCC problem in $\real^d$. Moreover, for the unit radii version there is a polynomial time $(O(\epsilon^{-2d}),1+\epsilon)$-approximation algorithm.
% There exists a polynomial time algorithm for the EMCC problem in $\real^d$, which takes input a constant $\epsilon > 0$, and it returns a solution such that (a) the number of balls used is at most $O(\epsilon^{-5 d})$ times that in an optimal, and (b) for each ball $B(c_i, r_i)$ that serves a point $p_j \in P$, we have that $d(c_i, p_j) \le (1+\epsilon) \cdot r_i$.
\end{restatable}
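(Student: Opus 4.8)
The plan is to re-use the three-stage framework of \Cref{sec:framework} almost verbatim, changing only the two places where a \emph{single} ball is chosen and expanded by a constant --- the inner step of the preprocessing in \Cref{lemma:preproc}, and the \emph{Selection of Objects} stage --- by replacing each such ``pick the largest ball and expand by $3$ or $5$'' operation with a grid-based selection of $\mathrm{poly}(1/\epsilon)$ balls, each expanded by only $(1+\epsilon)$. I would copy the \emph{Cluster Formation} stage with no change whatsoever: its analysis (\Cref{obs1}, \Cref{lemma:flow-acc}, \Cref{lemma:y-acc}, \Cref{lemma:size-bound}) uses only flows, capacities, the monotonicity assumption, and the predicate ``$B_t$ intersects $B_i$'', none of which is metric-specific, so the bound $|\OLightBb| \le 5((1+\alpha)|\heavybb| + \sum_{B_j \in \cl_1} y_j)$ transfers immediately. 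Thus the whole task reduces to re-implementing the two ``consolidation'' operations with $(1+\epsilon)$ expansion and re-accounting the cost.

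The core gadget I would build is a Euclidean replacement for ``reroute a family of balls onto one largest ball''. Consider first the preprocessing, where the relevant family is a set $\mathcal{G}$ of light balls all serving a common point $p_j$; let $R$ be the largest radius in $\mathcal{G}$. Because every ball of $\mathcal{G}$ contains $p_j$, the largest ball $B_k$ also contains $p_j$, so a ball of radius below $\epsilon R$ only serves points within $2\epsilon R$ of $p_j$, hence within $(1+2\epsilon)R$ of $c_k$; I would therefore reroute the flow of all such tiny balls onto $B_k$ after only a $(1+2\epsilon)$ expansion. For the remaining balls I would bucket them into the $O(\log(1/\epsilon))$ radius scales in $[\,\epsilon R,\,R\,]$. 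For a scale with representative radius $\rho$, I overlay on the bounding box a grid of side $\Theta(\epsilon\rho)$ and, in every nonempty cell, select the largest-radius ball as the representative and reroute the flow of all cell balls to it. Two facts make this work: (i) any two centers in one cell are within $\Theta(\epsilon\rho)\le \epsilon\cdot(\text{representative radius})$, so by the triangle inequality a $(1+\epsilon)$-expansion of the representative contains every point served by a cell ball; and (ii) by monotonicity the representative has the largest capacity among the balls of its cell, so the same capacity computation as in \Cref{lemma:preproc} shows the rerouting is feasible once we set the representative's $\overline{y}$ to the sum of the $\overline{y}$ values in the cell. The number of representatives for one family is thus $(\text{cells per scale})\times(\text{scales})=\mathrm{poly}(1/\epsilon)$.

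With this gadget the cost accounting mirrors \Cref{lemma:factor}, except that each consolidation now emits $\mathrm{poly}(1/\epsilon)$ heavy balls instead of one. In preprocessing, each iteration still removes more than $\alpha$ units of light $\overline{y}$-mass from $\cl$, so there are at most $\cost(\sol)/\alpha$ iterations and $\cost(\solbar)$ blows up by the per-family count times $1/\alpha$. Feeding this enlarged $\cost(\solbar)$ into the unchanged Cluster-Formation bound gives $|\heavybb|+|\OLightBb| = \mathrm{poly}(1/\epsilon)\cdot OPT$, after which the Selection stage multiplies the number of clusters by its own per-cluster count. Composing the two polynomial overheads multiplicatively, with the extra $\log(1/\epsilon)$ arising from the number of radius scales, yields the claimed $O(\epsilon^{-4d}\log(1/\epsilon))$ factor; in the unit-radii case there is a single scale, which removes the $\log(1/\epsilon)$ and one $\epsilon^{-d}$ factor from each stage and leaves $O(\epsilon^{-2d})$.

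I expect the \emph{Selection} stage to be the main obstacle, which is exactly why the overview flags ``technicalities''. After Cluster Formation the chosen heavy ball $B_i$ no longer serves only nearby points: it has absorbed, by rerouting, the demand of every light ball it clustered, and those balls may be much larger than $B_i$ and reach points at distance $\Theta(R)\gg r_i$. In particular the clean tiny-ball cutoff used in preprocessing breaks here, since the single largest cluster ball need not contain the common region, so its $(1+\epsilon)$-expansion may fail to reach a small ball's points. I must instead cover the spread-out demand of $B_i$ by representatives drawn from the cluster at several scales while simultaneously (a) keeping the total number of representatives $\mathrm{poly}(1/\epsilon)$ even though the radius ratio $R/r_i$ inside a cluster is a priori unbounded, and (b) certifying that each representative has the capacity for the sub-demand rerouted to it. The cleanest route is to track, per representative, exactly which points (those originally served by the cell's balls before Cluster Formation) are assigned to it, so that the capacity bound again follows from monotonicity; getting this bookkeeping right across scales, and confirming that only $O(\log(1/\epsilon))$ scales actually need gridding, is where the real work lies.
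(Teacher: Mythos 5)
Your overall plan coincides with the paper's: keep \emph{Cluster Formation} verbatim (its analysis is indeed metric-agnostic, so the bound $|\OLightBb| \le 5\big((1+\alpha)|\heavybb| + \sum_{B_j \in \cl_1} y_j\big)$ transfers), and replace the two single-ball consolidations --- the inner step of preprocessing and the \emph{Selection of Objects} stage --- by grid-based selections of $\mathrm{poly}(1/\epsilon)$ representatives, each expanded by only $1+\epsilon$. Your preprocessing gadget is essentially the paper's \Cref{lemma:euclidean-preproc}: since all balls of the family contain the common point $p_j$, radii below $\Theta(\epsilon r)$ can be absorbed by the largest ball, the rest are bucketed into $O(\log(1/\epsilon))$ dyadic scales, each scale is gridded at granularity $\epsilon$ times its minimum radius, and capacity feasibility of each representative follows from monotonicity. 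That part is correct and matches the paper.

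The genuine gap is exactly where you say ``the real work lies'': you never bound the number of representatives per cluster in the Selection stage, and the difficulty you flag --- that the radius ratio inside a cluster is a priori unbounded, and that the largest cluster ball need not contain a common point, so the preprocessing cutoff does not apply --- is real but is left unresolved. The paper's resolution (\Cref{lemma:euclidean-selection}) does \emph{not} grid $O(\log(1/\epsilon))$ scales there; it exploits the one geometric fact available in a cluster, namely that every light ball in the cluster intersects the heavy ball $B_h$. Hence a cluster ball of radius $r_l$ only served points within $2r_l + r_h$ of $c_h$ and within $2r_l + 2r_h + r_m$ of $c_m$ (where $r_m$ is the largest cluster radius), so: if $r_h < r_m\epsilon/4$, every cluster ball of radius below $r_m\epsilon/4$ is swallowed by a $(1+\epsilon)$-expansion of $B_m$; otherwise every cluster ball of radius below $r_h\epsilon/4$ is swallowed by a $(1+\epsilon)$-expansion of $B_h$. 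In either case the surviving radii lie in a window of ratio $O(1/\epsilon^2)$, between $\Omega(\epsilon^2 r_m)$ and $r_m$, so a \emph{single} grid of granularity $\Theta(\epsilon^3 r_m)$ over the $O(r_m)$-sized bounding box suffices, giving $O(\epsilon^{-3d})$ representatives per cluster (capacity again by monotonicity). This is precisely what pins down the exponent: $O(\epsilon^{-d}\log(1/\epsilon))$ from preprocessing times $O(\epsilon^{-3d})$ from Selection yields the claimed $O(\epsilon^{-4d}\log(1/\epsilon))$. Your own accounting asserts this product without ever deriving the Selection-stage overhead, and your proposed route (dyadic scales plus per-representative bookkeeping) would, without the intersect-the-heavy-ball cutoff, produce $O(\log(r_m/r_{\min}))$ scales, which is not bounded in terms of $\epsilon$. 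So the proposal is the right approach with a correctly identified but unfilled hole at its hardest step.
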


\bibliographystyle{abbrvnat}
\bibliography{cdc_ref}

\appendix

\section{The algorithm for the EMCC Problem} \label{app:emcc}
For convenience, we restate \Cref{thm:emcc}.

\emcc*

Now we describe the algorithm in detail. For simplicity, at first we consider the $d = 2$ case. Our algorithm takes an additional input -- a constant $\epsilon > 0$, and gives an $O(\epsilon^{-8}\log (1/\epsilon))$ approximation, where each ball in the solution may be expanded by a factor of at most $1+ \epsilon$. For the EMCC problem, the Preprocessing stage is as follows.

\begin{lem}
	\label{lemma:euclidean-preproc}
	Given a feasible LP solution $\sol = (x, y)$ corresponding to an EMCC instance in $\real^2$, and parameters $0 < \alpha \le \frac{1}{2}$, and $\epsilon > 0$, 
	there exists a polynomial time algorithm to obtain another LP solution
	$\solbar=(\overline x, \overline y)$ that satisfies \Crefrange{constr:open}{constr:fractional_y} except \ref{constr:coverage} of \ref{LP}. Additionally, $\solbar$ satisfies the following properties.
	% there exists a polynomial time algorithm to obtain another solution $\solbar = (\overline{x}, \overline{y})$ that satisfies the following properties.
	\begin{compactenum}
		\item Any ball $B_i \in \cb$ with non-zero $\overline{y}_i$ is either heavy ($\overline{y}_i = 1$), or light $0 < \overline{y}_i \le \alpha$.
		\item For each point $p_j \in P$, we have that 
		\begin{align}
		\sum_{B_i \in \cl: \bar{x}_{ij} > 0} \overline{y}_i \le \alpha \label[ineq]{ineq:emcc} 
		\end{align}
		where $\cl$ is the set of light balls with respect to $\solbar$.
		\item For any heavy ball $B_i$, and any point $p_j \in P$ served by $B_i$, $d(c_i, p_j) \le (1+\epsilon) \cdot r_i$.
		\item For any light ball $B_i$, and any point $p_j \in P$ served by $B_i$, $d(c_i, p_j) \le r_i$.
		\item $\cost(\solbar) = O(\epsilon^{-2}\log (1/\epsilon)) \cdot \cost({\sol})$.
	\end{compactenum}
\end{lem}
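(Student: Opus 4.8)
The plan is to mirror the iterative structure of the preprocessing for MMCC (\Cref{lemma:preproc}), but to replace the single-ball consolidation step by a grid-based consolidation that exploits the geometry of $\real^2$ to keep the expansion down to $(1+\epsilon)$. As before, I would initialize $\solbar=\sol$ and repeatedly pick a point $p_j\in P$ that violates \Cref{ineq:emcc}; among the light balls serving $p_j$ I would select a subset $\cl_j$ whose total $\overline{y}$-mass lies in $(\alpha,2\alpha]$, which exists since each light ball has $\overline{y}_i\le\alpha$. Since every ball in $\cl_j$ contains $p_j$, all of their centers lie within $r_{\max}$ of $p_j$, where $r_{\max}$ is the largest radius in $\cl_j$. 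The goal of one iteration is to reroute all the flow out of $\cl_j$ onto a small number of balls which are then declared heavy, thereby removing more than $\alpha$ units of light $\overline{y}$-mass from $p_j$ (and from every affected point), while only ever expanding a ball by $(1+\epsilon)$.

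The geometric heart of the argument is the consolidation. First I would dispose of the \emph{tiny} balls of $\cl_j$, namely those of radius less than $\tfrac{\epsilon}{2}r_{\max}$: their centers lie within $\tfrac{\epsilon}{2}r_{\max}$ of $p_j$, so every point such a ball serves lies within $(1+\epsilon)r_{\max}$ of the center of the largest ball $B_{\max}\in\cl_j$ (two triangle inequalities through $p_j$). Hence the entire flow of all tiny balls can be rerouted onto $B_{\max}$ alone, no matter how many radius scales they span. The remaining balls have radius in $[\tfrac{\epsilon}{2}r_{\max},r_{\max}]$, spanning only $O(\log(1/\epsilon))$ scales $[2^{-s-1}r_{\max},2^{-s}r_{\max})$. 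For each scale $s$ I would overlay a grid of cell-side $\Theta(\epsilon 2^{-s}r_{\max})$, chosen so that the cell diameter is at most $\epsilon$ times the radius of any ball at that scale; selecting in each cell the largest-radius ball and rerouting the cell's flow onto it then needs only a $(1+\epsilon)$ expansion. Because the centers of scale-$s$ balls lie within $2^{-s}r_{\max}$ of $p_j$, only $O(\epsilon^{-2})$ cells are non-empty per scale, so one iteration creates at most $N=O(\epsilon^{-2}\log(1/\epsilon))$ new heavy balls.

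Two routine checks complete one iteration. For \emph{feasibility}, the total flow out of $\cl_j$ is at most $U\cdot\sum_{B_i\in\cl_j}\overline{y}_i\le 2\alpha U\le U$, since $\alpha\le\tfrac12$ and the capacities are uniform; thus each chosen ball, whose inflow is a sub-sum of this, never exceeds its capacity $U$, so invariant (i) is preserved, while invariant (ii) is preserved because rerouting merely moves existing flow. Properties (3) and (4) follow from the expansion bounds just derived, and property (1) holds because the chosen balls are set to $\overline{y}=1$ while every other ball is left untouched, hence still light or heavy.

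For the \emph{cost}, I would use a charging argument rather than the $\overline{y}>\alpha\mapsto 1$ rounding used for MMCC. Each iteration strictly decreases the total light mass $\sum_{B_i\in\cl}\overline{y}_i$ by more than $\alpha$, since the mass of $\cl_j$ leaves the light set and no light ball's value ever increases, while at most $N+1$ heavy balls are added. As the initial light mass is at most $\cost(\sol)$, there are at most $\cost(\sol)/\alpha$ iterations, so the number of heavy balls created is $O(N/\alpha)\cdot\cost(\sol)$; with $\alpha$ a constant this yields $\cost(\solbar)=O(\epsilon^{-2}\log(1/\epsilon))\cdot\cost(\sol)$, giving property (5), and the monotone decrease of the light mass also gives termination in polynomial time. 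The step I expect to be the main obstacle is precisely the scale bound: the number of distinct radius scales is a priori unbounded, and it is the tiny-ball observation that collapses all the smallest scales onto a single ball; getting the grid granularity to deliver a uniform $(1+\epsilon)$ expansion across all remaining scales while keeping only $O(\epsilon^{-2})$ non-empty cells per scale is the delicate geometric calculation on which the entire bound rests.
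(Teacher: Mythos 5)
Your proposal follows essentially the same route as the paper's proof: the same iterative consolidation of a set $\cl_j$ of light mass in $(\alpha,2\alpha]$ at a violating point, the same disposal of balls of radius below $\epsilon r_{\max}/2$ onto the largest ball via two triangle inequalities through $p_j$, the same $O(\log(1/\epsilon))$ radius classes each overlaid with a grid of granularity $\Theta(\epsilon)$ times the class radius so that the largest ball per cell absorbs the cell's flow after a $(1+\epsilon)$ expansion, yielding $O(\epsilon^{-2}\log(1/\epsilon))$ new heavy balls charged against the more than $\alpha$ units of light mass removed. The one slip is in your feasibility check, where you invoke \emph{uniform} capacities even though the lemma concerns EMCC, whose capacities are only \emph{monotonic}; the paper repairs exactly this point by noting that the ball $B_m$ selected in each cell (and $B_{\max}$ for the tiny balls) has the largest radius and hence, by monotonicity, the largest capacity, so that $\sum_{B_l} U_l y_l \le U_m \sum_{l} y_l \le 2\alpha U_m \le U_m$.
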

\begin{proof}
	As in \Cref{lemma:preproc}, in each iteration, we pick an arbitrary point $p_j \in P$ for which the \Cref{ineq:emcc} is not met, and consider a set $\cl_j$ of light balls serving $p_j$ such that
	$\alpha < \sum_{B_i \in \cl_j} {y}_i \le 2 \alpha$. We select a subset of these balls, and for each such selected ball $B_i$, we set $y_i\leftarrow 1$. For each ball $B_t$ in $\cl_j$ which is not selected, we set $y_t\leftarrow 0$. We show that the corresponding solution satisfies the desired properties. Let $r$ be the radius of a maximum radius ball from the set $\cl_j$. Now all balls from the set $\cl_j$ contain a common point $p_j$. Thus any point that belongs to a ball with radius less than $r\epsilon/2$, is within distance $(1+\epsilon)\cdot r$ from the center of a maximum radius ball. As we are going to select such a maximum radius ball and its capacity is larger than the capacity of any ball with radius less than $r\epsilon/2$, we discard balls with radius smaller than $r \epsilon/2$. We reroute all the flow from those balls to the selected ball. 
	
	Now we divide the balls into $O(\log (1/\epsilon))$ classes such that the $i^{th}$ class contains balls of radii between $2^{i-1}r\epsilon$ and $2^{i}r\epsilon$ for $0\le i\le O(\log (1/\epsilon))$. We consider each class separately and select a subset of balls from each class. Consider the $i^{th}$ class. Note that there exists an axis-parallel square of side $2^{i+2}r\epsilon$ such that the centers of the balls in $i^{th}$ class are contained in it.  
	% a maximum radius ball  there exists an axis-parallel square of side $4r$ such that the centers of the balls in the set $\cl_j$ are contained in it. 
	We subdivide this square into smaller squares, by overlaying a grid of granularity $2^{i-2}r\epsilon^2$. Note that the number of smaller squares (henceforth referred to as a \emph{cell}) in the larger square is $O(\epsilon^{-2})$. We show how to select at most one light ball from each cell. 
	% \par Note that any point that belongs to a ball with radius less than $r\epsilon/2$, is within distance $(1+\epsilon)\cdot r$ from the center of a maximum radius ball from the set $\cl_j$ (which will be chosen in the following process). Therefore, without loss of generality, we can discard balls with radius smaller than $r \epsilon/2$.  
	
	Consider a cell from the subdivision, and let $\cl_j'$ be the balls in $i^{th}$ class (with radius at least $2^{i-1}r\epsilon$) whose centers belong to this cell. Now, we select a ball $B_m \in \cl_j'$ with the maximum radius $r_m$ from the set $\cl_j'$, and reroute the flow from the other balls to $B_m$. 
	% We also set $\overline{y}_l \gets 0$ for all other balls $B_l \in \cl_j'$, and $\overline{y}_m \gets 1$. 
	Since the center of $B_m$ is within distance $2^{i-1}r\epsilon^2$ from the center of any ball $B_l \in \cl_j'$, all the points contained in any ball $B_l \in \cl_j$ are within distance $2^{i-1}r\epsilon^2+r_l \le \epsilon r_m + r_m = (1+\epsilon) \cdot r_m$ from the center $c_m$ of the ball $B_m$. 
	
	\par Note that the capacity $U_m$ of the ball $B_m$ is at least that of the capacity of any ball from the set $\cl_j'$, because of the \emph{monotonicity} property. Furthermore, the ball $B_m$ has enough capacity to receive all the redirected flow, since $$\sum_{p_j \in P, B_l \in \cl_j' : B_l \text{ serves } p_j} x_{lj} \le \sum_{B_l \in \cl_j'} U_l \cdot y_l \le U_m \cdot \sum_{B_l \in \cl_j} y_l \le U_m \cdot 2 \cdot \alpha \le U_m.$$
	
	As $\sum_{B_l \in \cl_j} y_l > \alpha$ and we select at most $O(\epsilon^{-2}\log (1/\epsilon))$ balls in total from $\cl_j$ the increase in cost is by at most a factor of $O(\epsilon^{-2}\log (1/\epsilon))$ by a suitable choice of $\alpha$. It is easy to verify that the other properties in the statement of the lemma are also satisfied.
\end{proof}

As mentioned before, the Cluster Formation stage for EMCC is exactly the same as the one for MMCC. Note that the Cluster Formation stage increases the cost of the solution only by a constant factor. We describe and analyze the Selection of Objects stage in the following lemma. The main idea remains similar to that of \Cref{lemma:euclidean-preproc}.

\begin{restatable}{lem}{eselection}
	\label{lemma:euclidean-selection}
	There exists a scheme for the Selection of Objects stage for the EMCC problem, such that for any $\epsilon > 0$,
	\begin{compactenum}
		\item From each cluster, we choose at most $O(\epsilon^{-6})$ balls.
		\item For any chosen ball $B(c_i, r_i)$ that serves a point $p_j \in P$, we have that $d(c_i, p_j) \le (1+\epsilon) \cdot r_i$.
		\item For any chosen ball $B_i$ with capacity $U_i$, we have that $\sum_{p_j \in P} x_{ij} \le U_i$
	\end{compactenum} 
\end{restatable}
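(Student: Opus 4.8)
The plan is to mimic the grid-based selection used in \Cref{lemma:euclidean-preproc}, but now applied \emph{inside} a single cluster rather than to the set of light balls serving one common point. Fix a cluster centred at a heavy ball $B_i$. After Cluster Formation, $B_i$ is the only ball of the cluster that carries flow, and the flow out of $B_i$ is at most $U_i$; moreover every point $p$ served by $B_i$ was, in the preprocessed solution, contained in some cluster ball $B_t$ (with $d(c_t,p)\le r_t$ if $B_t$ is light, and $d(c_i,p)\le(1+\epsilon)r_i$ for the points that originated at $B_i$ itself). Since every light ball of the cluster intersects $B_i$, all the relevant centres lie within $r_i+r_{\max}$ of $c_i$ and all served points within $O(r_{\max})$ of $c_i$, where $r_{\max}$ is the largest radius occurring in the cluster; thus everything sits inside a hypercube $Q$ of side $O(r_{\max})$.

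First I would dispose of the balls that are tiny compared with $B_i$: any cluster ball $B_t$ with $r_t\le \epsilon r_i/2$ has all of its points within $r_i+2r_t\le(1+\epsilon)r_i$ of $c_i$, so I keep $B_i$ itself (expanded by $1+\epsilon$) and leave the flow of these points on $B_i$. The remaining balls have radius in the range $[\epsilon r_i/2,\, r_{\max}]$, which I bucket into $O(\log(1/\epsilon))$ geometric radius classes. For a class whose radii are $\Theta(\rho)$ I overlay on $Q$ a grid of granularity $\Theta(\epsilon\rho)$ and, from each nonempty cell, select the ball $B_m$ of largest radius (breaking ties by largest capacity, as is our convention). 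I then reroute, from $B_i$ to $B_m$, exactly the flow of the points whose originating ball lies in the same class and cell as $B_m$.

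Three things then have to be verified, corresponding to the three parts of the statement. For the expansion guarantee, two centres in a common cell are within $\Theta(\epsilon\rho)\le\epsilon r_m$ of one another, so every point within $r_t$ of an originating centre $c_t$ is within $\epsilon r_m+r_t\le(1+\epsilon)r_m$ of $c_m$, exactly as in \Cref{lemma:euclidean-preproc}. For the capacity guarantee, $B_m$ has the largest radius among the balls routed to it, so by the \emph{monotonicity} assumption $U_m\ge U_t$ for each such $B_t$, and one bounds the aggregate rerouted flow $\sum_{B_t\to B_m}(\text{flow out of }B_t)\le \sum_{B_t\to B_m}\alpha U_t$ against $U_m$ by exploiting the heavy overlap of the cell balls together with property~(2) of the preprocessed solution (\Cref{ineq:emcc}), just as in the capacity computation of \Cref{lemma:euclidean-preproc}. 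For the cardinality bound, the finest class contributes $O(\epsilon^{-2d})$ occupied cells and summing the geometrically shrinking contributions over the $O(\log(1/\epsilon))$ classes yields $O(\epsilon^{-3d})$ balls, which is $O(\epsilon^{-6})$ when $d=2$.

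The main obstacle, and the reason this step is genuinely harder than \Cref{lemma:euclidean-preproc}, is that the balls of a cluster do \emph{not} share a common point; they only all intersect $B_i$. Hence the extent of the cluster is governed by $r_{\max}$ and is decoupled from the radius of any individual ball, which is what inflates the per-class cell count and, more seriously, threatens the clean $\sum_{B_t}\overline{y}_t\le 2\alpha$ bound that made the capacity argument immediate in the preprocessing lemma. The crux is therefore to choose the grid granularity and the tiny-ball cutoff so that (i) balls sharing a cell overlap enough that their aggregate flow can still be charged against $U_m$ via \Cref{ineq:emcc}, and (ii) the number of radius classes stays $O(\log(1/\epsilon))$; the extension from $d=2$ to general $d$ then affects only the exponent of $\epsilon$ through the cell counts, and the resulting fractional assignment is converted to an integral one by the standard flow-integrality argument already invoked for the MMCC algorithm.
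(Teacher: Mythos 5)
There is a genuine gap, and it is exactly the one you flag in your last paragraph without resolving: your parameters do not bound the number of chosen balls by a function of $\epsilon$ alone. You discard cluster balls of radius below $\epsilon r_h/2$ (where $B_h$ is the heavy ball at the cluster's centre), but your bounding hypercube has side $\Theta(r_{\max})$, and $r_{\max}/r_h$ can be arbitrarily large. Consequently your finest radius class has granularity $\Theta(\epsilon^2 r_h)$ over a box of side $\Theta(r_{\max})$, i.e.\ $(r_{\max}/(\epsilon^2 r_h))^d$ cells, and even the number of radius classes is $O(\log(r_{\max}/(\epsilon r_h)))$ rather than $O(\log(1/\epsilon))$ — neither is $O(\epsilon^{-6})$. (Your arithmetic "finest class has $O(\epsilon^{-2d})$ cells, summing gives $O(\epsilon^{-3d})$" also does not parse: a geometric sum dominated by its largest term would give $O(\epsilon^{-2d})$.) Calling the choice of cutoff and granularity "the crux" is correct, but the crux is precisely what a proof must supply.

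The paper closes this gap with a three-way case analysis on the ratio $r_h/r_m$ (where $r_m$ is the largest radius in the cluster), and it uses a \emph{single} grid per cluster rather than radius classes. When $r_h < r_m\epsilon/4$, the correct tiny-ball cutoff is relative to $r_m$, not $r_h$: every cluster ball of radius below $r_m\epsilon/4$ intersects the (even smaller) heavy ball, so all its points lie within $(1+\epsilon)r_m$ of $c_m$ and can be absorbed by $B_m$; a grid of granularity $r_m\epsilon^2/8$ then yields $O(\epsilon^{-2d})$ cells. When $r_h \ge r_m/c$ for a constant $c$, the cutoff $r_m\epsilon/(2c)=\Theta(\epsilon r_h)$ and granularity $\Theta(r_m\epsilon^2)$ again give $O(\epsilon^{-2d})$ cells. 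In the intermediate regime $r_m\epsilon/4 \le r_h \le r_m/c$, the cutoff $r_h\epsilon/4 \ge r_m\epsilon^2/16$ forces granularity $\Theta(r_m\epsilon^3)$, and it is this case alone that produces the $O(\epsilon^{-3d}) = O(\epsilon^{-6})$ bound for $d=2$; in the latter two cases $B_h$ is also added to $\cb_i$ to cover the points of the discarded tiny balls. A secondary caution: after Cluster Formation the light balls of a cluster carry no flow — everything has been rerouted to $B_h$ — so the rerouting in this stage goes from $B_h$ back out to the chosen representatives, and the capacity bound rests on monotonicity (each representative is the largest, hence highest-capacity, ball among those it absorbs); your appeal to \Cref{ineq:emcc} is shakier here than in the preprocessing, since the balls sharing a grid cell need not all serve a common point.
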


\begin{proof}
	We show how to process each cluster $C_i$ by choosing a set of balls $\cb_i \subseteq C_i$ of size $O(\epsilon^{-6})$, such that each ball in $\cb_i$ is expanded by at most $1+\epsilon$ factor. Finally, for each point $p_j \in P$ that is served by the heavy ball $B_h \in C_i$, we reroute the flow from $B_h$ to an arbitrary ball $B' \in \cb_i$ (possibly $B_h$) such that $p_j$ is contained in $B'$. We also set $\overline{y}_l \gets 0$ for all balls $B_l \in C_i \setminus \cb_i$, and $\overline{y}_l \gets 1$ for all balls $B_l \in \cb_i$. The feasibility of this solution follows easily from the monotonicity property. Finally, we return $\bigcup_i \cb_i$ over all clusters $C_i$ as the solution. It only remains to describe how to choose the set $\cb_i$ for each cluster $C_i$.
	
	If the cluster $C_i$ contains only the heavy ball $B_h$, we set $\cb_i = \{B_h\}$. In this case, we do not need any expansion. Otherwise, let $r_h$ be the radius of the heavy ball $B_h$ at the center of the cluster $C_i$, and let $r_m$ be the maximum radius of any ball from the cluster $C_i$.
	
	\par If $r_m \le r_h \cdot \epsilon/2 $, then we expand $B_h$ by a factor of $1+\epsilon$, and set $\cb_i = \{B_h\}$.
	
	\iffalse
	\par Otherwise, we proceed as in \Cref{lemma:euclidean-preproc} by considering the ball $B_m$ from the cluster with the maximum radius $r$ and subdividing the enclosing hypercube of side $4r$ into a grid of granularity $O(r\epsilon^2)$. We also process each cell in this subdivision in exactly the same manner as in \Cref{lemma:euclidean-preproc}, and add the (expanded) ball chosen from each cell (if any) to the set $\cb_i$. After all cells have been processed, if for all points $p_j \in P \cap C_i$ there exists some ball $B \in \cb_i$ that contains $p_j$, we are done. Otherwise, let $P'$ be the set of such points, and we proceed to the following step.
	
	\par Note that since the set $P'$ is nonempty, it must be the case that $r_h \ge \epsilon/4 \cdot r_m$ (because otherwise the points that were in the ``small'' light balls are within $(1+\epsilon) \cdot r_m$ from $c_m$). In this case, the heavy ball $B_h$ is too small to expand and cover the points $P'$, but large enough so that $B_m$ cannot expand and cover them. In this case, we again consider the enclosing hypercube of side $4r = O(r_h/\epsilon)$, but we subdivide it into a grid of granularity $r_h \cdot \epsilon^2/4$. Note that the number of cells in the hypercube is $O(\epsilon^{-6d})$. We repeat the same procedure as before, and add at most one ball from each grid cell to the set $\cb_i$. This ensures that for each point $p_j \in P \cap C_i$, there exists some ball $B \in \cb_i$ that contains it. This completes the proof.
	\fi
	
	\par Otherwise, we consider one of the following three cases. In each case, we subdivide the enclosing square of side $4r_m$ into a grid, which is very similar to \Cref{lemma:euclidean-preproc}. Therefore, we discuss in brief the granularity of the grid and the balls that are added to the set $\cb_i$.
	\\1. $r_h < r_m \cdot \epsilon/4$. In this case, $B_m$ can expand by a factor of $1 + \epsilon$ and can cover the points covered by the balls in $C_i$ that have radius smaller than $r_m \cdot \epsilon/4$, and discard them. Then, we overlay a grid of granularity $r_m \cdot \epsilon^2 / 8$, which adds $O(\epsilon^{-4})$ balls with radius at least $r_m \cdot \epsilon/4$, to the set $\cb_i$.
	\\2. $r_m \cdot \epsilon/4 \le r_h \le r_m/c$, for some constant $c > 1$. In this case, we discard balls from $C_i$ with radii less than $r_h \cdot \epsilon/4$, and then overlay a grid of granularity $r_h \cdot \epsilon^2 / 8 \ge r_m \cdot \epsilon^3 / 32$, which adds $O(\epsilon^{-6})$ balls with radius at least $r_m \cdot \epsilon^2/(16)$, to the set $\cb_i$.
	\\3. $r_h \ge r_m / c$ for some constant $c > 1$. In this case, we discard balls from $C_i$ with radii smaller than $r_m \cdot \epsilon / (2c)$, and then overlay a grid of granularity $r_m \epsilon^2 / (4c)$, which adds $O(\epsilon^{-4})$ balls with radius at least $r_m \cdot \epsilon/(2c)$, to the set $\cb_i$.
	\par In the second and the third cases above, we also add the heavy ball $B_h$ to the set $\cb_i$, if it is not added already.
\end{proof}

We note that \Cref{lemma:euclidean-preproc}, and \Cref{lemma:euclidean-selection} can be modified to work in $\real^d$. In this case, the increase in the cost of solution become $O(\epsilon^{-d}\log (1/\epsilon))$, and $O(\epsilon^{-3d})$, respectively
% \footnote{Using a slightly involved scheme, and the ideas from the equal radii case, one can improve the preprocessing cost increase to $O(\epsilon^{-d} \log(1/\epsilon))$.}
(where the constants inside the Big-Oh may depend exponentially on the dimension $d$). If the radii of all balls are equal, then we can improve both the bounds to $O(\epsilon^{-d})$, since grids of granularity $O(\epsilon^{-1})$ suffice. Therefore, with suitable modifications to \Cref{lemma:euclidean-preproc}, the analysis of the Cluster Formation stage from the MMCC algorithm, and \Cref{lemma:euclidean-selection}, \Cref{thm:emcc} follows.

\section{Variants of the MMCC Problem}
In this section, we consider two variants of the MMCC problem -- the version where all capacities are equal and the soft capacitated version.

% \appendix
% \section{Appendix}

% \subsection{A \texorpdfstring{($4, 3$)}{(4, 3)}-approximation for the {soft} capacitated version of MMCC}\label{ap:soft}
% 
% We remind the reader that in this variant, we are allowed to open
% multiple identical copies of the given ball at the same location, and each such ball has a capacity same as that of the original ball. However, we need to pay a cost of $1$ for each copy.
% The LP corresponding to the soft capacitated version, is same as \ref{LP},
% except that~\Cref{constr:fractional_y} is  relaxed to simply $y_i
% \ge 0$. We
% solve this LP, and obtain an optimal solution $(x^*, y^*)$. Then, using the
% procedure from~\Cref{lemma:preproc}, we can ensure that the flow that each point
% receives from the set of \emph{non-light} balls $(\cb \setminus \cl)$ is at least
% $1-\alpha$. Then, opening $\frac{1}{1-\alpha}\lceil y_i \rceil$ identical copies of
% each non-light ball $B_i$ ensures that at least one demand of each point is satisfied exclusively by these balls. We now expand each of the opened balls by a factor of $3$. As $y_i \geq \alpha$ for each non-light ball $B_i$, choosing $\alpha = \frac{1}{2}$ yields a simple
% $4$-approximation for this version, where
% each ball is expanded by a factor of at most $3$.
% 
% \subsection{The Algorithmic Framework}
\subsection{Metric Capacitated Covering Problem}\label{ap:lemma3}
\begin{restatable}{lem}{expansionuniform}
\label{uniform-expansion}
If the capacities of all balls are equal, then there exists an alternative procedure to the \emph{Selection of Balls} stage of the algorithm for MMCC, that guarantees that any ball is expanded by at most a factor of $6.47$.
\end{restatable}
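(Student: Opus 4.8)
The plan is to replace only the \emph{Selection of Objects} stage, leaving the Preprocessing and Cluster Formation stages (and hence the cardinality bound of \Cref{lemma:factor}) untouched, and to exploit the fact that when every ball has the same capacity $U$, any single ball can absorb an entire cluster's flow. Recall the situation after Cluster Formation: each cluster is centred at a heavy ball $B_i$ whose served points lie within $\rho_i \le 3r_i$ of $c_i$ (property 3 of \Cref{lemma:preproc}), the clustered light balls $B_t$ intersect the expanded heavy ball so that $d(c_i,c_t)\le \rho_i + r_t$, and all of the cluster's flow, which is at most $U$, is currently routed through $B_i$. In the general algorithm we are forced to keep a \emph{largest} ball of the cluster so that monotonicity certifies its capacity; when the (effective) heavy ball is largest this costs the factor $9 = 3\rho_i/r_i$. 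With equal capacities this constraint evaporates: every ball has capacity $U$ and can receive the rerouted flow, so we may keep whichever ball minimises the expansion.

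First I would introduce a threshold $\theta\in(0,1)$ and, for each cluster, let $r_{\max}$ be the largest radius among its clustered light balls; the case $r_{\max}>\rho_i$ is already covered by the original factor-$5$ argument, so assume $r_{\max}\le\rho_i$ (worst case $\rho_i = 3r_i$). If $r_{\max}\le\theta\rho_i$, I would keep the heavy ball $B_i$: every cluster point lies within $\rho_i + 2r_{\max}\le(1+2\theta)\rho_i$ of $c_i$, so expanding $B_i$ by $3(1+2\theta)$ (relative to $r_i=\rho_i/3$) suffices. Otherwise $r_{\max}>\theta\rho_i$, and I would keep the largest light ball $B_{\max}$: by the triangle inequality the heavy ball's served points lie within $2\rho_i + r_{\max}$ of $c_{\max}$, while any other clustered light ball $B_t$ (with $r_t\le r_{\max}$) contributes points within $d(c_{\max},c_i)+d(c_i,c_t)+r_t \le 2\rho_i + r_{\max}+2r_t \le 2\rho_i+3r_{\max}$ of $c_{\max}$, so expanding $B_{\max}$ by $3 + 2\rho_i/r_{\max} < 3 + 2/\theta$ covers the cluster. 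In both cases I would reroute all cluster flow onto the kept ball, feasible precisely because its capacity is $U$ and the total flow is at most $U$.

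It remains to balance the two bounds. The first case gives $3(1+2\theta)=3+6\theta$ and the second gives $3+2/\theta$; equating $3+6\theta = 3+2/\theta$ yields $\theta^2 = 1/3$, i.e.\ $\theta = 1/\sqrt{3}$, for which both expansion factors equal $3 + 2\sqrt{3} < 6.47$. Since exactly one ball is kept per cluster, the counting in \Cref{lemma:factor} is unaffected, and the overall expansion is at most $6.47$.

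I expect the main obstacle to be the covering verification in the second case: the kept light ball $B_{\max}$ sits off-centre, yet must still reach both the heavy ball's far-served points (up to $\rho_i$ from $c_i$ on the opposite side) and the points of every other clustered light ball, all within the promised factor; this triangle-inequality bookkeeping is exactly what dictates the optimal $\theta$. The equal-capacity hypothesis enters only to guarantee that the kept ball, which is no longer required to be a largest (hence highest-capacity) ball of the cluster, can nevertheless absorb the cluster's entire flow of at most $U$, which is precisely the freedom the general monotonic algorithm lacks.
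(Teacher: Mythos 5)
Your proposal is correct and takes essentially the same route as the paper's proof: the paper also keeps the heavy ball when the largest clustered light ball has radius below a $1/\sqrt{3}$ fraction of the (already expanded) heavy radius and otherwise keeps a largest light ball, balancing the two triangle-inequality bounds to obtain $3+2\sqrt{3}<6.47$ in every case. The equal-capacity hypothesis is used exactly as you use it, namely to guarantee that the single kept ball can absorb the at most $U$ units of flow currently routed through the heavy ball.
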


\begin{proof}
If the capacities of all balls are equal (say $U$), then we proceed in the same way until the \emph{Selection of Balls} stage. Then, we use the following scheme that guarantees a smaller expansion factor for this special case. We first describe the scheme and then analyze it.

Fix a cluster obtained after the \emph{Cluster Formation} stage. If the cluster contains only a heavy ball, then we add it to a set $\mathcal{C}$ (initialized to $\emptyset$), without expansion. 

Otherwise, let $r_l$ denote the radius of a largest ball in the cluster, and let $r_h$ be the radius of the heavy ball. Let $B_l$ and $B_h$ be the corresponding balls. We consider the following $3$ cases:
\\\- \quad $r_l \ge r_h$: In this case, let $B = B_l$. We set its new radius to be $3 r_l + 2 r_h$.
\\\- \quad $\frac{1}{\sqrt{3}} \le r_l < r_h$: Let $B = B_l$. We set its new radius to be $3r_l + 2 r_h$.
\\\- \quad $r_l < \frac{1}{\sqrt{3}} r_h$: Let $B = B_h$. We set its new radius to be $r_h + 2r_l$.

Finally, if $B \neq B_h$, then we reroute the flow from $B_h$ to $B$, set $y_h \gets 0$, and add $B$ to the set $\mathcal{C}$ respectively. Finally, we set $y_i \gets 1$ for all balls $B_i \in \mathcal{C}$, and return $\mathcal{C}$ as the solution.

To analyze the scheme, note that a heavy ball at the end of \emph{Cluster Formation} stage may have been a light ball that was expanded by a factor of $3$ in the preprocessing step. Therefore, if a cluster contains only a heavy ball, then the total expansion factor is at most $3$. Otherwise, we analyze each of the $3$ cases discussed above separately.
\\\- \quad In the first case, $3r_l + 2r_h \le 5 r_l$.
\\\- \quad In the second case, $3r_l + 2r_h \le (3 + 2 \cdot \sqrt{3}) r_l < 6.47 r_l$.
\\\- \quad In the third case, $r_h + 2r_l \le (1 + 2/\sqrt{3}) r_h$. But $B_h$ might be originally a light ball that was expanded by a factor of $3$ in the preprocessing step. Therefore, the total expansion factor is at most $3 + 2 \cdot \sqrt{3} < 6.47$.
\end{proof}

\subsection{Soft capacitated version of MMCC}\label{ap:soft}

We remind the reader that in this variant, we are allowed to open
multiple identical copies of the given ball at the same location, and each such ball has a capacity same as that of the original ball. However, we need to pay a cost of $1$ for each copy.
The LP corresponding to the soft capacitated version, is the same as \ref{LP},
except that~\Cref{constr:fractional_y} is  relaxed to simply $y_i
\ge 0$. We
solve this LP, and obtain an optimal solution $(x^*, y^*)$. Then, using the
procedure from~\Cref{lemma:preproc}, we can ensure that the flow that each point
receives from the set of \emph{non-light} balls $(\cb \setminus \cl)$ is at least
$1-\alpha$. Then, opening $\frac{1}{1-\alpha}\lceil y_i \rceil$ identical copies of
each non-light ball $B_i$ ensures that at least one demand of each point is satisfied exclusively by these balls. We now expand each of the opened balls by a factor of $3$. As $y_i \geq \alpha$ for each non-light ball $B_i$, choosing $\alpha = \frac{1}{2}$ yields a simple
$4$-approximation for this version, where
each ball is expanded by a factor of at most $3$.

\section{Hardness of Approximation}\label{ap:apx}

\subsection{Hardness of Metric Monotonic Capacitated Covering}

In this section, we consider the Metric Monotonic Capacitated Covering (MMCC) problem, and show that for any constant $c \ge 1$, there exists a constant $\epsilon_c > 0$ such that it is NP-hard to obtain a $(1+\epsilon_c, c)$-approximation for the MMCC problem. Contrast this result with the result that follows from the reduction described in the Introduction and states that it is NP-hard to obtain a $(o(\log n), c)$-approximation for the MMCC for $1 \le c < 3$ -- the following construction shows that even if we relax the expansion requirement above $3$, it is not possible to obtain a PTAS for this problem. To show this, we use a gap-preserving reduction from (a version of) the 3-Dimensional Matching problem.

Consider the Maximum Bounded 3-Dimensional Matching ($3$DM-$3$) problem (defined in \cite{Patrank94}). In this problem, we are given $3$ disjoint sets of elements $X, Y, Z$, with $|X| = |Y| = |Z| = N$, and a set of ``triples'' $T \subseteq X \times Y \times Z$, such that each element $w \in W := X \cup Y \cup Z$ appears in exactly $1$, $2$ or $3$ triples of $T$. A triple $t = (x, y, z) \in T$ is said to cover $x \in X, y \in Y, z \in Z$. The goal is to find a maximum cardinality subset $M \subseteq T$ of triples that does not agree in any coordinate. Here, the elements $U \subseteq W$ that are covered by the triples in $M$ are said to be the \emph{matched} elements. If $W = U$, then the corresponding $M$ is said to be a \emph{perfect matching}. We have the following result for the $3$DM-$3$ problem from Petrank \cite{Patrank94}.

\begin{lem}[Restatement of Theorem 4.4 from \cite{Patrank94}]
\label{lem:3dm-apx}
There exists a constant $0 < \beta < 1$, such that it is $\mathsf{NP}$-hard to distinguish between the instances of the $3$DM-$3$ problem in which a \emph{perfect matching} exists, from the instances in which at most $3 \beta N$ elements are matched.
\end{lem}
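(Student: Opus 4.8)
The statement is a restatement of a result of Petrank, so the plan is to reconstruct the gap-preserving reduction that underlies it. The natural starting point is a bounded-occurrence, \emph{gap-at-one} version of satisfiability: by the PCP theorem together with the standard expander-based degree-reduction, there is a constant $\delta > 0$ such that it is $\mathsf{NP}$-hard to distinguish satisfiable instances of $3$SAT in which every variable occurs in at most a fixed constant number of clauses, from instances in which every assignment leaves at least a $\delta$-fraction of the clauses unsatisfied. The ``gap at one'' location (perfect vs.\ $(1-\delta)$-satisfiable) is exactly what is needed, since the target statement contrasts the existence of a \emph{perfect matching} against instances in which a constant fraction of the $3N$ elements cannot be matched.

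First I would set up the classical variable/clause gadgetry reducing $3$SAT to $3$-Dimensional Matching, but tuned so that every element participates in at most three triples. Each variable $x$ occurring in $k$ clauses is encoded by a ``truth-setting'' cycle on $O(k)$ internal elements that admits exactly two perfect local matchings, corresponding to $x=\text{true}$ and $x=\text{false}$; the two matchings expose disjoint sets of ``tip'' elements that are handed to the clause gadgets. Each clause is encoded by a core element together with one triple per literal, where the triple for a literal consumes a tip element that is free precisely when the literal is set so as to satisfy the clause. A bounded collection of ``garbage-collection'' elements absorbs the tips left unused under a consistent assignment. Because every variable occurs a bounded number of times and each gadget has bounded local degree, one checks by inspection that no element lies in more than three triples, so the produced instance is a legal $3$DM-$3$ instance.

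Next I would verify the two directions of the gap. For completeness, a satisfying assignment induces a perfect matching: set each variable gadget to the matching dictated by its truth value, cover each clause core through a satisfying literal's triple, and let the garbage elements mop up the remaining tips, giving $W = U$. For soundness I would argue the gap is preserved: given an arbitrary matching $M$, first \emph{repair} it locally so that each variable gadget is set consistently (true or false), unmatching only $O(1)$ elements per repaired gadget; the repaired matching defines an assignment, and every clause whose core is matched must then be satisfied by that assignment. Consequently, if the $3$SAT instance is $\delta$-far from satisfiable, a $\delta$-fraction of the clause cores must go unmatched, and since each clause and variable gadget has bounded size, this forces a constant fraction of all $3N$ elements to remain unmatched. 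Choosing $\beta < 1$ so that $3\beta N$ lies below this unmatched threshold yields the claimed hardness.

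The main obstacle is the soundness analysis, and specifically the local repair step: the variable gadgets must be designed so that converting an arbitrary (possibly ``cheating'') partial matching into a globally consistent one costs only a bounded number of unmatched elements \emph{per gadget}, and so that the per-clause loss is \emph{additive} rather than merely multiplicative. This additivity is precisely what makes the reduction gap-preserving, and it is the reason the bounded-occurrence hypothesis on the $3$SAT instance (and hence the cap of three triples per element) is indispensable: without a bound on variable degree the additive gap would be diluted by a super-linear blowup in instance size, and the hardness of distinguishing perfect matchings from $3\beta N$-matchings would be lost.
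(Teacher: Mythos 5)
The paper does not prove this lemma at all: it is imported verbatim as Theorem 4.4 of Petrank's ``gap location'' paper and used as a black box, so there is no in-paper argument to compare yours against. Your sketch does follow the strategy that actually underlies Petrank's theorem --- a gap-preserving reduction from a bounded-occurrence satisfiability problem with gap at location $1$, built on the Garey--Johnson variable-cycle and clause gadgets, with soundness proved by locally repairing an arbitrary matching into a consistent assignment at an additive per-gadget cost. At that level of granularity the plan is sound and correctly identifies why bounded occurrence is essential for gap preservation.

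As a proof, however, it has a concrete gap in exactly the place where the $3$-occurrence bound bites hardest: garbage collection. In the classical reduction the garbage elements each form a triple with \emph{every} unused tip pair, so they occur in $\Theta(N)$ triples; that construction is flatly incompatible with the requirement that every element lie in at most $3$ triples, and you cannot wave it away with ``a bounded collection of garbage-collection elements absorbs the tips.'' Making the completeness direction produce a genuinely \emph{perfect} matching (which is what the lemma's gap location demands) requires either a degree-bounded garbage mechanism or a redesign of the tips so that exactly the right number are consumed under any satisfying assignment, together with a verification that the three coordinate classes have equal cardinality $N$. Your soundness ``repair'' step is also only asserted; the variable gadget must be exhibited explicitly enough to check that an inconsistent local matching can be converted to a consistent one while unmatching only $O(1)$ elements. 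Without these pieces the argument is a correct proof plan rather than a proof, which is presumably why the paper cites Petrank rather than reproving the theorem.
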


\paragraph*{Reduction from $3$DM-$3$ to MMCC}
Given an instance $I$ of $3$DM-$3$ problem, we show how to reduce it to an instance $I'$ of the MMCC problem. Recall that in the version of the MMCC problem, we are allowed to expand the balls in the input by a constant factor $c \ge 1$. 

First, we show how to construct the metric space $(P \cup C, d)$ for the MMCC instance $I'$, that is induced by the shortest path metric on the following graph $G = (P \cup C, E)$. Recall that $C$ is the set of centers, and $P$ is the set of points that need to be covered by the balls centered at centers in $C$. Before describing this graph, we construct some objects that will be useful in the description.

\begin{figure}
\includegraphics[scale=0.6]{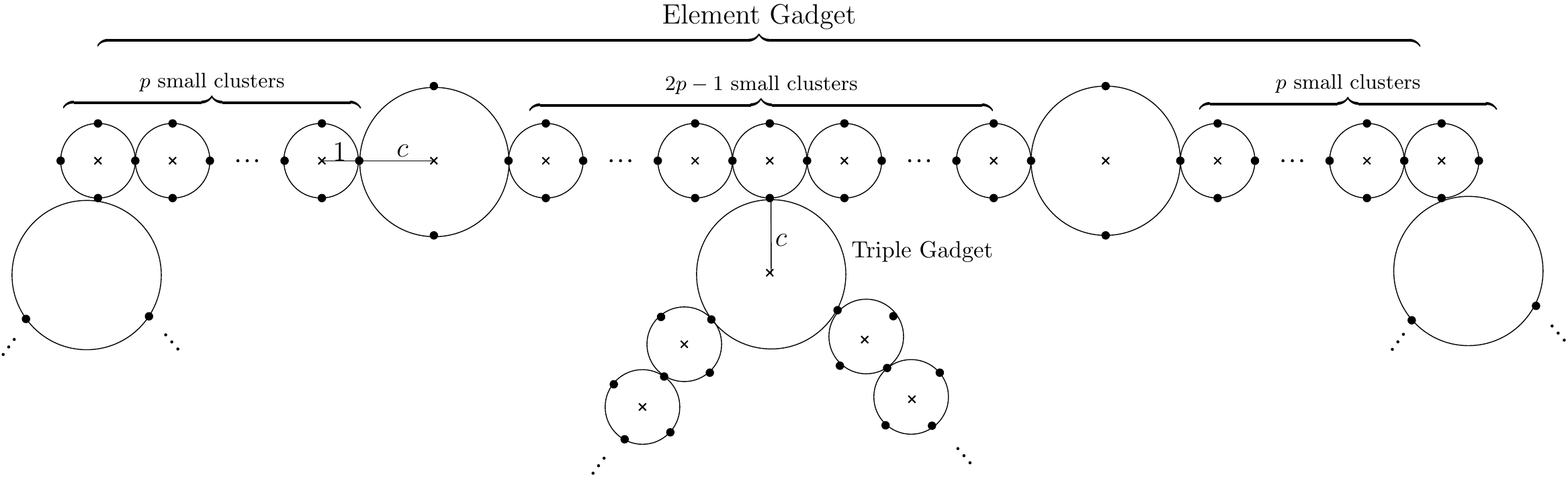}
\caption{The element $x$ belongs to three triples, whose gadgets are attached to the element gadget of $x$ at three locations. The triple gadgets are also attached to other element gadgets. Note that each ball contains $4$ points, but has the capacity of only $3$.}
\end{figure}

\par Consider a vertex $c_{1} \in C$, that is connected to $4$ other vertices $p_{1}, \cdots, p_{4} \in P$ (for convenience we refer to them as \emph{left, right, top, bottom} vertices respectively) by an edge of weight $1$. We also add a ball of radius $1$ at the center $c_1$. For convenience, we refer to this object (the $5$ vertices and the ball) as a \emph{small cluster}, and the ball as a \emph{small ball}. Similarly, if we the radius and the edge weights of the ball are $c$, then we refer to such an object as a \emph{large cluster}, and the ball as a \emph{large ball}.

\par Now, consider $p = \left\lceil\frac{c(c+1)}{2}\right\rceil + 1$ copies of \emph{small clusters}, numbered $\kappa_{1}, \cdots, \kappa_p$. For each $1 < i \le p$, we ``glue'' small clusters $\kappa_{i-1}$ and $\kappa_i$, by setting the \emph{right} vertex of $\kappa_{i-1}$ equal to the \emph{left} vertex of $\kappa_i$. This forms an object in which two consecutive clusters share exactly one vertex. We refer to this object as a \emph{small chain}. For a particular small chain, we refer to its $\kappa_1$ as its \emph{leftmost} small cluster, and to $\kappa_p$ as its \emph{rightmost} small cluster. Now, consider a \emph{big cluster} that is ``glued to'' two small chains on two sides. That is, the \emph{left} vertex of the big cluster is the same as the \emph{right} vertex of the $\kappa_p$ of a small chain (the \emph{left half}), and the \emph{right} vertex of the big cluster is the same as the \emph{left} vertex of the $\kappa_1$ of another small chain (the \emph{right half}). We call this object (which contains $2p$ small clusters and balls, and $1$ large cluster and ball) a \emph{large chain}. 
%In this structure, we refer to $\kappa_1$ of the left chain as the leftmost ball, and the $\kappa_p$ of the right chain as the rightmost ball of this large chain.

We now describe the element gadget. In an element gadget we consider two large chains $ch_1, ch_2$ that are glued together, such that the they share a common small cluster. That is, the rightmost small cluster of the right half of $ch_1$ is the same as the leftmost small cluster of the left half of $ch_2$. Denote this common cluster by $\kappa$. The respective $3$ bottom points of a) The cluster $\kappa$, b) The leftmost small cluster of the left half of $ch_1$, and c) The rightmost small cluster of the right half of $ch_2$ are referred to as \emph{ideal points}. Note that each element gadget contains $4p+1$ balls and $3(4p+1) + 1$ vertices. 

\iffalse
Now we are finally ready to describe an element gadget. Note that there are $3$ cases for an element $w \in W$ -- it's covered by exactly $1$, $2$, or $3$ triples of $T$. We handle each case separately. If $w$ is covered by exactly 
\begin{itemize}
\item $1$ triple: In this case, $\tau_w$ is simply a small cluster, and the corresponding ball. Here, call the \emph{bottom} point of the small cluster as an \emph{ideal point}.
\\This gadget contains $1$ ball and $4$ elements, of which $1$ is an ideal element. 

\item $2$ triples: In this case, $\tau_w$ is a large chain, and the corresponding balls. Here, the \emph{bottom} points of $\kappa_1$ of left half, and $\kappa_p$ of the right half are called \emph{ideal points}.
\\This gadget contains $2p+1$ balls and $3(2p+1) + 1$ elements, of which $2$ are ideal elements . 

\item $3$ triples: In this case, $\tau_w$ is two copies of a large chain ``glued'' together, such that $\kappa_p$ of the right half of the first large chain is the same as the $\kappa_1$ of the left half of another large chain. The bottom point of this common cluster, that of the $\kappa_1$ of the left half of the first chain, and that of the $\kappa_p$ of the right half of the second chain are called \emph{ideal points}.
\\This gadget contains $4p+1$ balls and $3(4p+1) + 1$ elements, of which $3$ are ideal elements.
\end{itemize}
\fi

For each element $w \in W$, we add an element gadget. Now we describe the triple gadget. This gadget is similar to a \emph{large cluster}, the only difference is that in addition to the central point $c_1$, it contains only $3$ other points $p_1, p_2, p_3$. We add a triple gadget for each triple $t = (x, y, z) \in T$. Now, for each such triple $t = (x, y, z)$, we identify $p_1, p_2, p_3$ with one of the ideal points from the gadgets of $x, y, z$ respectively. Here, we ensure that if an element is contained in multiple triples, then a different ideal point is assigned to each triple. Finally, we set the capacity of each ball to be $3$.  The total number of balls in all the element gadgets is $\mathfrak{B} := 3N \cdot (4p+1)$. We refer to these balls as \emph{element balls}. Similarly, the total number of balls in all the triple gadgets is $|T|$, which we refer to as \emph{triple balls}. As mentioned above, the metric is induced by the graph $G = (P \cup C, E)$ as described above.

This completes the description of the instance $I'$ of the MMCC problem. It is easy to see that the instance $I'$ has not only monotonic capacities, but even uniform capacities. It is also worth highlighting that there are only two distinct radii in the instance $I'$. We are able to show that such a restricted version of the MMCC problem remains APX hard, even when we are allowed to expand the balls by a constant $c \ge 1$.

\begin{claim}
\label{cl:mmcc-preproc}
Consider the instance $I'$ of MMCC problem, that is obtained from an instance $I$ of $3$DM-$3$, using the above reduction. We can always convert  a solution to $I'$, where a selected ball may be expanded by a factor at most $c \ge 1$, to another feasible solution in polynomial time, where all element balls in the element gadgets are selected. Furthermore, we can ensure that in this assignment, every selected ball serves a point that is contained in it, (without any expansion). 
\end{claim}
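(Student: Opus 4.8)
The plan is to prove the statement in two movements. First I would reduce any feasible (possibly expanded) solution to one in which every served point lies inside its serving ball \emph{without} expansion — call this a \emph{containment assignment}. Then I would observe that such a canonical assignment is automatically forced to select every element ball. The only non-combinatorial ingredient is a reachability estimate that exploits the chosen chain length $p=\lceil c(c+1)/2\rceil+1$.

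The first thing to record is which (ball, point) pairs are even possible under a $c$-factor expansion. A small element ball has radius $1$, so after expansion it reaches distance $c$; since consecutive cluster centres along a chain are at distance $2$, such a ball can only reach the points of clusters within $\lfloor(c-1)/2\rfloor$ steps along its own chain, and in particular it cannot cross a large cluster into a neighbouring small chain or into another gadget. A large element ball has radius $c$, hence reaches distance $c^2$ after expansion; the length $p$ is chosen precisely so that the far \emph{ideal} point sitting at the end of either small chain glued to a large cluster lies at distance greater than $c^2$ from that large centre (using $2p \ge c(c+1)+2$). Consequently a large ball cannot reach these ideal points, and triple balls are the only remaining objects containing an ideal point without expansion. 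This yields the reachability dichotomy that drives everything: the server of any ideal point is, without expansion, either its own small ball or a triple ball identified with it; and the server of the private \emph{top} point of any cluster is, without expansion, only that cluster's own ball.

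Next I would canonicalise the assignment by modelling the containment relation as a bipartite graph between $P$ and the selected balls and seeking an integral assignment saturating every point with load at most $U_i=3$ per ball. The governing counting fact is that the element balls, with total capacity $3\mathfrak{B}$, exceed the number of \emph{non-ideal} points by only an $O(N)$ slack; by a Hall-type (equivalently, max-flow) feasibility argument, every non-ideal point can be assigned to one of the (at most two) element balls containing it, orienting the shared chain points consistently along each chain and absorbing the per-cluster deficit (capacity $3$ versus four contained points) at the chain boundaries. Each ideal point that is identified with some triple is then assigned to such a triple ball, adding one to the solution if none is already selected (one always exists, since every element lies in at least one triple); each ideal point not attached to any triple is assigned to its own small element ball, which contains it and has room within the $O(N)$ slack. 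After this rerouting no ball serves a point outside its unexpanded radius, giving property~(ii).

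Finally, since the resulting assignment uses no expansion and every cluster owns a private top point contained only in that cluster's ball, each element ball must serve its own top point and is therefore selected, giving property~(i); the whole construction is clearly polynomial time, and an integral assignment can be extracted since capacities are integral. I expect the main obstacle to be the middle step: establishing that the containment assignment is simultaneously feasible for the element balls under the nearly tight capacity count and for the ideal points via triple balls. Concretely, one must verify Hall's condition (or construct the flow) globally across a gadget while respecting the $3\mathfrak{B}$ budget and the orientation of the shared points, and one must invoke the reachability estimate — in particular that large balls cannot reach the far ideal points — to guarantee that no ideal point is left without an eligible containing, selected ball. The remaining checks are routine.
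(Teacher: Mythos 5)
Your reachability analysis (small balls cannot cross a large cluster; large balls cannot reach the ideal points because $2p>c(c-1)$; hence an ideal point is contained, unexpanded, only in its own small ball and one triple ball) is sound, and your path/Hall argument does show that once \emph{all} element balls plus a suitable set of triple balls are selected, a feasible containment assignment exists, with each element ball forced to be open by its private top point. But there is a genuine gap: \Cref{cl:mmcc-preproc} is invoked in \Cref{lem:3dm-mmcc} to produce a solution \emph{of at most the same size} as the given one, and your construction never establishes this. You open every previously unselected element ball, and you even propose to add a triple ball ``if none is already selected''; nothing in the proposal pays for these additions by closing balls of the given solution. A pure capacity count cannot rescue this: summing the per-gadget deficits only yields a lower bound of $\mathfrak{B}+N$ on the cost of the given solution, whereas \Cref{cor:imperfect-matching} needs the bound $\mathfrak{B}+K$ with $K$ (the minimum cover size) possibly as large as about $\tfrac{3}{2}N$. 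Without cost preservation the statement you prove is essentially vacuous (one could simply select everything).

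The paper closes exactly this hole with a local exchange argument, which is the idea your write-up is missing. If an element ball is unselected in the given (expanded) solution, it must be a small ball: the top and bottom points of each large cluster are unreachable by any other ball even after $c$-fold expansion, so every large ball is selected and retains at most one unit of spare capacity. After reassigning capacities, three of the four points of an unselected small ball are served by no other element ball; and because the large ball sitting in the middle of each large chain blocks any expanded triple ball from reaching past it, those three points must all be served by a \emph{single} triple ball. That triple ball has capacity exactly $3$, so it serves nothing else and can be swapped out for the missing small ball one-for-one, at no increase in cost. Iterating this swap yields a same-size solution containing all element balls, after which a containment assignment of the kind you describe can be read off. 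You would need to add this exchange step (and the accompanying analysis of what an \emph{expanded} triple ball can and cannot reach inside a gadget) for the argument to support the use made of the claim.
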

\begin{proof}
First, notice that in each element gadget, the number of points is equal to the total capacity of all the balls in the element gadget, plus $1$. 

Now, consider a solution that does not satisfy the required property. Initially, we discard any balls in the solution if they do not use any of their capacity. Now, consider an element gadget from which a ball is not chosen in the solution. Note that this unchosen ball cannot be the central \emph{large ball} of any of its chains, because no other ball can cover the top and bottom points corresponding to it, even after expanding by a factor of $c$. We can also infer from this that, all such \emph{large balls} have at most $1$ capacity that can be used for serving other points. Without loss of generality, we assume that this remaining capacity is assigned to either left or right point, since one can always find such an assignment.

Now, any unchosen ball has to be a \emph{small ball}. Without loss of generality, we can assume that $3$ of the $4$ points that it contains, are not served by any other small ball, since we can always reassign capacities to ensure that that is not the case. Therefore, the points contained in it must be covered by triple balls, such that the corresponding triples cover the corresponding element. However, note that because of the existence of the large ball at the center of a small chain, two different triple balls cannot serve points from a common small ball. Therefore, all $3$ points must be served by a single triple ball. But any triple ball also has a capacity of $3$, and therefore, we can swap out the triple ball for this small ball, without increasing the cost of the solution. 

By repeating this process, we can include all element balls, by swapping out some of the triple balls if necessary. Now at this point, all element balls, as well as enough triple balls are included in the solution, such that the solution is feasible. Now, we assign the capacities of the selected triple balls to the corresponding \emph{ideal points}. It is easy to see that in each element gadget, at least one ideal point is served by the corresponding triple ball, and therefore, there exists a simple capacity assignment scheme to ensure that each element ball serves only the points contained in it.
\end{proof}

\begin{lem}
\label{lem:3dm-mmcc}
Consider an instance $I$ of $3$DM-$3$, and let $I'$ be the reduced MMCC instance. In the 3DM instance $I$, $M \subseteq T$ is a minimum size set of triples that covers all the elements of $W$, with $|M| = K$, if and only if the minimum cost of a solution to $I'$, wherein the balls may be expanded by up to a factor $c \ge 1$, is $\mathfrak{B} + K$.
\end{lem}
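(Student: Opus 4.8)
The plan is to prove the slightly stronger quantitative statement that the minimum cost of a (possibly $c$-expanded) solution to $I'$ equals $\mathfrak{B} + K^{*}$, where $K^{*}$ is the size of a minimum set of triples covering all of $W$; the claimed ``if and only if'' then follows immediately, since each side of the equivalence reduces to the assertion $K = K^{*}$. I would establish the two inequalities separately.

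For the upper bound, given a minimum cover $M$ with $|M| = K^{*}$, I would select all $\mathfrak{B}$ element balls together with the $K^{*}$ triple balls corresponding to $M$, and exhibit a feasible, non-expanded assignment of cost $\mathfrak{B} + K^{*}$. The key accounting fact, already used in the proof of \Cref{cl:mmcc-preproc}, is that each element gadget contains exactly one more point than the total capacity of its element balls, so the element balls can serve every point of the gadget except a single ideal point. I would route that remaining ideal point to a triple ball of some $t \in M$ containing the corresponding element (such a $t$ exists because $M$ is a cover), noting that a triple ball has capacity $3$ and receives at most one unit of flow from each of the (at most three) element gadgets of its triple, so no capacity is violated. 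Verifying that the element balls can be made to leave \emph{this particular} ideal point uncovered is a within-gadget flow-feasibility check that I would carry out using the chain structure.

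For the lower bound, I would start from an arbitrary solution of cost $C$ and invoke \Cref{cl:mmcc-preproc} to convert it, without increasing the cost, into a canonical solution in which every element ball is selected and every selected ball serves only points it genuinely contains (so expansion plays no role). Writing the canonical solution as the $\mathfrak{B}$ element balls plus a collection $M'$ of triple balls, the same counting forces, in \emph{each} element gadget, at least one ideal point to be served by a triple ball; since an ideal point of the gadget of $w$ is incident only to triple balls of triples containing $w$, the triples underlying $M'$ cover all of $W$. Hence $|M'| \ge K^{*}$ and $C \ge \mathfrak{B} + K^{*}$, giving the matching lower bound and therefore the exact value $\mathfrak{B} + K^{*}$ for the optimum.

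The main obstacle, beyond the bookkeeping, is guaranteeing that the $c$-fold expansion cannot create shortcuts---e.g.\ an expanded triple ball reaching deep into a chain to cover interior spine points and thereby save balls. This is precisely what the chain length $p = \lceil c(c+1)/2 \rceil + 1$ is engineered to prevent, and it is already encapsulated in \Cref{cl:mmcc-preproc}. Consequently the lower-bound argument may work entirely with canonical, unexpanded solutions, and the remaining work is the elementary capacity counting described above.
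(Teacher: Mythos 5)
Your proposal is correct and follows essentially the same route as the paper: the upper bound via selecting all $\mathfrak{B}$ element balls plus the triple balls of a minimum cover, and the lower bound via \Cref{cl:mmcc-preproc} followed by the counting argument that each element gadget forces one ideal point onto a triple ball. Your repackaging of the biconditional as the single equality ``optimum $= \mathfrak{B} + K^{*}$'' is a slightly cleaner organization of the same argument (the paper proves the two directions separately, with some redundancy), and your explicit flag that the within-gadget flow must be checked to leave the \emph{correct} ideal point unserved is a detail the paper itself glosses over.
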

\begin{proof}
We first show that given a minimum size cover $M$ of size $K$, of the elements of $W$, how to select $\mathfrak{B} + K$ balls in the instance $I'$. Firstly, for all triples in $M$, include the corresponding balls in the triple gadget in the solution, which accounts for $K$ balls. Now, for such ball in a triple gadget, assign the capacity to serve the \emph{ideal points} of the corresponding element gadgets. Now, the number of points in an element gadget that are not served is at most the total capacity of each element gadget. Therefore, by selecting all $\mathfrak{B}$ balls, each point can be covered by one of the selected balls that it is contained in. Now we prove that there is no solution with a lesser cost.

\par Consider a minimum cost solution to $I'$, wherein the balls may be expanded by a factor at most $c \ge 1$. Then, we use \Cref{cl:mmcc-preproc} on to obtain another solution of at most the same size, in which all element balls are selected, and no ball is expanded. Now, this solution must have the same size, because the cost of an optimal solution where the balls may not be expanded, is at least the minimum cost of a solution where the balls may be expanded. Therefore, the costs of optimal solutions to the original and the relaxed version of the MMCC instance are equal for the instance $I'$.

In the new solution, each selected triple ball serves some of the ideal points. The number of selected triple balls must be at least $K$, because otherwise the set of selected triple balls corresponds to a cover of the instance $I$ of size less than $K$, which is a contradiction.
\par Now assume that the minimum cost solution to the instance $I'$, where the balls may be expanded by a factor $c \ge 1$, is $\mathfrak{B} + K$. Using the argument from the previous paragraph, we can obtain a cover of size $K$ for the $3$DM-$3$ instance $I$. To show that it is optimal, assume for contradiction, a smaller size solution, and use the argument from the first paragraph to obtain a solution to the MMCC instance $I'$ of size smaller than $\mathfrak{B} + K$, but where the balls are not expanded. Now, this contradicts the optimality of the initial solution, because as argued before, the optimal costs of the strict and relaxed versions of the MMCC problem are equal for the instance $I'$.
\end{proof}
Using this lemma, we obtain the following two corollaries, that show the gap between the instances that have a perfect matching, and those that do not have. 
\begin{cor}
\label{cor:perfect-matching}
If there exists a perfect matching in the $3$DM-$3$ instance $I$, then the corresponding MMCC instance $I'$ has an optimal solution of size exactly $\mathfrak{B} + N$.
\end{cor}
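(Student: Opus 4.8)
The plan is to derive this corollary almost immediately from \Cref{lem:3dm-mmcc}, so the only real work is to identify the minimum size of a triple cover of $W$ in the instance $I$. First I would observe that a \emph{perfect matching} $M \subseteq T$ is in particular a set of triples that covers every element of $W$: by definition the matched set is $U = W$, so each $w \in W$ lies in some triple of $M$. Hence a perfect matching provides a feasible cover, and the minimum cover size $K$ satisfies $K \le |M|$.

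Next I would pin down $|M|$ and show it is in fact the minimum possible. Since every triple $t \in T$ covers exactly three elements (one from each of $X$, $Y$, and $Z$) and a matching agrees in no coordinate, a perfect matching partitions $W$ into triples; as $|W| = |X|+|Y|+|Z| = 3N$, this forces $|M| = N$. For the lower bound, note that any cover must account for all $3N$ elements of $W$, and each chosen triple covers at most three of them, so at least $\lceil 3N/3 \rceil = N$ triples are needed. Combining these, the minimum size of a triple cover of $W$ is exactly $K = N$.

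Finally I would invoke \Cref{lem:3dm-mmcc} with this value of $K$: that lemma states that $M$ is a minimum size cover of $W$ of size $K$ if and only if the minimum cost of a solution to $I'$ (allowing expansion by a factor $c \ge 1$) equals $\mathfrak{B} + K$. Substituting $K = N$ yields that the optimal solution to $I'$ has size exactly $\mathfrak{B} + N$, as claimed.

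I do not expect any genuine obstacle here; the corollary is a direct specialization of \Cref{lem:3dm-mmcc}. The only point requiring a moment's care is the double counting argument establishing that the minimum cover has size \emph{exactly} $N$ (rather than merely at most $N$), which rests on the facts that each triple covers precisely three elements and that $|W| = 3N$; both are immediate from the definition of the $3$DM-$3$ instance.
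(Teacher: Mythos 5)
Your proposal is correct and matches the paper's approach: the paper states this corollary without proof as an immediate specialization of \Cref{lem:3dm-mmcc}, and your argument (a perfect matching is a cover of size $N$, and no cover of the $3N$ elements by triples can be smaller, so $K=N$) supplies exactly the routine justification the paper leaves implicit.
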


\begin{cor}
\label{cor:imperfect-matching}
If in the $3$DM-$3$ instance $I$, the maximum number of elements that can be matched is at most $3 \alpha N$ ($0 < \alpha < N$), then the minimum cost of any solution where the balls may be expanded by a factor $c \ge 1$, in the corresponding MMCC instance $I'$, is at least $\mathfrak{B} + \alpha N + \frac{3(1-\alpha)N}{2} = \left(1 + \frac{1-\alpha}{8 (3p+1)}\right) \cdot (\mathfrak{B} + N)$.
\end{cor}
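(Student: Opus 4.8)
The plan is to reduce the statement to a purely combinatorial lower bound on the size of a triple-cover of $W$, and then to establish that bound by a matching-extraction argument. First I would invoke \Cref{cl:mmcc-preproc} to pass from an arbitrary solution (in which balls may be expanded by a factor $c \ge 1$) to a canonical solution of no larger cost in which every element ball is selected and no ball is expanded. In such a canonical solution all $\mathfrak{B}$ element balls are paid for, so the cost is exactly $\mathfrak{B}$ plus the number of selected triple balls; it therefore suffices to lower-bound the number of triple balls. The key structural observation is that each element gadget contains exactly one more point than the total capacity of its element balls ($3(4p+1)+1$ points against $3(4p+1)$ units of capacity), so in any feasible solution at least one point of each gadget must be served by something other than an element ball. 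Since the only points a triple ball can serve are ideal points, this forces, for every element $w \in W$, at least one ideal point of $w$'s gadget to be served by a triple ball whose triple covers $w$. Consequently the set $S$ of selected triples is a cover of $W$, and the number of triple balls equals $|S|$, which is at least the minimum size of a triple-cover.

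The combinatorial heart is to show that any triple-cover $S$ of $W$ has size at least $\frac{(3-\alpha)N}{2}$ whenever the maximum matching in $I$ matches at most $3\alpha N$ elements. I would extract a maximal matching $M \subseteq S$, i.e. a maximal set of pairwise coordinate-disjoint triples inside the cover. Because $M$ is itself a matching in $I$, it covers exactly $3|M|$ distinct elements, whence $3|M| \le 3\alpha N$ by \Cref{lem:3dm-apx}, giving $|M| \le \alpha N$. By maximality of $M$ within $S$, every triple in $S \setminus M$ shares a coordinate with some triple of $M$, so it covers at most two of the $3N - 3|M|$ elements lying outside the coverage of $M$. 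Since all those elements must be covered by $S \setminus M$, we obtain $2(|S| - |M|) \ge 3N - 3|M|$, which rearranges to $|S| \ge \frac{3N - |M|}{2} \ge \frac{(3-\alpha)N}{2}$. Combining with the previous paragraph shows that the cost of any (possibly expanded) solution is at least $\mathfrak{B} + \frac{(3-\alpha)N}{2} = \mathfrak{B} + \alpha N + \frac{3(1-\alpha)N}{2}$, consistent with \Cref{cor:perfect-matching} at $\alpha = 1$.

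Finally I would verify the displayed identity. Using $\mathfrak{B} = 3N(4p+1)$ one computes $\mathfrak{B} + N = 4N(3p+1)$ and $\alpha N + \frac{3(1-\alpha)N}{2} - N = \frac{(1-\alpha)N}{2}$, so that $\mathfrak{B} + \alpha N + \frac{3(1-\alpha)N}{2} = (\mathfrak{B}+N) + \frac{(1-\alpha)N}{2} = \bigl(1 + \frac{1-\alpha}{8(3p+1)}\bigr)(\mathfrak{B}+N)$. The main obstacle is the combinatorial lower bound: the delicate points are confirming that the extracted maximal matching is a legitimate matching in $I$ (so the $3\alpha N$ bound of \Cref{lem:3dm-apx} applies to it) and that maximality forces each remaining triple of the cover to contribute at most two new covered elements. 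By contrast, the reduction of MMCC cost to a cover size via \Cref{cl:mmcc-preproc} and the per-gadget capacity count is comparatively routine, and the final identity is pure arithmetic.
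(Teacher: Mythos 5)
Your proposal is correct and follows essentially the same route as the paper: both lower-bound the size of any triple-cover by splitting off a maximal matching $M_1$ (of size at most $\alpha N$ by the hypothesis), observing that each remaining triple covers at most two new elements, and then transferring the bound $\frac{(3-\alpha)N}{2}$ to the MMCC cost via \Cref{lem:3dm-mmcc} (which you re-derive from \Cref{cl:mmcc-preproc} rather than cite, but the argument is the same). The arithmetic verification of the displayed identity also matches the paper's.
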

\begin{proof}
In the $3$DM-$3$ instance $I$, the maximum number of elements that can be matched is at most $3\alpha N$, for some $0 < \alpha < 1$. If $M \subseteq T$ is a minimum size set of triples that covers all the elements in $W$, then we first show that $|M| \ge \alpha N + \frac{3(1-\alpha)N}{2}$.

Let $M = M_1 \cup M_2$, where $M_1$ is the maximal set of triples such that each set covers $3$ distinct elements, and $M_2$ is the remaining triples By assumption, the number of matched elements is at most $3\alpha N$, and therefore, $|M_1| \le \alpha N$. The number of elements left to be covered by $M_2$ is $3N - 3|M_1|$, and since each triple in $M_2$ covers at most $2$ new elements each, $|M_2| \ge \frac{3}{2}(N - |M_1|)$. Therefore, since $M_1$ and $M_2$ are disjoint, $|M| = |M_1| + |M_2| \ge  \alpha N + \frac{3(1-\alpha)N}{2}$.

Now, using \Cref{lem:3dm-mmcc}, we conclude that the cost of an optimal solution to $I'$ is at least $\mathfrak{B} + \alpha N + \frac{3(1-\alpha)N}{2}$. It is easy to verify that the previous quantity is exactly equal to $\left(1 + \frac{1-\alpha}{8 (3 p + 1)}\right) \cdot (\mathfrak{B} + N)$, recalling that $\mathfrak{B} = 3N \cdot (4p+1)$.
\end{proof}

Now, from \Cref{lem:3dm-apx}, \Cref{cor:perfect-matching}, and \Cref{cor:imperfect-matching}, we obtain the following APX-hardness result.

\begin{theorem}
For any constant $c \ge 1$, there exists a constant $\epsilon_c > 0$ such that it is NP-hard to obtain a $(1+\epsilon_c, c)$-approximation for the uniform capacitated version of MMCC.\footnote{$\epsilon_c$ depends inversely on $c^2$.}
\end{theorem}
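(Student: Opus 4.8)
The plan is to chain the reduction of the previous subsection with the $3$DM-$3$ hardness of \Cref{lem:3dm-apx} in a gap-preserving fashion, so that any sufficiently good approximation algorithm for MMCC would decide the NP-hard promise problem for $3$DM-$3$. Concretely, I would fix the constant $c \ge 1$ — which in turn fixes $p = \lceil c(c+1)/2 \rceil + 1$ — take an arbitrary instance $I$ of $3$DM-$3$ on a ground set $W$ of size $3N$, and apply the reduction to produce the MMCC instance $I'$ with uniform capacity $3$ and $\mathfrak{B} = 3N(4p+1)$ element balls.

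Next I would separate the two promise cases using the two corollaries. In the completeness (``yes'') case, where $I$ admits a perfect matching, \Cref{cor:perfect-matching} gives that the MMCC optimum of $I'$ (without expansion) is exactly $\mathfrak{B} + N$. In the soundness (``no'') case, where at most $3\beta N$ elements can be matched, I would instantiate \Cref{cor:imperfect-matching} with $\alpha = \beta$: even allowing every selected ball to expand by the factor $c$, the optimum is at least $\left(1 + \frac{1-\beta}{8(3p+1)}\right)(\mathfrak{B} + N)$. Thus the two cases are separated by the multiplicative gap $1 + \frac{1-\beta}{8(3p+1)}$, where $\beta$ is the absolute constant supplied by \Cref{lem:3dm-apx}.

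Setting $\epsilon_c := \frac{1-\beta}{8(3p+1)}$, I would argue as follows. Suppose an algorithm computes a $(1+\epsilon', c)$-approximation with $\epsilon' < \epsilon_c$, where the guarantee is measured against the non-expanding optimum. In the yes case it must return a solution of cost at most $(1+\epsilon')(\mathfrak{B}+N) < (1+\epsilon_c)(\mathfrak{B}+N)$; in the no case every feasible $c$-expanded solution, and hence whatever the algorithm returns, has cost at least $(1+\epsilon_c)(\mathfrak{B}+N)$. Comparing the reported cost against the threshold $(1+\epsilon_c)(\mathfrak{B}+N)$ therefore distinguishes the two promise cases in polynomial time, contradicting \Cref{lem:3dm-apx}. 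This establishes the claimed NP-hardness.

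The one point I would verify with care — and the only nontrivial remaining item, since the combinatorial heavy lifting is already discharged by \Cref{cl:mmcc-preproc} and \Cref{lem:3dm-mmcc} — is the dependence asserted in the footnote. Because $p = \lceil c(c+1)/2 \rceil + 1 = \Theta(c^2)$, the denominator $8(3p+1) = \Theta(c^2)$, so $\epsilon_c = \Theta\!\left((1-\beta)/c^2\right)$ indeed decreases inversely with $c^2$; crucially, for each fixed $c$ it remains a fixed positive constant, which is exactly what the statement demands. No further gap amplification is needed, since \Cref{cor:imperfect-matching} already absorbs the $c$-factor expansion into the soundness bound.
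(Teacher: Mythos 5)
Your proposal is correct and follows essentially the same route as the paper, which derives the theorem directly by combining \Cref{lem:3dm-apx}, \Cref{cor:perfect-matching}, and \Cref{cor:imperfect-matching} to obtain the multiplicative gap $1+\frac{1-\beta}{8(3p+1)}$. Your explicit verification that $p=\Theta(c^2)$ yields the footnote's claimed $\epsilon_c=\Theta(1/c^2)$ dependence is a welcome detail the paper leaves implicit.
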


\subsection{Hardness of Metric Monotonic Capacitated Covering with Weights}\label{ap:logp}
We consider a generalization of the Metric Monotonic Capacitated Covering (MMCC) problem. Like in the MMCC problem, here also we are given a set of balls $\cb$ and a set of points $P$ in a metric space. Each ball has a capacity, and the capacities of the balls are monotonic. Additionally, each ball has a non-negative real number associated with it which denotes its weight. The weight of a subset $\cb'$ of $\cb$ is the sum of the weights of the balls in $\cb'$. The goal is to find a minimum weight subset $\cb'$ of $\cb$ and compute an assignment of the points in $P$ to the balls in $\ball'$ such that the number of
points assigned to a ball is at most its capacity. We refer to this problem as Metric Monotonic Capacitated Covering with Weights (MMCC-W). In the case where all balls have the same radius and the same capacity one can get an $(1,O(1))$-approximation for MMCC-W by using a constant approximation algorithm for the Budgeted Center problem \cite{AnBCGMS15}. However, as we prove, there are instances of MMCC-W that consist of balls of only two distinct radii for which it is NP-hard to obtain a $(o(\log |P|), c)$-approximation for any constant $c$. 

The reduction is from the Set Cover problem. Recall that in Set Cover we are given a set system $(X, \cf)$ with $n = |X|$ elements
and $m = |\cf|$ subsets of $X$. For each element $e_i\in X$, let $m_i$ be the number of sets in $\cf$ that contain $e_i$. Also for each set $X_j\in \cf$, let $n_j$ be the number of elements in $X_j$. Note that $\sum_{i=1}^n m_i=\sum_{j=1}^m n_j$. Given any instance $I$ of Set Cover we construct an instance $I'=(P,\cb)$ of MMCC-W. Let $[t]=\{1,\ldots,t\}$. Fix a constant $c$, which is the factor by which the balls in the solution are allowed to be expanded, and a real $\alpha > 0$. Let $N=\max\{m,n\}$ and $M=c^{1+1/\alpha}N^{2/\alpha}$. $P$ contains $M\cdot m_i$ points corresponding to each element $e_i$. To describe the distances between the points we define a weighted graph $G$ whose vertex set is $P\cup C$, where $C$ is the set of centerpoints of the balls in $\cb$. The graph contains a set $V_i$ of $2M\cdot m_i-1$ vertices corresponding to each element $e_i$. The subgraph of $G$ induced by the vertices of $V_i$ is a path of $2M\cdot m_i-1$ vertices. We denote this path by $\pi_i$. Refer a degree $1$ vertex on $\pi_i$ as its $1^{st}$ vertex, the vertex connected to it as the $2^{nd}$ vertex and in general for $i\ge 2$, the index of the vertex connected to the $i^{th}$ vertex other than the $(i-1)^{th}$ vertex is $i+1$. The odd indexed vertices on this path belong to $P$, and the even indexed vertices belong to $C$. Thus $M\cdot m_i$ (resp. $M\cdot m_i-1$) vertices 	of the path are in $P$ (resp. $C$). The weight of each path edge is set to be $cR/M$, where $R$ is a positive real. Corresponding to each set $X_j \in \cf$, $G$ contains a vertex $u_j$ that belongs to $C$. Now for each $e_i\in X$, consider a one-to-one mapping $f$ from the set $[m_i]$ to the set of $m_i$ subsets of $X$ that contain $e_i$. For $1\le i\le m_i$, we connect the $((i-1)\cdot M+1)^{th}$ vertex of $\pi_i$ to the vertex corresponding to the set $f(i)$ by an edge of weight $R$. Note that for any set $X_j \in \cf$, $u_j$ gets connected to $n_j$ vertices of $G$. This concludes the description of $G$. 

We consider the metric space $(P\cup C,d)$ for $I'$, where $d$ is the shortest path metric on $G$. Now we describe the set of balls in $I'$. For each $X_j\in \cf$, we add the ball $B(u_j,R)$ to $\cb$ and set its capacity to $n_j$. Note that $B(u_j,R)$ contains exactly $n_j$ points of $P$. For each $e_i \in X$, now consider the set of vertices $V_i$. For each point $p$ of $C\cap V_i$, we add the ball $B(p,cR/M)$ to $\cb$ and set its capacity to $1$. We note that $B(p,cR/M)$ contains $2$ points. The balls in $\cb$ have only two distinct radii. It is not hard to see that the capacities of these balls are monotonic w.r.t their radii. We set the weight of each ball $B(p,r)$ to $r^{1+\alpha}$. 

\begin{lem}\label{lem:weighted}
The elements in $X$ can be covered by $k$ sets of $\cf$ iff there is a solution to MMCC-W for the instance $I'$ with weight at most $(k+1)\cdot R^{1+\alpha}$ where the balls in the solution can be expanded by a factor of $c$. 
\end{lem}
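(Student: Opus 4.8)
The plan is to prove the two directions separately: the ``only if'' direction by an explicit construction, and the ``if'' direction by combining a capacity-counting argument (forcing the selected radius-$R$ balls to form a set cover) with a weight-accounting argument (bounding their number). Throughout I would write $S = \sum_j n_j = \sum_i m_i$ for the total incidence count, and record the two facts that the parameter $M = c^{1+1/\alpha}N^{2/\alpha}$ is chosen to guarantee. First, since $M^{\alpha} = c^{1+\alpha}N^{2}$ and $S \le N^{2}$, the total weight of \emph{all} small balls is $(MS-n)(cR/M)^{1+\alpha} \le S\,c^{1+\alpha}R^{1+\alpha}/M^{\alpha} \le R^{1+\alpha}$. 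Second, $M$ is large enough that, even after $c$-expansion, every small ball stays confined to its own path $\pi_i$.

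For the forward direction, given $k$ sets $\mathcal{K}\subseteq\cf$ covering $X$, I would select the $k$ balls $B(u_j,R)$ with $X_j\in\mathcal{K}$ together with \emph{all} small balls; by the weight bound above this costs at most $kR^{1+\alpha}+R^{1+\alpha}=(k+1)R^{1+\alpha}$. For feasibility with \emph{no} expansion, let each selected ball $B(u_j,R)$ serve the $n_j$ connection points it contains, filling its capacity exactly. Since $\mathcal{K}$ is a cover, every path $\pi_i$ has at least one connection point served this way; deleting the $\ge 1$ served points from $\pi_i$ leaves at most $Mm_i-1$ points, which a routine path-matching argument assigns bijectively to the $Mm_i-1$ unit-capacity small balls of $\pi_i$, each serving an adjacent point.

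For the reverse direction, suppose we are given a solution of weight at most $(k+1)R^{1+\alpha}$ with balls possibly expanded by a factor $c$; let $\cf'$ be its selected radius-$R$ balls and $s$ its number of selected small balls. The heart of the argument is that $\cf'$ must be a set cover. On each path $\pi_i$ the only balls that can serve any of its $Mm_i$ points are the $Mm_i-1$ small balls lying on $\pi_i$ (expanded small balls of other paths cannot reach $\pi_i$, since the nearest off-path vertex is at distance $\ge R \gg c^{2}R/M$) and the expanded balls $B(u_j,cR)$ with $e_i\in X_j$ (a ball $B(u_{j'},R)$ with $e_i\notin X_{j'}$ has its center at distance $>cR$ from every point of $\pi_i$). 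As the small balls on $\pi_i$ contribute total capacity only $Mm_i-1<Mm_i$, at least one point of $\pi_i$ must be served by some $B(u_j,cR)$ with $e_i\in X_j$, so $\cf'$ contains a set covering $e_i$; ranging over $i$ shows $\cf'$ covers $X$.

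It then remains to bound $|\cf'|$ by $k$. A total-capacity count gives $s+\sum_{j\in\cf'}n_j\ge MS$, hence $s\ge MS-S=S(M-1)\ge 1$. Consequently, if $|\cf'|\ge k+1$ the weight would be at least $(k+1)R^{1+\alpha}+s(cR/M)^{1+\alpha}>(k+1)R^{1+\alpha}$, a contradiction; thus $|\cf'|\le k$, giving a cover of size at most $k$ (padded to exactly $k$ if desired). I expect the main obstacle to be precisely the metric locality claims underpinning the cover argument: verifying from the shortest-path structure of $G$ and the choice of $M$ that $c$-expanded small balls never leave their own path and that $c$-expanded radius-$R$ balls reach exactly the paths of the elements they contain. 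Once these distance estimates are pinned down, the capacity and weight accounting follow immediately.
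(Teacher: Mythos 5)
Your proposal is correct and follows essentially the same route as the paper's proof: the forward direction selects the $k$ set-balls plus all path balls and uses the identical weight computation $S\,c^{1+\alpha}R^{1+\alpha}/M^{\alpha}\le R^{1+\alpha}$, and the reverse direction combines the same three ingredients (the shortest-path distance estimates confining small balls to their own path and keeping $B(u_{j'},R)$ with $e_i\notin X_{j'}$ at distance $>cR$ from $\pi_i$; the capacity deficit $Mm_i-1<Mm_i$ forcing a set-ball to serve each path; and the fact that at least one positive-weight small ball must be selected, capping the number of set-balls at $k$). The only difference is the order in which the two halves of the reverse direction are presented, which is immaterial.
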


\begin{proof}
Let $X$ can be covered by a collection $\cf'$ of $k$ sets. We construct a feasible solution $\cb'\subseteq \cb$ to MMCC-W whose weight is at most $(k+1)\cdot R^{1+\alpha}$. For each set $X_j \in \cf'$, we add the ball $B(u_j,R)$ to $\cb'$. We assign the $n_j$ points in $B(u_j,R)$ to it. Now for each $e_i\in X$, we add balls in the following manner. Note that at least one point of $V_i\cap P$ has already been assigned to a ball in $\cb'$. Now for each point $p\in C$ on $\pi_i$, we add the ball $B(p,cR/M)$ to $\cb'$. As one point of $V_i\cap P$ is already assigned to a ball in $\cb'$, only $M\cdot m_i-1$ points of $V_i\cap P$ are left for assigning. As we have chosen $M\cdot m_i-1$ balls each of capacity $1$ corresponding to $\pi_i$ one can easily find a valid assignment of these points. Thus $\cb'$ is a feasible solution to MMCC-W. Now the weight of the balls selected w.r.t. the sets is $k\cdot R^{1+\alpha}$. The weight of the balls chosen w.r.t. each path $\pi_i$ is at most $M\cdot m_i(cR/M)^{1+\alpha}$. The total weight of the balls w.r.t. all such paths is at most $$n\cdot M\cdot m\cdot (cR/M)^{1+\alpha}\le R^{1+\alpha}$$
Thus the weight of $\cb'$ is at most $(k+1)\cdot R^{1+\alpha}$.

Now suppose there is a solution $\cb'$ to MMCC-W with weight at most $(k+1)\cdot R^{1+\alpha}$. The total number of points in $P$ is $\sum_{i=1}^n M\cdot m_i > \sum_{j=1}^m n_j$. Now the total capacities of the balls w.r.t. the sets is $\sum_{j=1}^m n_j$. Thus there must be at least one ball in $\cb'$ which is w.r.t. a path. Also the weight of the ball w.r.t. each set is $R^{1+\alpha}$. Thus there must be at most $k$ balls w.r.t. the sets in $\cf$ that are in $\cb'$. We consider the collection $\cf'\subseteq \cf$ of sets corresponding to these balls (at most $k$ in number). We claim that $\cf'$ covers all elements of $X$. Consider any element $e_i\in X$ and the path $\pi_i$ corresponding to it. Note that $\pi_i$ contains $M\cdot m_i$ points of $P$. Consider one such point $p$ and any center $p'$ on the path $\pi_j$ where $i\ne j$. Now distance between the $p'$ and $p$ is at least $2R$ and thus even after $c$ factor expansion the ball $B(p',cR/M)$ cannot contain $p$. Now consider a center point $u_t$ such that $e_i \notin X_t$. Then due to the construction the distance between $p$ and $u_t$ is at least $3R+(cR/M)\cdot 2M > cR$. Thus even after $c$ factor expansion the ball $B(u_t,R)$ cannot contain $p$. Hence $p$ must be assigned to either a ball corresponding to $\pi_i$ or a ball corresponding to a set $X_j$ that contains $e_i$. Now there are only $M\cdot m_i-1$ balls w.r.t. $\pi_i$ in $\cb$ each of whose capacity is $1$ and thus there must be a point $p \in P$ lying on $\pi_i$ that is assigned to a ball corresponding to a set $X_j$. It follows that $e_i \in X_j$. Thus $\cf'$ covers all the points of $X$.
\end{proof}

As Set Cover is NP-hard to approximate within a factor of $o(\log n)$, from Lemma \ref{lem:weighted}, we obtain the following theorem. 

\begin{theorem}
For any constant $c\ge 1$, there exists a constant $c' > 0$, such that it is NP-hard to obtain a $(c'\log |P|, c)$-approximation for MMCC-W. This result holds even for the particular weight function, where the weight of a ball is equal to a constant power of its original radius.
\end{theorem}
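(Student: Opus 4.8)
The plan is to derive the theorem directly from \Cref{lem:weighted}, treating the reduction $I \mapsto I'$ as a gap-preserving reduction and combining it with the inapproximability of Set Cover. Fix the expansion factor $c \ge 1$ and any constant $\alpha > 0$. Then $M = c^{1+1/\alpha} N^{2/\alpha}$ is polynomial in $N = \max\{m,n\}$, so the construction runs in polynomial time and $|P| = \sum_i M m_i$ is polynomially bounded. The first quantitative fact I would record is that \Cref{lem:weighted} pins down the optimal weight $W^\ast$ of $I'$ (with $c$-expansion allowed) in terms of the Set Cover optimum $OPT$ of $I$: applying the ``if'' direction to a cover of size $OPT$ produces an MMCC-W solution of weight at most $(OPT+1)R^{1+\alpha}$, so $W^\ast \le (OPT+1)R^{1+\alpha}$.

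Next I would set up the contradiction. Suppose that, for a constant $c'$ to be fixed below, there is a polynomial-time $(c'\log|P|, c)$-approximation algorithm $\mathcal{A}$ for MMCC-W. Running $\mathcal{A}$ on $I'$ returns a solution $\cb'$, with balls expanded by at most $c$, of weight $W \le c'\log|P|\cdot W^\ast \le c'\log|P|\,(OPT+1)\,R^{1+\alpha}$. Each ball centered at some $u_j$ (a ``set ball'') has weight exactly $R^{1+\alpha}$, so $\cb'$ contains at most $W/R^{1+\alpha} \le c'\log|P|\,(OPT+1)$ set balls. Exactly as in the converse direction of the proof of \Cref{lem:weighted}---whose distance estimates already absorb the $c$-factor expansion---the elements of $\cf$ corresponding to these set balls form a valid cover of $X$, and this cover can be read off in polynomial time. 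Hence $\mathcal{A}$ yields, in polynomial time, a cover of size at most $c'\log|P|\,(OPT+1) \le 2c'\log|P|\cdot OPT$ (using $OPT \ge 1$), i.e. a Set Cover approximation of ratio at most $2c'\log|P|$.

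The remaining step is to convert this into a contradiction with Set Cover hardness, and it is the one point that needs care. Since $|P| \le M N^2 = c^{1+1/\alpha}N^{2+2/\alpha}$ while $|P| \ge M \ge N^{2/\alpha}$, we have $\log|P| = \Theta(\log N) = \Theta(\log n)$; concretely $\log|P| \le D\log n$ for a constant $D = D(\alpha,c)$. Thus the cover produced above has approximation ratio at most $2c'D\log n$. Since there is a constant $\gamma_0 > 0$ for which it is NP-hard to approximate Set Cover within $\gamma_0 \log n$, choosing $c' := \gamma_0/(4D)$ makes the ratio $2c'D\log n = (\gamma_0/2)\log n < \gamma_0\log n$, so $\mathcal{A}$ would give a polynomial-time Set Cover approximation beating the NP-hard threshold---impossible unless $\mathrm{P}=\mathrm{NP}$. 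This establishes the claimed $(c'\log|P|, c)$-hardness, and because the weight of every ball $B(p,r)$ in $I'$ equals $r^{1+\alpha}$, the result holds for the stated weight function (a constant power of the radius). The whole argument is essentially bookkeeping on top of \Cref{lem:weighted}; the only genuine obstacle is the $\log|P| = \Theta(\log n)$ conversion, which is what forces $c'$ to be chosen small relative to both the Set Cover hardness constant $\gamma_0$ and the blow-up constant $D$, and which is also why the hardness must be phrased in terms of $\log|P|$ rather than $\log n$.
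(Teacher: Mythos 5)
Your proposal is correct and follows the same route as the paper, which derives the theorem in one line from \Cref{lem:weighted} together with the $\Omega(\log n)$ hardness of Set Cover; you have simply filled in the routine bookkeeping (counting set balls by weight, extracting the cover via the converse direction of the lemma, and the $\log|P|=\Theta(\log n)$ conversion that dictates the choice of $c'$). No gaps.
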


\end{document}